\newcommand{\dd}[2]{\dfrac{d #1}{d #2}}
\newcommand{\ra}{\rightarrow}
\newcommand{\sL}{\hat{\mathcal{L}}}
\newcommand{\sD}{\hat{\mathcal{D}}}
\newcommand{\sO}{\mathcal{O}}
\newcommand{\sH}{\mathcal{H}}
\newcommand{\sB}{\mathcal{B}}
\newcommand{\sT}{\hat{\mathcal{T}}}
\newcommand{\sP}{\hat{\mathcal{P}}}
\newcommand{\sS}{\hat{\mathcal{S}}}
\newcommand{\R}{\mathbb{R}}
\newcommand{\Z}{\mathbb{Z}}
\newcommand{\ii}{\mathrm{i}}
\newcommand{\id}{\mathcal{I}}
\newcommand{\1}{\mathbb{I}}
\DeclareMathOperator{\Tr}{Tr}
\newcommand{\ave}[1]{\langle #1 \rangle}
\newcommand{\sket}[1]{|#1 \rangle \rangle}
\newcommand{\sbra}[1]{\langle \langle #1 |}
\newcommand{\sinner}[2]{\langle \langle #1 |#2 \rangle \rangle}
\newcommand{\HS}{{\cal H}}
\newcommand{\OS}{\cal{B}(\HS)}
\newcommand{\rank}[1]{{\mathrm {rank}}\{#1\}}
\newcommand{\tr}{{\rm{tr}}}
\def\one{\mathbbm{1}}
\theoremstyle{definition}
\newtheorem{defi}{Definition}
\theoremstyle{theorem}
\newtheorem{thm}{Theorem}
\newtheorem*{thm*}{Theorem}
\newtheorem{cor}{Corollary}
\newtheorem*{cor*}{Corollary}
\newtheorem{lem}{Lemma}
\newtheorem*{lem*}{Lemma}
\begin{document}

\makeatletter
\let\cat@comma@active\@empty
\makeatother

\begin{center}{\Large \textbf{
Algebraic Theory of Quantum Synchronization and Limit Cycles under Dissipation}}\end{center}
\begin{center}
Berislav Bu\v{c}a\textsuperscript{1*},
Cameron Booker\textsuperscript{1},
Dieter Jaksch\textsuperscript{1, 2}
\end{center}

\begin{center}
{\bf 1} Clarendon Laboratory, University of Oxford, Parks Road, Oxford OX1 3PU, United Kingdom

{\bf 2} Institut für Laserphysik, Universität Hamburg, 22761 Hamburg, Germany
\\
* berislav.buca@physics.ox.ac.uk
\end{center}

\section*{Abstract}
{\bf
Synchronization is a phenomenon where interacting particles lock their motion and display non-trivial dynamics. Despite intense efforts studying synchronization in systems without clear classical limits, no comprehensive theory has been found. We develop such a general theory  based on novel necessary and sufficient algebraic criteria for persistently oscillating eigenmodes (limit cycles) of time-independent quantum master equations. We show these eigenmodes must be quantum coherent and give an exact analytical solution for \emph{all} such dynamics in terms of a dynamical symmetry algebra. Using our theory, we study both stable synchronization and metastable/transient synchronization. We use our theory to fully characterise spontaneous synchronization of autonomous systems. Moreover, we give compact algebraic criteria that may be used to prove \emph{absence} of synchronization. We demonstrate synchronization in several systems relevant for various fermionic cold atom experiments.
}

\vspace{10pt}
\noindent\rule{\textwidth}{1pt}
\tableofcontents\thispagestyle{fancy}
\noindent\rule{\textwidth}{1pt}
\vspace{10pt}

\section{Introduction to Quantum Synchronization}\label{sec:int}
Understanding complex dynamics is arguably one of the main goals of all science from biology (e.g. \cite{complex1,complex2}) to economics (e.g. \cite{complex3,complex4}). Synchronization is a remarkable phenomenon where multiple bodies adjust their motion and rhythms to match by mutual interaction. It is one of the most ubiquitous behaviours in nature and can be found in diverse systems ranging from simple coupled pendula, neural oscillations in the human brain  \cite{classicalsynch1}, epidemic disease spreading in the general population, power grids and many others \cite{classicalsynch2}. Understanding this phenomenon has been one of the major successes of dynamical systems theory and deterministic chaos \cite{Boccaletti_Kurths_Osipov_Valladares_Zhou_2002, pecora_carroll_1}.

Intense work in the last decade has extended these non-linear results to the semi-classical domain of certain quantum systems with an infinite or very large local Hilbert space, e.g. quantum van der Pol oscillators, bosons, or large spin-$S$ systems \cite{semiclassical1,semiclassical2,semiclassical3,semiclassical4,semiclassical5,semiclassical6,semiclassical7,semiclassical8,semiclassical9,semiclassical10,semiclassical11,semiclassical12,semiclassical13,semiclassical14,Kollath,Shovan1,dubois2021semiclassical, vaporization,bosonsynchNEW,Piazza,TobiReview,Paolo,PhysRevX.11.031018}. These systems are usually understood successfully through mean-field methods or related procedures that neglect the full quantum correlations. By contrast, strongly interacting systems that have finite-dimensional local Hilbert spaces, such as finite spins, qubits or electrons on a lattice, can most often not be adequately treated with mean-field methods \cite{meanfield} and semi-classical limits cannot be directly formulated for them \cite{Prosen1,Chalker,Zanardi2}. For the sake of brevity, we will call such systems ``quantum'' and the previous ones ``semi-classical''. Very recently, these systems with finite-dimensional local Hilbert spaces have attracted much attention as possible platforms for quantum synchronization \cite{quantumsynch,Roulet_2018,quantumsynch2,quantumsynch3,quantumsynch4,quantumsynch5,quantumsynch6,quantumsynch8,quantumsynch9,quantumsynch10,quantumsynch11,OCA,quantumsynch13,quantumsynchIBM}. Crucially, the subsystems that are to be synchronized in such quantum systems do not have easily accessible $\hbar \to 0$ or equivalent large parameter (semi-classical) limits. 

Quantum synchronization holds promise for technical applications. For instance, synchronizing spins in a quantum magnet would allow for homogeneous and coherent time-dependent magnetic field sources. Developing sources of homogeneous and coherent time-dependent magnetic fields has the significant potential to improve the resolution of MRI images \cite{MRI}. Additionally, recent studies have explored the role of synchronization in the security of quantum key distribution (QKD) protocols \cite{QKD_vulnerability,Liu:19, cryptography1030018}. It is foreseeable that a better understanding of quantum synchronization could help improve security against specialist attacks that exploit the dependence of several QKD schemes on synchronization during calibration. Despite the intense recent study and the promise it holds, a general theory of quantum synchronization has been lacking until now. 

An important theoretical concept in physics is that of symmetries and algebraic structures. Using algebraic methods, it is often possible to analytically understand the dynamics of systems without providing full solutions. This is vitally important because obtaining full solutions is, in general, impossible for the many-body and strongly correlated systems of interest. In this paper, we develop an algebraic theory for quantum synchronization in such systems. This extends the algebraic approach of dynamical symmetries \cite{Buca_2019} which has been successfully used for studying the emergence of long-time dynamics in various quantum many-body systems, such as time crystals \cite{Marko1,Marko2,Booker_2020,guo2020detecting,Chinzei,timecrystal1,timecrystal2,timecrystal3,timecrystal4,Wilczek,Esslinger,Esslinger2,BucaJaksch2019,disTC,Sacha2,Cosme1,Cosme2,Seibold,Sacha2020,Jamir3,Jamir4,stochasticdiscrete,buonaiuto2021dynamical,dissipativeTCobs,liang2020time,timecrystalnew1,OTOcrystal,sarkar2021signatures,DissipativeTCNew2,DissipativeTCnew3,verstraten2020time,DisTC2021A}, quantum scarred models \cite{scars,scars2,scars3,scars4,scars5,scars6,scars7,scars8,scars9,scarsexp,scarsdynsym1,scarsdynsym2,scarsdynsym3,scarsdynsym4,scarsdynsym6,scarsdynsym5,Hosho,Kuno,ScarsProposal,ScarsNew1,JamirScars}, Stark many-body localization \cite{gunawardana2021dynamical}, and models with quantum many-body attractors \cite{Buca_2020}. Our work provides necessary and sufficient conditions for quantum synchronization to occur in terms of a compact set of algebraic criteria that may be checked based on the underlying symmetry structure of the system. We also rigorously demonstrate that if synchronization occurs, it must be due to the presence of quantum coherence. We fully solve these dynamics in terms of an elegant algebraic framework. These results provide a general theory for studying quantum synchronization and characterize the phenomenon. Through this, we have provided a framework for future efforts in the field. 

In the rest of the introduction, we give a general discussion and brief history of the study of synchronization before focusing more closely on the details of quantum synchronization. We finally highlight the key results of this article.

\subsection{An Overview of Synchronization}\label{sec:synchoverview}
\subsubsection{General Synchronization}\label{sec:generalsynch}
Before presenting our theory of quantum synchronization, it is instructive to first discuss synchronization in general, with some historical background, to demonstrate the numerous subtleties surrounding the concept and make it clear what we will be referring to as synchronization. 

The term synchronization, or more specifically synchronous, has its roots in the ancient Greek words `syn-' meaning ``together with'' and `kronous' meaning ``time''. Thus events are described as synchronous if they occur at the same time. The scientific study of synchronous (or synchronized) dynamical systems dates back to 1673 when Huygens studied the motions of two weakly coupled pendula \cite{Huygens_Blackwell_1986}. He observed that pendula clocks hanging from the same bar had matched their frequency and phase exactly. Following this observation, he explored both synchronization and anti-synchronization where the pendula with approximately equal (or opposite for anti-synchronization) initial conditions would, over time, synchronize so that their motions were identical (resp. opposite). Since Huygens, synchronization has been a topic of significant interest in an ever-expanding range of scientific and technological fields. From a theoretical point of view, synchronization is a central issue in the study of dynamical and chaotic systems and is currently an area of very active research. 

Broadly speaking, studies into classical synchronization of multiple, often chaotic, systems can be split into synchronization within driven and autonomous systems. Throughout the remainder of this work, we will consider exclusively synchronization within autonomous systems, often called spontaneous synchronization, as this is the form of synchronization most often studied in quantum systems and is where our theory is applicable. For more details regarding synchronization in driven systems, especially the most commonly studied master-slave systems, we direct the reader to \cite{Luo_2009, luo2013dynamical,  K_1998}. We should also mention the related topic of Synergistics, first introduced by Haken to initially study lasers and fluid instabilities \cite{Haken_1964, Haken_1975a, Haken_1975b}. Synergetics studies how circular causality between microscopic systems and macroscopic order parameters can lead to self-organization within open systems that have been driven far from equilibrium \cite{Haken_overview, SpringerSeriesinSynergetics}. In the years since its inception, the theory has been applied to a wide range of disciplines such as studying human ECG activity, and machine learning \cite{Haken_machinelearning}.

We now focus on spontaneous\footnote{From now on, all synchronization will be spontaneous synchronization in undriven systems.} synchronization, which can be further divided into two main classes, which intuitively capture how the two subsystems are related.
\begin{enumerate}[(i)]
    \item Identical/Complete synchronization 
    
    \emph{For a system comprising of identical coupled subsystems. Corresponding variables in each of the subsystems become equal as the subsystems evolve in time.}
    
    \item Phase synchronization 
    
    \emph{For a system comprising of non-identical coupled subsystems. The phase differences between given variables in the different subsystems lock while the amplitudes remain uncorrelated.}
\end{enumerate}
One important additional constraint, which applies to both classes is that synchronization must be robust to small perturbations in the initial conditions. This is important to exclude many cases of trivially synchronized systems that arise simply through the fine-tuning of initial conditions.

In many ways, identical synchronization is the most fundamental and is what Huygens originally studied with his coupled pendula. It is also what many people conjure to mind when thinking about synchronization. As expected, it has been extensively studied in the past, both theoretically \cite{pecora_carroll_1, Pyragas_1996,Peng_1996, Kapitaniak_1994, Zhan_2003,DeTal_2019}, and for its applications \cite{mosekilde2002chaotic, schafer1999synchronization, wang2008bursting, kadji2008synchronization}. Much of this progress, which has been primarily focused on controlling synchronization within chaotic systems, has built upon the seminal work of Pecora and Carroll \cite{pecora_carroll_1} who formulated criteria based on the signs of Lyapunov exponents. We also note that identical synchronization can be extended to non-identical subsystems by choosing appropriate, often de-dimensionalized, variables or coordinates for the different subsystems.

A useful order parameter for identical synchronization is the Pearson indicator \cite{Galve_Luca_Giorgi_Zambrini_2017} defined for two time-dependent signals, $f(t), g(t)$ as 
\begin{equation} \label{eq:pearson}
    C_{f, g}(t, \Delta t) = \frac{\overline{\delta f \delta g}} {\sqrt{ \overline{ \delta f^2} \ \overline{ \delta g^2}}}
\end{equation}
where $\Delta t$ is some fixed parameter and \begin{equation}
    \overline{X} = \frac{1}{\Delta t} \int_{t}^{t+\Delta t} X(t) dt, \ \ \delta X = X- \overline{X}
\end{equation}
This measures the correlation between the two variables in an intuitive way and has been widely applied to classical \cite{Strogatz_2000, Boccaletti_Kurths_Osipov_Valladares_Zhou_2002} and in some cases quantum \cite{Manzano_Paule_2018, GianLuca_2012, quantumsynch, quantumsynch6} synchronization. We can see that $C_{f,g}$ takes the value $+1$ for `perfect' synchronization and $-1$ for `perfect' anti-synchronization. Notably the Pearson indicator is effective when the signals $f$ and $g$ are not periodic. While synchronization is often understood through periodic systems, such as pendula etc, this is not strictly necessary for identical synchronization. The notion of two variables that become equal over the period of evolution could just as well apply to two particles following unbound trajectories as to two particles that remain within some finite region. We can also easily present examples of quasi-periodic behaviour, that is a superposition of Fourier modes with non-commensurate frequencies. In practice, most systems that we are studying exhibit periodic or at least quasi-periodic behaviour.

In contrast to identical synchronization, phase synchronization generally requires periodic motion to define a relative phase difference between the two subsystems meaningfully. However, extensions can be considered with simply a time delay rather than an actual phase difference. This class of synchronization is weaker than identical synchronization, even under the natural extension to non-identical subsystems, as it requires no direct correlation between the variables describing the subsystems. Following work on cryptography using chaotic maps \cite{Pareek_Patidar_Sud_2005}, phase synchronization was applied as a method for secure communication \cite{Kiani-B_Fallahi_Pariz_Leung_2009, Fallahi_Raoufi_Khoshbin_2008}. Phase synchronization has also been studied in the quantum setting through the use of Arnold tongues, and Hussimi-Q functions \cite{Galve_Luca_Giorgi_Zambrini_2017, Roulet_2018}. These methods are usually aimed at probing the response of the system to driving rather than mutual synchronization between subsystems. We should also mention the interesting related case of amplitude-envelope synchronization in chaotic systems \cite{Gonzalez_2002} where there is no correlation between the amplitude and phases of the motion within the two systems, but instead, they both develop periodic envelopes at a matched frequency. In the remainder of our work, we will focus on studying identical synchronization between subsystems of an extended quantum system, although many of the results presented are just as applicable to phase synchronization.

Before discussing quantum synchronization in more detail, we make a short remark about the language used in the rest of this article. We will use the term synchronized to refer to any pair of time-dependent signals that are equal or sufficiently similar, usually after some transient time. The term synchronization will refer to the non-trivial process of two subsystems becoming synchronized in a way that is stable to perturbations in the initial conditions. In systems with multiple observable quantities within each subsystem, we will say the subsystems are completely synchronized if all observable quantities are synchronized, and at least one is non-constant. Consequently, we will use identical synchronization for the identical/complete synchronization discussed above to avoid confusion. These definitions will be formalized in Section \ref{sec:defiandmethods}.

\subsubsection{Quantum Synchronization}\label{sec:quantumsynch}
Let us now focus more closely on synchronization in many-body quantum systems. 

Synchronization is usually studied in open quantum systems. This is because, in closed systems, a generic initial state will excite a large number of eigenfrequencies with random phase differences, which will oscillate indefinitely and lead to observables that behave like noise. Only in sufficiently large systems does the Eigenstate Thermalisation Hypothesis \cite{ETHReview} predict that these oscillations will rapidly dephase, causing observables to become effectively stationary. As a result, previous studies have focused on the open quantum system regime, where interactions with the environment cause decoherence and decay within the density operator describing the state, leaving only a handful of long-lived oscillating modes. These systems have previously been studied in a case-by-case manner, with various measures of quantum synchronization having been introduced. These measures have been based primarily on phase-locking of correlations, or Husimi Q-functions \cite{quantumsynch,Roulet_2018}. However, such measures are often unscalable to many-body problems.

The generic set-up we will consider is an extended system made up of arbitrarily, but finitely, many individual sites that can interact with each other and with an external bath that we often call the environment. This scenario is depicted in Fig \ref{fig:system1}. The restriction of having only finitely many sites is reasonable because if two subsystems are infinitely far apart, then locality will prevent them from ever being causally connected, thus rendering synchronization impossible. Alternatively, sites infinitely far apart could be redefined as belonging to the environment and thus taken into account in that manner. For our purposes, these individual sites will have finite-dimensional Hilbert spaces and thus have no well defined semi-classical limit. Importantly, these sites will also have identical local Hilbert space dimensions. We will then consider synchronizing the signals of observables measured on different sites or groups of sites of the system. Following our discussion above regarding synchronization in general, we will classify whether the sites are synchronized based on the dynamics of expectation values of these observables on each site over time. In this way, we have captured the fundamental essence of synchronization as discussed above intuitively within the quantum regime.

\begin{figure}[!t]
    \centering
    \includegraphics[width = 0.6\columnwidth]{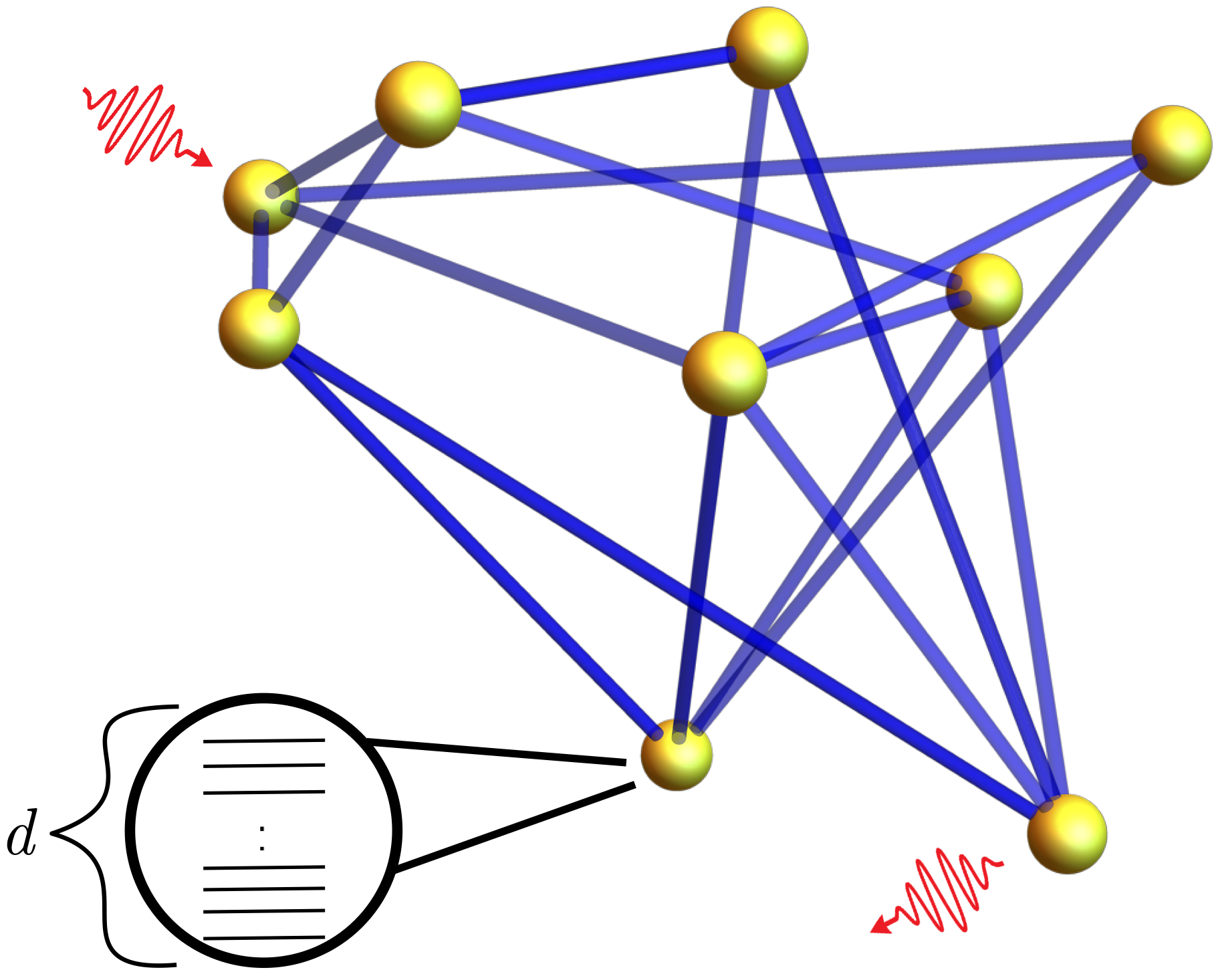}
    \caption{An arbitrary interacting quantum many-body system of $N$ sites (illustrated as yellow spheres) which may interact with each other along the blue bonds. The sites also interact with the background environment, illustrated by the red arrows. The sites each have a finite local Hilbert space, illustrated as dimension $d$ on one of the sites (i.e. each site has exactly $d$ levels). This is the general system of interest we will be focusing on. The goal will be to synchronize observables between different sites in the system. Crucially, the sites that are to be synchronized do not have $\hbar \to 0$ or equivalent large parameter (semi-classical) limits. Rather we make no assumptions on the size of the local Hilbert space. }
    \label{fig:system1}
\end{figure}

\subsection{Overview of the Paper, Summary of Key Results and Relation to Previous Work} \label{sec:keyresults}
In this final part of the introduction, we will outline the structure of the rest of the article and draw attention to the key results which we present.

In Sec.~\ref{sec:defiandmethods} we will introduce the Lindblad formalism as the most general type of smooth Markovian quantum evolution. This formalism is particularly useful since we recover an appealing and intuitive interpretation of the evolution equation in the limit of weak system bath interactions and under assumptions that the bath is Markovian. This section will also formalize our earlier discussion and explicitly define what we mean by synchronization in a quantum system. This will be the natural extension of the ideas we presented above and will distinguish between stable and meta-stable synchronization, which may further be robust to our choice of initial conditions or measurements.

In order to develop an algebraic theory of quantum synchronization, we will in Sec. \ref{sec:purelyimaginary_defi} first develop a complete algebraic theory of the purely imaginary eigenvalues of quantum Liouvillians and their corresponding \emph{oscillating eigenmodes} that we relate to limit cycles. This theory will be closely related to dynamical symmetries \cite{Buca_2019}, thus demonstrating their importance for understanding long-time non-stationarity in open quantum systems. Previous works on Liouvillian eigenvalues have been based on finding the support of the stationary subspaces or diagonalizing the Hamiltonian \cite{BN,Albert_2016,Viola1,Viola2}. However, in the case of many-body systems, both of these calculations are generally impossible except in very specific cases. Although in some cases analytic tools can be used to find non-equilibrium steady states, it remains an open problem to efficiently find the support of a generic stationary state \cite{IlievskiPhD}. Therefore, our criteria, based purely on the symmetry structures within the Liouvillian, are easier to work with for a many-body quantum system and can further be used to find systems that present persistent oscillations and limit cycles, an absence of relaxation and synchronization. Additionally, being necessary, our conditions can also be used to prove the absence of such phenomena. The results of this section, which completely characterize all possible purely imaginary eigenvalues of quantum Liouvillians, are far more wide-reaching than just quantum synchronization. In particular, they are also directly relevant for dissipative time crystals \cite{Booker_2020} and, more generally, any study which is concerned with long-time non-stationarity in an open system.

In Sec.~\ref{sec:stable} we will apply the above theory to stable synchronization, i.e. that exists for infinitely-long times. We will again show that symmetries can be used to provide conditions under which synchronization can occur. These results will highlight the importance of unital evolutions, i.e. those which preserve the identity operator, for quantum synchronization. Further, within the Lindblad formalism, these unital evolutions are easy to construct by considering dephasing or independent particle loss and gain. This can also explain why several previously studied cases of quantum synchronization are in systems where the evolution is indeed unital.

Having considered infinitely-long lived synchronization, we then proceed in Sec. \ref{sec:almostpurelyimaginary} to provide a perturbative analysis of purely imaginary eigenvalues, again with a focus on the consequences for quantum synchronization. We classify the resulting cases into ultra-low frequency, quantum Zeno and dynamical metastable synchronization. As we show, ultra-low frequency metastable synchronization is rather unsatisfactory for experimental purposes since the time period of the resulting oscillations is exceptionally long. We further prove that dynamical metastable synchronization has the longest lifetime and has oscillations that occur on relevant experimental timescales. Thus we conclude this is the most desirable form of metastable synchronization. Arguably, we have at this point covered all possible cases of quantum synchronization.

In Sec.~\ref{sec:absence} by taking the converse approach, we extend our algebraic theory so that it can be used to prove the \emph{absence} of quantum synchronization in a given system. This is useful for easily identifying which systems should not be considered when looking for candidates for quantum synchronization.

Having completely characterized quantum synchronization through an analysis of Liouvillian eigenvalues, in Sec .~\ref{sec:examples} we apply our theory in a simple example where we show how two qubits may be \emph{anti-synchronized} complementing the recent claim that qubits cannot synchronize \cite{physicsqubit}. We further use our results to demonstrate that the regularly studied example of quantum synchronization in the Fermi-Hubbard model with spin-agnostic heating \cite{quantumsynch} is, in fact, one specific example of a wide range of systems that can be expected to exhibit quantum synchronization. We will relate this discussion to more experimentally relevant set-ups with a less strict symmetry structure. Our theory then allows us to give simple predictions to explain experimental results previously found by \cite{Sengstock_2012}.

Finally, we present a conclusion. Detailed proofs of our results together with additional analysis of existing examples of quantum synchronization are presented in the appendices.

We note that criteria for purely imaginary eigenvalues of quantum Liouvillians have been given in other papers, e.g. \cite{BN,Albert_2016}. However, they require either diagonalizing the Hamiltonian or the stationary state to check and use. Both of these steps are prohibitively difficult for many systems once their size becomes sufficiently large. Moreover, they do not allow for the exact form of the corresponding eigenmodes (quantum limit cycles). Our work will rely on dynamical symmetries and immediately gives the form of the corresponding eigenmodes if the symmetries are known. Dynamical symmetries, first introduced in the context of quantum Liouvillians \cite{Buca_2019}, have been applied before to studying certain cases of quantum synchronization \cite{quantumsynch}. In these works, dynamical symmetries were shown to be sufficient for the existence of purely imaginary eigenvalues, and synchronization enabled by these symmetries was studied. In this work, we will give both necessary and sufficient algebraic conditions for the existence of purely imaginary eigenvalues which can then be directly related to sufficient criteria for quantum synchronization.  Moreover, we show that the previously identified compact sufficient condition is also necessary for quantum synchronization in \emph{unital} quantum Liouvillians, which for example, includes those that have thermal stationary states.

\section{Definitions and Methods} \label{sec:defiandmethods}
In this section, we will first present the natural definitions of synchronization in quantum systems, as follow from our previous discussion and then introduce the Lindblad Master equation framework as the most general description of the evolution of a quantum state.

\subsection{Definitions of Quantum Synchronization} \label{sec:defs}

To recap our discussion above, for two systems to be identically synchronized, we require that their matching motion be non-stationary and long-lasting. For an extended quantum system, we will interpret the `motion' via the behaviour of some local observable, $O_j$, which is measured in the same basis on every site. Note that we will use subscripts to denote the subspaces/sites on which the operator acts non-trivially i.e. $O_j=\one_{1\ldots j-1} \otimes O_j \otimes \one_{j+1\ldots N}$ with $N$ being the number of subsystems/sites. We will also consider only the strictest notion of identically synchronized signals whereby after synchronization has occurred, the two synchronized signals, $\ave{O_j(t)} \ \& \ \ave{O_k(t)}$, are identical and do not differ by any overall phase, scale factor, or constant. This is stricter that the definition provided by considering the Pearson indicator from Eq. (\ref{eq:pearson}). The results we present can be suitably adapted to consider these alternative cases, in particular phase synchronization. We will indicate in Sec. \ref{sec:weakerdefi} how alternate considerations can be made, but as the technical discussions do not provide any additional insight or understanding, {we will work largely with these very strict definitions given below.

In line with previous works, we consider stable and metastable synchronization to be where the signals remain synchronized and non-stationary for infinitely long or finitely long times, respectively. In the case of metastable synchronization, we require some perturbative parameter within the system that controls the lifetime of the synchronized behaviour. Finally, we must allow some initial transient time period during which the process of synchronization can take place. These considerations lead naturally to the following definitions. 

Note that equality in these definitions, and in the remainder of this work, is understood as equality up to terms which are exponentially small in time and for brevity, we do not continuously write ``$+ \sO(e^{-\gamma t})$''. These terms will always be present for finite-dimensional systems but are negligible on the longer time-scales we will be concerned with.

\begin{defi}[Stably synchronized]
    We say that the subsystems $j$ and $k$ are \emph{stably synchronized} in the observable $O$ if for some initial state and after some transient time period $\tau$ we have $\ave{O_j(t)} = \ave{O_k(t)}$ for all $t \ge \tau$. Further we require that $\ave{O_j(t)}$ does not become constant, i.e $\partial_t \ave{O_j(t)} \not\equiv 0$. 
\end{defi}

\begin{defi}[Metastably\footnote{Metastable synchronisation of this form has also been previously referred to as transient synchronisation \cite{Giorgi_Cabot_Zambrini_2019, quantumsynch9}.} synchronized]
    We say that the subsystems $j$ and $k$ are \emph{metastably synchronized} in the observable $O$ if for some initial state during the interval $t\in [\tau, T]$ where $T>>\tau$ we have $\ave{O_j(t)} = \ave{O_k(t)}$ and again $\partial_t \ave{O_j(t)} \not\equiv 0$. Further, the cut-off time $T$ must be controllable by some perturbative parameter in the system
\end{defi}
These definitions do not require that the system has an internal mechanism that synchronizes the two sites. Since we simply require the existence of some initial state for which the observables are synchronized, this initial state can be finely tuned so that, in fact, the observables on the two subsystems are initially equal, and their evolutions are identical. To characterize synchronization that is robust to variation in initial conditions, we make the further definition.

\begin{defi}[Robustly synchronized]
    If the subsystems $j$ and $k$ are stably or metastably synchronized in the observable $O$, this synchronization is \emph{robust} if $O_j(t)=P_{j,k}O_k(t)P_{j,k}$ and $\partial_t O_j(t) \not\equiv 0$ on the operator level (i.e. in the Heisenberg picture). Here $P_{j,k}$ is a permutation operator exchanging subsystems $j$ and $k$. As with the definitions of stable and meta-stable synchronization, these requirements must hold for the same respective time periods.
\end{defi}

This definition ensures that after the transient dynamics have decayed, regardless of the system's initial state, the observable on subsystems $j$ and $k$ will be equal and for a large class of initial states will be non-stationary. Robustly synchronized systems are of the most significant interest since these are the ones where some internal mechanism causes the synchronization process. However, as demonstrated in the appendices, several previously studied examples of quantum synchronization are, in fact, not robust and require fine-tuning of the initial state. This highlights the importance of making these considerations and definitions when studying quantum synchronization. We also remark that we have restricted ourselves to the case of identical subsystems so that robustness can be defined in this way. If one were to study synchronization between a spin-1/2 and a spin-1, for example, it is unclear how robustness could be defined in a similarly straightforward manner.

Notice that we have defined synchronization with respect to some observable. For most applications of synchronized motion, it is sufficient for just one observable to be synchronized. We can, however, say that the two subsystems are \emph{completely} synchronized if they are robustly synchronized, whether stably or metastably, for \emph{all} non-stationary observables \cite{semiclassical8}. 

These definitions are visualized in Fig. \ref{fig:classificationdiagram} and characterize the possible cases of identical synchronization within a quantum system. As earlier remarked, the results we present can be straightforwardly adapted to also consider phase synchronization provided the long-lived oscillations are periodic. Moreover, we can easily extend our analysis to other measures of synchronization \cite{GenMeasu}. For instance, we may define limit cycle dynamics over a phase space composed of $\ave{O_k(t)}$ and another observable $\ave{Q_k(t)}$ that obeys $[Q_k,O_k] \neq 0$. 

Having now made explicit what we mean when referring to quantum synchronization and introduced some of the notation, we can begin to study these systems in more detail.

\begin{figure}[!t]
    \centering
    \includegraphics[width = 0.8\columnwidth]{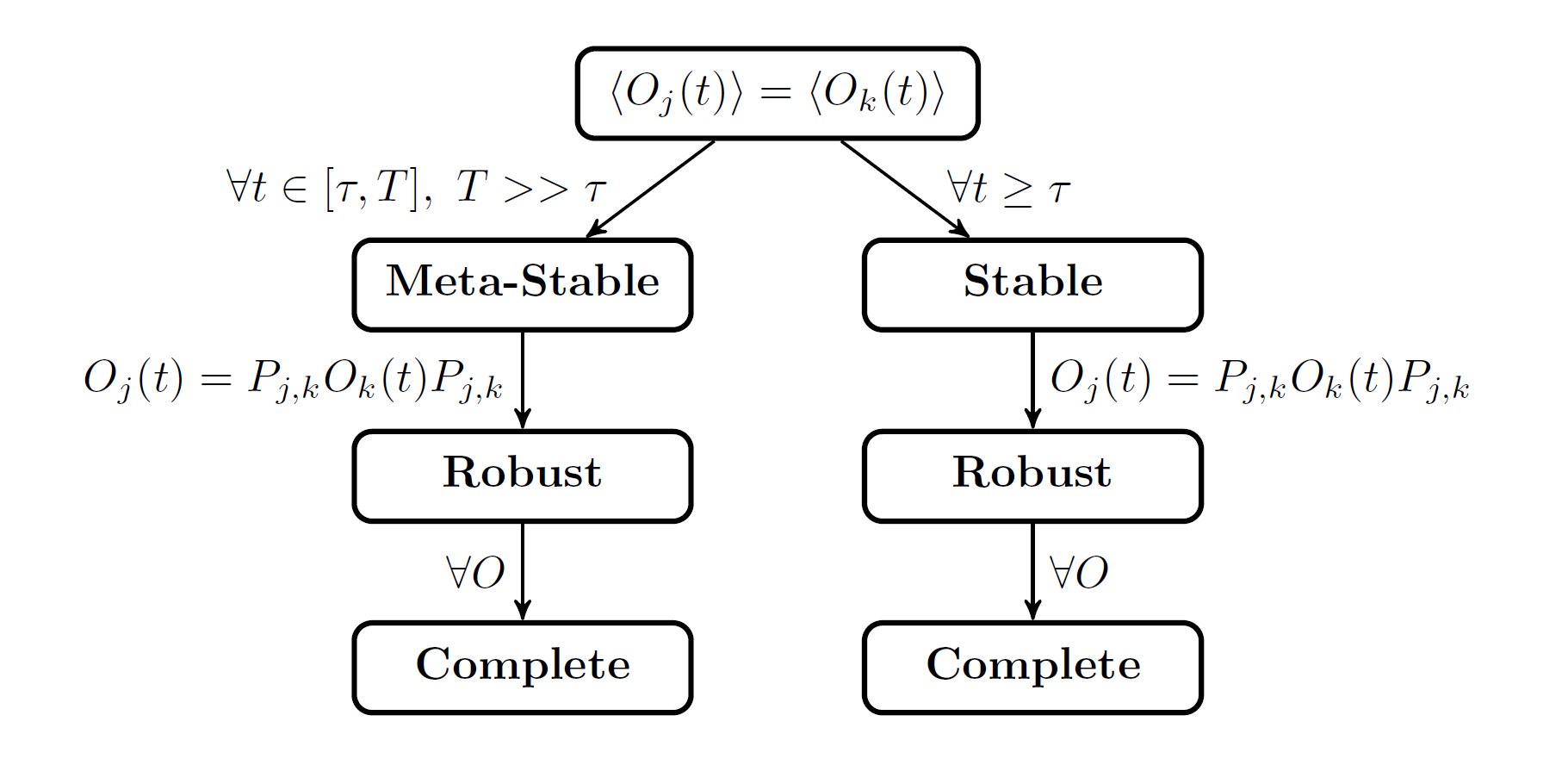}
    \caption{Visualization of the different definitions of synchronization. If we have equality in the expectation value of some non-stationary observable on different sites for some initial state, then we classify the signals as either stably or metastably synchronized. If, in fact, we have equality on the operator level, then they are robustly synchronized. Finally, if this holds for all non-stationary operators, then we say the subsystems are completely synchronized.}
    \label{fig:classificationdiagram}
\end{figure}

\subsection{Lindblad Formalism}\label{sec:lindblad}
We describe the quantum state by a time dependent density operator, $\rho(t)$, acting on the Hilbert space of the system, $\sH$. The most general quantum evolution is described by a completely positive, trace preserving channel $\sT_t$ so that after a time $t$ the state has evolved via
\begin{equation}
    \rho(t) = \sT_t[\rho(0)].
\end{equation}
It is known \cite{Lindblad_1976} that any smooth, time-homogeneous, completely positive, trace-preserving quantum channel $\sT_t[\rho]$ which obeys the natural semi-group property 
\begin{equation}
    \sT_{t} \cdot \sT_{s} = \sT_{t+s},
\end{equation}
can be expressed as 
\begin{equation}
    \sT_t[\rho] = e^{t \sL}[\rho],
\end{equation}
and thus 
\begin{equation} \label{eq:ddt}
    \dd{}{t} \rho = \sL[\rho].
\end{equation}
Here $\sL$ is a quantum Liouvillian of the form 
\begin{equation} \label{eq:masterequation}
     \sL[\rho] = -\ii [H,\rho]  + \sum_\mu \left( 2 L_\mu \rho L_\mu^\dag - \{L_\mu^\dag L_\mu, \rho\} \right),
\end{equation}
for some Hermitian operator $H$. If we are considering a system which weakly interacts with a Markovian environment then equations (\ref{eq:ddt}-\ref{eq:masterequation}) are usually referred to as the Lindblad master equation and we can interpret the operator $H$ as the system's Hamiltonian and the $L_\mu$ operators, now called Lindblad jump operators, model the influence of the environment's noise on our system. Such operators may, for example, be particle creation/annihilation operators to describe random particle gain/loss, or number operators to describe dephasing \cite{GardinerTextbook}. By formulating our theory of quantum synchronization with respect to evolutions described by Eqs. (\ref{eq:ddt}-\ref{eq:masterequation}) the resulting theory is as general as possible and can be intuitively understood in the weak coupling, Markovian limit. However, evolution of the quantum state according to Eqn. (\ref{eq:masterequation}) is not restricted exclusively to the weak coupling limit, although at strong coupling the jump operators lose their intuitive interpretation. By formulating our theory within the Lindblad formalism we are able to capture the widest class of possible quantum evolutions which are relevant to undriven systems which interact with an environment.

We remark that it is possible to also include some non-Markovian effects in this formalism by considering an enlarged Hilbert space containing part of the environment, as we outline further in Appdx \ref{sec:non_markovian}. Although this is usually unpractical for large many-body systems, the example in Sec \ref{sec:antiexample} will in effect use this idea when anti-synchronizing two spin-1/2s. With this one exception, however, our work considers exclusively Markovian dynamics.


A formal solution to the Lindblad equation can be obtained by finding the eigensystem of the Liouvillian. This is defined as the set $\{\rho_k, \sigma_k, \lambda_k\}$, where $\lambda_k$ are the eigenvalues of $\sL$ and $\rho_k, \sigma_k$ are the corresponding right and left eigenstates respectively. The eigensystem obeys the relations,
\begin{equation}
    \begin{gathered}
    \sL [\rho_k] = \lambda_k \rho_k, \ \sL^\dag[\sigma_k] = \lambda_k^* \sigma_k, \\ \sinner{\sigma_{k}}{\rho_{k'}} = \delta_{k,k'}.
    \end{gathered} \label{eigensystem}
\end{equation}
where $\sinner{\sigma}{\rho} = \Tr(\sigma^\dag \rho)$ is the Hilbert-Schmidt inner product. Note that for operators $\sL$ which generate a CPTP map and thus describe physical quantum evolutions, the eigenvalues, $\lambda_k$, can lie only in the left half of the complex plane with $\text{Re}(\lambda_k) \le 0$. In the familiar way, assuming diagonalizability of $\sL$ we can express the evolution of the expectation value of some observable $\hat{O}$, given that the system is initialised in the state $\sket{\rho_0}$ as
\begin{equation} \label{formalsolution}
    \ave{\hat{O}}(t) = \sum_k e^{t \lambda_k} \sinner{\hat{O}}{\rho_k} \sinner{\sigma_k}{\rho_0}.
\end{equation}
We now see that in order to achieve long-lived oscillations in some observable, there must exist eigenvalues $\lambda_k$ with non-zero imaginary part and vanishingly small real part. Note that in general, $\sL$ may not be diagonalizable, but that its restriction to the asymptotic subspace always is (Appendix F). Therefore in the long-time limit, we can assume that \eqref{formalsolution} is the evolution equation for all cases.

The above observation motivates the next section, where we present algebraic results for the existence of purely imaginary eigenvalues in quantum Liouvillians. Following these results, we shall use them to provide spectral and symmetry-based conditions for quantum synchronization as described by our previous definitions. We will then demonstrate these conditions in a variety of paradigmatic and insightful examples before concluding.

\section{Imaginary Eigenvalues of Quantum Liouvillians and Stable Synchronization}\label{sec:purelyimaginary}
In this section, we present a series of algebraic results on the existence of purely imaginary eigenvalues, give the structure of the corresponding eigenmodes, and relate them to stable quantum synchronization. These are important not only for quantum synchronization, but also for dissipative time crystals \cite{Buca_2019,Booker_2020,BucaJaksch2019,Carlos1,Carlos2,Cosme1,Cosme2,Seibold,Jamir1,Jamir2,Fazio,Orazio,Fabrizio} and more generally any study of non-stationarity in open quantum systems. All proofs for the following results can be found in the Appendices. As mentioned previously, we will focus on systems with an arbitrarily large but strictly finite number of local levels (i.e. a finite local Hilbert space).

\subsection{Necessary and Sufficient Conditions for Purely Imaginary Eigenvalues and the Structure of the Oscillating Coherent Eigenmodes} \label{sec:purelyimaginary_defi}

Our first theorem completely characterizes all purely imaginary eigenvalues of Liouvillian super-operators in terms of a unitary operator $A$ and a proper non-equilibrium steady state (NESS) of the system $\rho_\infty$ which obeys $\sL[\rho_\infty] = 0$. Note that by `proper', we mean that $\rho_\infty$ is a density operator and can thus represent an actual quantum state of the system.

\begin{thm}\label{thm:Aoperator}
    The following condition is necessary and sufficient for the existence of an eigenstate $\rho$ with purely imaginary eigenvalue $\ii \lambda$, $\sL \rho = \ii \lambda \rho, \ \lambda \in \R.$
    
    \begin{center}
       We have $\rho = A \rho_\infty$, where $\rho_\infty$ is a NESS and $A$ is a unitary operator which obey,
        \begin{gather}
            [L_\mu , A] \rho_\infty =0, \label{eq:Thm1.2} \\
            \left(-\ii[H, A] - \sum_\mu [L_\mu^\dag, A]L_\mu\right)\rho_\infty = \ii \lambda A \rho_\infty , \ \lambda \in \R \label{eq:Thm1.3}.
        \end{gather}
    \end{center}
\end{thm}
The necessity direction of the proof in Appx. \ref{proof:Thm1} is a consequence taking the polar decomposition of $\rho = AR$ and then showing that the unitary part, $A$, must satisfy the stated conditions, while the positive-semidefinite part, $R$, satisfies $\sL[R] = 0$ and is thus a proper NESS. The sufficiency direction can be obtained by rearranging Eqs. (\ref{eq:Thm1.2}) and (\ref{eq:Thm1.3}) and does not require that $A$ be unitary. Note that we have solved for the eigenmode in terms of $A$ and $\rho_\infty$. We also emphasize that the necessity of the conditions in Th. 1 does not hold for infinite-dimensional systems (e.g. bosons), but the sufficiency does. From this theorem, we can also obtain alternative necessary conditions for the existence of a purely imaginary eigenvalue. 
 
 \begin{cor}
    The Liouvillian super-operator, $\sL$, admits a purely imaginary eigenvalue, $\ii \lambda$ only if there exists some unitary operator, $A$, satisfying
    \begin{gather}
        - \ii \rho_\infty A^\dag [H,A]\rho_\infty = \ii \lambda \rho_\infty^2,\\
        \rho_\infty A^\dag [L_\mu ^\dag, A]L_\mu \rho_\infty = 0, \ \forall \mu,\\
        \rho_\infty [L_\mu ^\dag, A^\dagger]L_\mu \rho_\infty = 0, \ \forall \mu.
    \end{gather}
    In which case the eigenstate is given by $\rho = A \rho_\infty$.
 \end{cor}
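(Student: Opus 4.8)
The plan is to obtain these identities as an immediate corollary of Theorem~1, so only the necessity (``only if'') direction is required. Assume $\sL$ has a purely imaginary eigenvalue $\ii\lambda$; by Theorem~1 there is then a proper NESS $\rho_\infty$ and a unitary $A$ with eigenstate $\rho = A\rho_\infty$ obeying $[L_\mu, A]\rho_\infty = 0$ together with Eq.~(\ref{eq:Thm1.3}). The strategy is simply to sandwich these two conditions between suitable factors of $\rho_\infty$ and $A^\dag$, using only $A^\dag A = \one$ and $\rho_\infty^\dag = \rho_\infty$.

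First I would conjugate $[L_\mu, A]\rho_\infty = 0$: since $[L_\mu, A]^\dag = -[L_\mu^\dag, A^\dag]$ and $\rho_\infty$ is Hermitian, this yields $\rho_\infty[L_\mu^\dag, A^\dag] = 0$ for all $\mu$, and right-multiplying by $L_\mu\rho_\infty$ gives the third displayed identity at once. For the second identity I would expand $\rho_\infty A^\dag[L_\mu^\dag, A]L_\mu\rho_\infty$, apply $A^\dag A = \one$ on the term without a commutator, and use the just-derived relation $\rho_\infty A^\dag L_\mu^\dag = \rho_\infty L_\mu^\dag A^\dag$ on the remaining term; the two contributions then cancel, so $\rho_\infty A^\dag[L_\mu^\dag, A]L_\mu\rho_\infty = 0$. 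Finally, left-multiplying Eq.~(\ref{eq:Thm1.3}) by $\rho_\infty A^\dag$ turns its right-hand side into $\ii\lambda\,\rho_\infty A^\dag A\rho_\infty = \ii\lambda\rho_\infty^2$, while the dissipative sum on the left-hand side vanishes term by term by the second identity, leaving $-\ii\,\rho_\infty A^\dag[H,A]\rho_\infty = \ii\lambda\rho_\infty^2$. The eigenstate is $\rho = A\rho_\infty$ exactly as in Theorem~1.

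This is essentially a bookkeeping exercise with no serious obstacle; the only points requiring care are the signs and operator orderings in the Hermitian conjugates of the commutators, and the observation that the second identity relies on the conjugate of Eq.~(\ref{eq:Thm1.2}) together with the unitarity of $A$, rather than on Eq.~(\ref{eq:Thm1.3}). It is worth noting that the resulting conditions are strictly weaker than those of Theorem~1 — they are, in effect, the original equations compressed onto the range of $\rho_\infty$ — which is precisely why they are convenient as readily-checkable necessary conditions for excluding purely imaginary eigenvalues.
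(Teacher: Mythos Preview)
Your proposal is correct and follows essentially the same route as the paper's proof: both start from the conditions of Theorem~1, use the Hermitian conjugate of $[L_\mu,A]\rho_\infty=0$ together with unitarity of $A$ to kill the dissipative terms, and then left-multiply Eq.~(\ref{eq:Thm1.3}) by $\rho_\infty A^\dag$ to obtain the Hamiltonian identity. The only cosmetic difference is the order in which the second and third identities are established --- you derive the third first (directly from the conjugate) and use it to get the second, whereas the paper states the second first and deduces the third from it via $[L_\mu^\dag,A^\dag A]=0$ --- but the underlying manipulations are the same.
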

In Appx. \ref{proof:Cor1} this is shown by rearranging the conditions in Eqs. (\ref{eq:Thm1.2}) and (\ref{eq:Thm1.3}).

We next make the connection between these results and strong dynamical symmetries \cite{Buca_2019} which have previously been shown as sufficient for the existence of purely imaginary eigenvalues. Using Thm. 1, we can, under certain conditions, strengthen the results of \cite{Buca_2019} and demonstrate that strong dynamical symmetries are both necessary and sufficient.

\begin{thm}
    When there exists a faithful (i.e. full-rank/ invertible) stationary state, $\tilde{\rho}_\infty$, $\rho$ is an eigenstate with purely imaginary eigenvalue if and only if $\rho$ can be expressed as,
    \begin{equation}
        \rho = \rho_{nm} = A^n \rho_\infty (A^\dag)^m,
    \end{equation}
    where $A$ is a strong dynamical symmetry obeying 
    \begin{gather} 
            \left[H, A\right] = \omega A \label{eq:sdscondition1}
            \\ [L_\mu, A]=[L_\mu^\dag, A] = 0, \ \forall \mu. \label{eq:sdscondition2}
    \end{gather}
    and $\rho_\infty$ is some NESS, not necessarily $\tilde{\rho}_\infty$.
    Moreover the eigenvalue takes the form 
    \begin{equation}
        \lambda = -\ii \omega (n-m).
    \end{equation}
    Furthemore, the corresponding left eigenstates, with $\sL^\dag \sigma_{mn} = \ii \omega (n-m) \sigma_{mn} $, are given by $\sigma_{mn} =(A')^m \sigma_0 \left((A')^\dagger\right)^n$ where $A'$ is also a strong dynamical symmetry and $\sigma_0=\one$. 
\end{thm}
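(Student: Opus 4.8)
The plan is to leverage Theorem 1 together with the extra hypothesis of a faithful NESS $\tilde\rho_\infty$ to upgrade the unitary operator $A$ appearing there into a genuine strong dynamical symmetry. First I would establish the ``if'' direction, which is essentially a direct computation: assuming $A$ satisfies \eqref{eq:sdscondition1}--\eqref{eq:sdscondition2}, one checks by induction on $n$ and $m$ that $\sL[A^n \rho_\infty (A^\dag)^m] = -\ii\omega(n-m) A^n\rho_\infty(A^\dag)^m$. Because $[L_\mu,A]=[L_\mu^\dag,A]=0$ (and hence also the adjoint relations hold), every dissipator term commutes past the powers of $A$ and $A^\dag$, so $\sL$ acts only through the Hamiltonian commutator; using $[H,A]=\omega A$ repeatedly gives the eigenvalue $-\ii\omega(n-m)$. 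The left-eigenstate claim is the mirror image: one verifies $\sL^\dag[\one]=0$ (trace preservation), notes that $A$ (being unitary and commuting with all $L_\mu,L_\mu^\dag$, and satisfying $[H,A]=\omega A$) is also a strong dynamical symmetry for $\sL^\dag$ in the appropriate sense, and then the same inductive computation with $\sigma_0 = \one$ yields $\sL^\dag[(A')^m \one ((A')^\dag)^n] = \ii\omega(n-m)(A')^m ((A')^\dag)^n$, together with the biorthogonality $\sinner{\sigma_{mn}}{\rho_{m'n'}} \propto \delta_{mm'}\delta_{nn'}$ which follows from $\Tr$-cyclicity and unitarity.

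The substance is in the ``only if'' direction. Starting from an eigenstate $\rho$ with purely imaginary eigenvalue $\ii\lambda$, Theorem 1 gives $\rho = A\rho_\infty$ for a unitary $A$ and a NESS $\rho_\infty$ satisfying \eqref{eq:Thm1.2}--\eqref{eq:Thm1.3}. The idea is to first replace $\rho_\infty$ by the faithful NESS $\tilde\rho_\infty$. I expect one can argue that $A\tilde\rho_\infty$ is again a purely-imaginary eigenmode with the same eigenvalue — heuristically, the stationary subspace is spanned by NESSs and the conditions \eqref{eq:Thm1.2}--\eqref{eq:Thm1.3} should propagate — so without loss of generality we may take the NESS in Theorem 1 to be faithful. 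Now faithfulness is the key lever: since $\tilde\rho_\infty$ is invertible, the condition $[L_\mu, A]\tilde\rho_\infty = 0$ from \eqref{eq:Thm1.2} immediately forces $[L_\mu, A] = 0$ for all $\mu$, i.e.\ the first half of \eqref{eq:sdscondition2}. Similarly \eqref{eq:Thm1.3} becomes $-\ii[H,A] - \sum_\mu [L_\mu^\dag,A]L_\mu = \ii\lambda A$ as an operator identity; taking the Hermitian conjugate of the $A\leftrightarrow A^\dag$ version of these relations (or invoking that $\rho^\dag = \tilde\rho_\infty A^\dag$ is the eigenmode with eigenvalue $-\ii\lambda$, which by the same argument gives $[L_\mu,A^\dag]\tilde\rho_\infty=0$, hence $[L_\mu,A^\dag]=0$, equivalently $[L_\mu^\dag,A]=0$) yields the second half of \eqref{eq:sdscondition2}. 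With $[L_\mu^\dag,A]=0$ in hand, \eqref{eq:Thm1.3} collapses to $-\ii[H,A] = \ii\lambda A$, i.e.\ $[H,A] = -\lambda A$, which is \eqref{eq:sdscondition1} with $\omega = -\lambda$.

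Finally I would show that a general purely-imaginary eigenmode has the stated form $A^n\rho_\infty(A^\dag)^m$. Having shown that every eigenvalue on the imaginary axis is of the form $-\ii\omega$ for some strong dynamical symmetry $A$ with $[H,A]=\omega A$, one notes that the $A^n\rho_\infty(A^\dag)^m$ (for the various strong dynamical symmetries and NESSs) already realize all eigenvalues $-\ii\omega(n-m)$; a dimension/completeness count, or the argument that distinct $(n,m)$ with distinct $n-m$ exhaust the peripheral spectrum with the correct algebraic multiplicities (using the polar-decomposition structure from the proof of Theorem 1 to see that the positive part of any such eigenmode is a NESS), closes the characterization. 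The main obstacle I anticipate is the very first reduction step — rigorously justifying that one may replace the a priori NESS $\rho_\infty$ coming from Theorem 1 by the faithful one $\tilde\rho_\infty$ without losing the eigenmode conditions; this likely requires a careful look at the structure of the stationary subspace (e.g.\ that it is spanned by NESSs supported on ``transient-free'' subspaces and that $A$ intertwines these appropriately), and is where I would concentrate the detailed work, the rest being bookkeeping with commutators and induction.
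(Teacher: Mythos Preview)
Your ``if'' direction is fine and is essentially what the paper defers to \cite{Buca_2019}. The ``only if'' direction, however, has a structural problem that goes deeper than the reduction step you flag.

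Your plan is to show that the \emph{unitary} $A$ furnished by Theorem~1 satisfies the strong dynamical symmetry conditions \eqref{eq:sdscondition1}--\eqref{eq:sdscondition2}. But in finite dimensions a unitary $A$ obeying $[H,A]=\omega A$ forces $\omega=0$: if $H\ket{\psi}=E\ket{\psi}$ then $H(A\ket{\psi})=(E+\omega)(A\ket{\psi})$ with $A\ket{\psi}\neq 0$ by invertibility, so the spectrum of $H$ would be closed under translation by $\omega$ and hence unbounded. (The paper remarks right after the theorem that for $\omega\neq 0$ the strong dynamical symmetry $A$ must \emph{not} be unitary.) Thus even if every one of your commutator cancellations went through, the endpoint $[H,A]=-\lambda A$ with $A$ unitary would collapse to $\lambda=0$, precisely the trivial case. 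The unitary of Theorem~1 and the strong dynamical symmetry of Theorem~2 are genuinely different operators, and one cannot be ``upgraded'' into the other. Relatedly, your proposed replacement $\rho_\infty\to\tilde\rho_\infty$ silently changes the operator as well --- to $\rho\,\tilde\rho_\infty^{-1}$, which is no longer the polar part and no longer unitary --- after which Theorem~1's conclusions \eqref{eq:Thm1.2}--\eqref{eq:Thm1.3} for the unitary $A$ are not available for it.

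The paper's proof accordingly does not route through Theorem~1 for necessity. It works with \emph{left} eigenmodes and the projector $P$ onto the non-decaying part of $\sH$: faithfulness of $\tilde\rho_\infty$ gives $P=\one$, and structural results of \cite{Albert_2016,albert_thesis} then yield directly, at the operator level, $[H,A]=\omega A$ and $[L_\mu,A]=[L_\mu^\dag,A]=0$ for the (non-unitary) left eigenmode $A$. The right eigenmode is then recovered as $A\rho_\infty$ via Lemma~3 of \cite{Burgarth_2013}, with $A'=A\rho_\infty\tilde\rho_\infty^{-1}$ furnishing the second strong dynamical symmetry. This is where the faithfulness hypothesis actually enters, and it bypasses the replacement step you identified as the obstacle.
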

The proof of the necessity direction in Appx. \ref{proof:Thm2} makes extensive use of the results and proofs from \cite{Burgarth_2013, Albert_2016, albert_thesis} regarding the asymptotic subspace of the Liovillian, $As(\sH)$ and the projector, $P$, into the corresponding non-decaying part of the Hilbert space, $\sH$. Importantly, when a full rank stationary state exists, this projector must be the identity operator, $P= \1$. Compared with Thm. 1 there is now no ambiguity about the unitarity of $A$. The simple manipulation 
\begin{equation}
    \begin{aligned}
        &[H, A] = \omega A\\
        &A^{-1}H A - H = \omega \mathbbm{1} \ \ \ \text{Assuming $A$ is invertible} \\
        &0 = \omega \text{Tr}(\mathbbm{1})  \ \ \ \text{Taking trace}
    \end{aligned}
\end{equation}
shows that $A$ must \emph{not} be invertible and thus cannot be unitary when $\omega \ne 0$. Additionally, taking the trace of Eq. (\ref{eq:sdscondition1}) shows $A$ must be traceless for $\omega \ne 0$. We also remark that the requirement of a faithful stationary state, i.e that $\rho_\infty$ has full rank, is immediately satisfied if the Liouvillian is unital, defined as $\sL[\one] = 0$ where $\one$ is the identity operator. This simplifies to $\sum_\mu [L_\mu, L_\mu^\dag] = 0$ which in particular is true for pure dephasing where $L_\mu = L_\mu^\dag$. We will give further discussion of unital maps and their relevance for synchronization later. Importantly, the quantum limit cycle eigenmodes $\rho_{nm}$ are, by construction, off-diagonal in the basis of $\rho_\infty$ indicating that they are quantum coherent (cf. also quantum phase synchronization \cite{czartowski2021completely}).

\subsection{Stable Quantum Synchronization}
\label{sec:stable}
The criteria for persistent oscillations we gave in the previous subsection are necessary for stable quantum synchronization because the sites in a system cannot lock persistently into phase, frequency and amplitude if the frequency is trivially 0. The sufficient criteria depend on the precise definition of synchronization, as we will now examine. We will use the terminology introduced in Sec.~\ref{sec:defs}. Recall that the crucial feature of quantum synchronization is that the various parts of the subsystem lock into the same phase, frequency and amplitude.
Let $P_{j,k}$ be an operator exchanging subsystems $j$ and $k$ and let $\sP_{j,k}(x):=P_{j,k} x P_{j,k}$. We note that $P_{j,k}^2=\one$. We can then prove the following corollary.

\begin{cor} \label{cor:synch1}
Fulfilling all the following conditions is sufficient for robust and stable synchronization between subsystems $j$ and $k$ with respect to the local operator $O$:
\begin{itemize}
    \item The operator $P_{j,k}$ exchanging $j$ and $k$ is a weak symmetry \cite{BucaProsen} of the quantum Liouvillian $\sL$ (i.e. $[\sL,\sP_{j,k}]=0$)
    \item There exists at least one $A$ fulfilling the conditions of Th. 1
    \item For at least one operator $A$ fulfilling the conditions of Th. 1 and the corresponding $\rho_\infty$ we have $\tr[ O_j A \rho_\infty] \ne 0$
    \item All $A$ fulfilling the conditions of Th. 1 also satisfy $[P_{j,k},A]=0$
\end{itemize}
\end{cor}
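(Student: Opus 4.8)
\medskip
\noindent\textbf{Proof plan.} The plan is to get the operator-level (``robust'') half of the claim straight from the weak-symmetry hypothesis, and then to produce stable synchronization of expectation values by evolving a $\sP_{j,k}$-symmetric initial state, using the other two hypotheses to ensure the common signal genuinely oscillates and that the two sites lock \emph{in} phase rather than in anti-phase.

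First I would exploit the weak symmetry in the Heisenberg picture. Since $P_{j,k}$ is unitary and Hermitian, $\sP_{j,k}$ is a self-adjoint involution on operator space for the Hilbert--Schmidt product, so $[\sL,\sP_{j,k}]=0$ gives $[\sL^\dagger,\sP_{j,k}]=0$ and hence $[e^{t\sL^\dagger},\sP_{j,k}]=0$. As $O$ is the same local observable (in particular Hermitian) on every site, $\sP_{j,k}(O_k)=O_j$, hence $\sP_{j,k}\bigl(O_k(t)\bigr)=\sP_{j,k}\bigl(e^{t\sL^\dagger}O_k\bigr)=e^{t\sL^\dagger}(O_j)=O_j(t)$ for all $t$; this is exactly the identity $O_j(t)=P_{j,k}O_k(t)P_{j,k}$ demanded by robustness, and with no transient. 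Next, by the sufficiency direction of Theorem~1 the operator $A$ supplied by the third hypothesis (which must have $\lambda\neq0$, since a frequency that is trivially zero cannot synchronize) makes $A\rho_\infty$ an eigenmode with eigenvalue $\ii\lambda$; expanding $O_j(t)=e^{t\sL^\dagger}O_j$ in the Liouvillian eigenbasis (cf.\ Eq.~(\ref{formalsolution})), the coefficient of the purely imaginary mode at $\ii\lambda$ is $Y:=\sum_{\lambda_l=\ii\lambda}\overline{\tr[O_j\rho_l]}\,\sigma_l$, and $\tr[O_jA\rho_\infty]\neq0$ forces $\tr[O_j\rho_l]\neq0$ for some such $l$, hence $Y\neq0$ (the $\sigma_l$ are linearly independent because purely imaginary eigenvalues of a bounded semigroup are semisimple). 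Thus $\partial_tO_j(t)\not\equiv0$, and by the operator identity $\partial_tO_k(t)\not\equiv0$ as well.

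The crux is to show the signals lock in phase, and this is where the fourth hypothesis enters, via the lemma I would prove first: under $[\sL,\sP_{j,k}]=0$ together with the requirement that every $A$ satisfying Theorem~1 commute with $P_{j,k}$, every eigenmode $\rho$ with nonzero purely imaginary eigenvalue is $\sP_{j,k}$-symmetric. Indeed $\Pi_-\rho:=\tfrac12\bigl(\rho-\sP_{j,k}\rho\bigr)$ is again an eigenmode with the same eigenvalue because $\sP_{j,k}$ commutes with $\sL$; if $\Pi_-\rho\neq0$, Theorem~1 writes it as $A'\rho'_\infty$ with $A'$ unitary and obeying the theorem's conditions and $\rho'_\infty$ a proper (density-matrix) NESS, so the fourth hypothesis gives $[P_{j,k},A']=0$ and therefore $-\Pi_-\rho=\sP_{j,k}\bigl(A'\rho'_\infty\bigr)=A'\,\sP_{j,k}\bigl(\rho'_\infty\bigr)$, i.e.\ $\sP_{j,k}(\rho'_\infty)=-\rho'_\infty$, whose trace gives $1=-1$ --- a contradiction --- so $\Pi_-\rho=0$. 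The corresponding statement for left eigenmodes follows by duality: an antisymmetric left eigenmode at $\ii\lambda$ would be Hilbert--Schmidt orthogonal to the whole right eigenspace at $\ii\lambda$ (now entirely symmetric), contradicting the non-degenerate biorthogonal pairing guaranteed by semisimplicity. In particular the operator $Y$ above is $\sP_{j,k}$-symmetric.

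Finally I would assemble the conclusion. Take $\rho_0$ to be any $\sP_{j,k}$-symmetric state; then $\rho(t)=e^{t\sL}\rho_0$ stays symmetric, so $\ave{O_j(t)}=\tr\bigl[O_j\rho(t)\bigr]=\tr\bigl[\sP_{j,k}(O_j)\,\rho(t)\bigr]=\ave{O_k(t)}$ for \emph{all} $t$, and the transient time may be taken to be zero. Such a $\rho_0$ can moreover be chosen to make the common signal non-constant: the $\sP_{j,k}$-symmetric density matrices span the symmetric Hermitian operators (e.g.\ $\one/\dim\HS+\epsilon X$ for symmetric traceless Hermitian $X$ and small $\epsilon$), and since $Y$ is a nonzero symmetric operator, $\sinner{Y}{\rho_0}$ --- which is precisely the $e^{\ii\lambda t}$ Fourier coefficient of $\ave{O_j(t)}$ --- is nonzero for some such $\rho_0$, so $\partial_t\ave{O_j(t)}\not\equiv0$. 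Together with the operator identity and $\partial_tO_j(t)\not\equiv0$ this is robust and stable synchronization of $j$ and $k$ in $O$; the second hypothesis just guarantees the above is non-vacuous. I expect the lemma of the third paragraph to be the main obstacle: the rest is bookkeeping with the spectral decomposition, and it is exactly the trace/positivity argument excluding antisymmetric oscillating eigenmodes that upgrades ``the two signals are related by the site swap'' to ``the two signals are equal.''
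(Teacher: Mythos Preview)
Your argument is correct, but it follows a genuinely different route from the paper's. The paper works in the Schr\"odinger picture and its main effort goes into showing that \emph{all steady states} $\rho_{\infty,n}$ commute with $P_{j,k}$; this requires invoking the Baumgartner--Narnhofer structure theory of stationary states and stationary phase relations to rule out NESSs living in the antisymmetric block $\sB_2$. Once that is done, the factorisation $A_n\rho_{\infty,n}$ together with the fourth hypothesis gives symmetry of every asymptotic eigenmode, and one obtains $\lim_{t\to\infty}\ave{O_j(t)}=\lim_{t\to\infty}\ave{O_k(t)}$ for \emph{arbitrary} initial states. You instead (i) read off the Heisenberg-picture operator identity $O_j(t)=P_{j,k}O_k(t)P_{j,k}$ directly from the weak symmetry, which dispatches robustness in one line, and (ii) prove that every oscillating coherence is $\sP_{j,k}$-symmetric by a clean trace/positivity contradiction built straight on the polar decomposition of Theorem~1, bypassing the Baumgartner--Narnhofer machinery entirely. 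The price is that your stable-synchronization half is established only for $\sP_{j,k}$-symmetric initial states rather than for all initial states; this still meets the formal definition (which asks only for \emph{some} initial state), but the paper's argument yields the stronger conclusion that the long-time signals agree regardless of how the system is prepared. Conversely, your trace argument is more elementary and self-contained, and your explicit verification of the Heisenberg-picture identity makes the ``robust'' part of the claim more transparent than in the paper's proof.
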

The proof detailed in Appx. \ref{proof:Cor2} follows because under the assumption that the exchange operator, $P_{j,k}$, is a weak symmetry we find that all NESSs, $\rho_\infty$, with $\sL[\rho_\infty] = 0$ commute with $P_{j,k}$. Thus when we write the general evolution in the long time limit we have 
\begin{equation} \label{eq:cor2proofoutline}
    \begin{aligned}
    \lim_{t \to \infty}\ave{O_j(t)} &=\tr \left[O_j \sum_n c_n A_n \rho_{\infty,n} e^{\ii \lambda_n t}\right] \\
    &=\tr \left[O_j P_{j,k}^2 \sum_n c_n A_n \rho_{\infty,n} e^{\ii \lambda_n t}\right] \\
    &= \tr \left[P_{j,k} O_j P_{j,k}\sum_n c_n A_n \rho_{\infty,n} e^{\ii \lambda_n t}\right] \\
    &= \tr \left[O_k\sum_n c_n A_n \rho_{\infty,n} e^{\ii \lambda_n t}\right] \\
    &= \lim_{t \to \infty}\ave{O_k(t)},
    \end{aligned}
\end{equation}
where: $c_n=\sinner{\sigma_n}{\rho(0)}$ for an arbitrary initial state $\rho(0)$, the operators $A_n$ all satisfying the conditions of Th. 1 and $\sL \rho_{\infty,n}=0$. 

The most straightforward example of a weak symmetry is when $[H,P_{j,k}]=0$ and $\sP_{j,k}$ maps the set of all the Lindblad operators $\{L_\mu\}$ into itself, though more exotic cases are possible (e.g. \cite{Nigro1,Essler,QuantumJumpSym,mcdonald2021exact,Jamir5}). Thus systems satisfying the conditions of Cor.~\ref{cor:synch1} are, for instance, those for which $P_{j,k}$ is a reflection operator, the Lindblad operators act on each subsystem individually, and the system Hamiltonian is invariant under reflections. 

We may further relax these requirements in quite general cases and achieve total synchronization across the system, i.e. synchronization between all pairs of subsystems. In Appx. \ref{proof:Cor3} we prove the following refinement of Cor. 2.

\begin{cor}\label{cor:synch2}
The following conditions are sufficient for robust and stable synchronization between subsystems $j$ and $k$ with respect to the local operator $O$:
\begin{itemize}
    \item The Liouvillian, $\sL$ is unital ($\sL(\one)=0$)
    \item There exists at least one operator $A$ fulfilling the conditions of Th. 1
    \item For at least one $A$ fulfilling the conditions of Th. 1 and the corresponding $\rho_\infty$ we have $\tr[ O_j A \rho_\infty] \ne 0$
    \item All such $A$ fulfilling the conditions of Th. 1 also satisfy $[P_{j,k},A]=0$
\end{itemize}
Furthermore, if all $A$ are translationally invariant, $[A, P_{j,j+1}]=0, \forall j$, then every subsystem is robustly and stably synchronized with every other subsystem.
\end{cor}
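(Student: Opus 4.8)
The plan is to mirror the proof of Cor.~\ref{cor:synch1}, with unitality playing the role that the weak symmetry played there and with Th.~2 supplying the structural input. First I would note that $\sL[\one]=0$ is equivalent to $\sum_\mu[L_\mu,L_\mu^\dagger]=0$ and makes the identity a full-rank, hence faithful, NESS; therefore the hypothesis of Th.~2 holds and every purely imaginary eigenmode has the form $\rho=A^n\rho_\infty(A^\dagger)^m$ with $A$ a strong dynamical symmetry ($[H,A]=\omega A$, $[L_\mu,A]=[L_\mu^\dagger,A]=0$) and left eigenmodes $(A')^m((A')^\dagger)^n$. A short computation then shows these powers are themselves eigenmodes: since $A^n(A^\dagger)^m$ commutes with every $L_\mu$ and $L_\mu^\dagger$, the dissipative part of $\sL[A^n(A^\dagger)^m]$ reduces to $2A^n(A^\dagger)^m\sum_\mu[L_\mu,L_\mu^\dagger]=0$, leaving $\sL[A^n(A^\dagger)^m]=-\ii(n-m)\omega A^n(A^\dagger)^m$, so in the unital case the peripheral eigenmodes are organised as noncommutative polynomials in the strong dynamical symmetries built on $\rho_\infty=\one$.

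Next I would invoke the fourth hypothesis. Each strong dynamical symmetry $A$ satisfies the conditions of Th.~1 (with $\rho_\infty=\one$ and $\lambda=-\omega$), so by assumption $[P_{j,k},A]=0$; taking the adjoint gives $[P_{j,k},A^\dagger]=0$, hence $[P_{j,k},A^n(A^\dagger)^m]=0$ and likewise for the left eigenmodes. Thus every long-lived right and left eigenmode is fixed by $\sP_{j,k}$. Writing the Heisenberg-picture observable in the long-time limit as $O_j(t)=\sum_{\mathrm{Re}\,\lambda_k=0}e^{t\lambda_k^*}\sigma_k\,\sinner{\rho_k}{O_j}$ and using $O_k=\sP_{j,k}O_j$, the involution $P_{j,k}^2=\one$, cyclicity of the trace and $\sP_{j,k}\sigma_k=\sigma_k$, $\sP_{j,k}\rho_k=\rho_k$, one obtains the operator identity $O_j(t)=\sP_{j,k}O_k(t)$ for all $t\ge\tau$, which is robust and stable synchronization; the third hypothesis $\tr[O_jA\rho_\infty]\neq0$ ensures at least one oscillating term survives, so $\partial_t O_j(t)\not\equiv0$.

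The step I expect to be the main obstacle is the treatment of the $\lambda=0$ block and of a possibly degenerate NESS manifold. For the family of peripheral modes built on $\rho_\infty=\one$ everything is clean, since $[P_{j,k},A]=0$ yields $\sP_{j,k}$-invariance directly; the danger is that Th.~2 only guarantees $\rho_\infty$ is \emph{some} NESS, so if the stationary state is not unique there could be a stationary or peripheral eigenmode built on a NESS other than $\one$ that is not $\sP_{j,k}$-symmetric, breaking the operator identity. I would close this exactly as the proof of Cor.~\ref{cor:synch1} does --- by showing that the surviving steady-state components, and hence (via the polar decomposition $\rho=AR$ used in the proof of Th.~1, with $|A|=(A^\dagger A)^{1/2}$ inheriting $[P_{j,k},\cdot]=0$) the positive factors of the corresponding eigenmodes, are themselves fixed by $\sP_{j,k}$ --- supplemented by the commensurability remark after Th.~2 to ensure the surviving oscillating modes do not collectively dephase into a stationary signal. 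Making this bookkeeping watertight, rather than any single computation, is the real work.

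Finally, the ``furthermore'' is immediate: the adjacent transpositions $\{P_{j,j+1}\}_{j=1}^{N-1}$ generate $S_N$ and $\pi\mapsto P_\pi$ is a homomorphism, so $[A,P_{j,j+1}]=0$ for all $j$ forces $[A,P_{k,l}]=0$ for every pair $(k,l)$. The remaining hypotheses of the single-pair statement are either pair-independent (unitality, existence of $A$) or transport between pairs via $O_l=P_{k,l}O_kP_{k,l}$ together with $[P_{k,l},A]=0$ and the $\sP_{k,l}$-invariance of the surviving eigenmodes established above, so the single-pair result holds for every pair and the whole system is robustly and stably synchronized.
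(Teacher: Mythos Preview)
Your overall strategy---unitality gives a faithful NESS, invoke Th.~2, then push $P_{j,k}$ through the peripheral eigenmodes using $[P_{j,k},A]=0$---matches the paper's opening moves. You also correctly isolate the real obstacle: controlling the possibly degenerate NESS manifold, since Th.~2 only says each peripheral mode is $A^n\rho_\infty(A^\dagger)^m$ for \emph{some} NESS $\rho_\infty$, and you must show every such $\rho_\infty$ commutes with $P_{j,k}$.

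The gap is in how you propose to close this. You say you would ``close this exactly as the proof of Cor.~\ref{cor:synch1} does,'' but that proof hinges on the weak symmetry $[\sL,\sP_{j,k}]=0$, which is \emph{not} assumed here---indeed the whole point of Cor.~\ref{cor:synch2} is to drop that hypothesis. Your polar-decomposition patch is also confused: the $A$ of Th.~1 is unitary, so $(A^\dagger A)^{1/2}=\one$ carries no information, and the strong dynamical symmetry $A$ of Th.~2 is explicitly \emph{not} unitary, so you cannot identify the two. Nothing in your sketch forces an arbitrary NESS to be $\sP_{j,k}$-invariant.

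The paper fills this gap by a genuinely different mechanism. From $[H,A]=\omega A$ and $[L_\mu,A]=[L_\mu^\dagger,A]=0$ it observes that $AA^\dagger$ is a \emph{strong symmetry}, so its spectral projectors $P_{n,a}$ are stationary states and, by Baumgartner--Narnhofer, projectors onto enclosures; crucially $[P_{j,k},AA^\dagger]=0$ forces $[P_{j,k},P_{n,a}]=0$. The fourth hypothesis is then used contrapositively: were there a finer enclosure not respecting $P_{j,k}$, its projector (or the associated intertwiner $U$ for off-diagonal blocks) would furnish an operator meeting the Th.~1 conditions yet failing $[P_{j,k},\cdot]=0$, contradicting the assumption. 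Hence the $P_{n,a}$ are minimal, every NESS is built from them, and $[\rho_{\infty,n},P_{j,k}]=0$ follows. This enclosure/strong-symmetry argument is the missing ingredient in your proposal; your treatment of the ``furthermore'' clause via generation of $S_N$ is fine.
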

This further illustrates the power and utility of unital maps in achieving total synchronization. Examples are $1D$ models that are reflection symmetric, have only one $A$ operator, and experience dephasing. This includes the cases discussed before in \cite{quantumsynch, Buca_2019}.

\subsection{Multiple Frequencies and Commensurability}
Before moving on, we make some brief remarks about the issues of multiple frequencies and commensurability. Firstly, since our theory allows for multiple $A$ operators, each of which needs not correspond to the same imaginary eigenvalue, the theorems above explicitly include the case of multiple frequencies. In general, if the multiple purely imaginary eigenvalues are not commensurate, i.e. are not all integer multiples of some fixed value, then the resulting oscillations will not be periodic. This alone does not preclude quantum synchronization since we have not required that the long-lived dynamics be periodic.

However, if there are ``too many'' incommensurable purely imaginary eigenvalues, they will generically dephase, leading to observables that are effectively stationary - at least within experimentally accessible and measurable limits. This is similar to eigenstate dephasing, which is often considered the generic mechanics for thermalization in closed systems \cite{ETHReview, eigenstatedephasing}. Alternatively, the system may display chaotic dynamics. This means that in systems obeying the conditions for synchronization but with a large number of incommensurable purely imaginary eigenvalues, additional analysis must be carried out to determine if the synchronization survives the dephasing process and whether the dynamics become chaotic. Importantly, if all the strong dynamical symmetries or $A$-operators satisfy the conditions for Cors \ref{cor:synch1} or \ref{cor:synch2} the dynamics will be synchronized at late times even if the evolution is chaotic or relaxes to stationarity. For a more detailed discussion of such spectral problems from a more mathematical perspective, see \cite{eigenstatedephasing}.

It should also be noted that the situation is more straightforward in unital evolutions, where each purely imaginary eigenvalue can be related to a strong dynamical symmetry, $A$, and thus by Thm. 2 is an integer multiple of some fixed frequency corresponding to $A$. Generically in open quantum systems with sufficiently many Lindblad jump operators, there are very few, if any, dynamical symmetries, and thus any purely imaginary eigenvalues will be integer multiples of a few fixed values. 

\subsection{Extensions to Weaker Definitions of Synchronization} \label{sec:weakerdefi}

As we have emphasised throughout, our definitions pertain only to the strictest notion of synchronization, where the two signals must be identical. However, if we were to use the Pearson indicator from Eqn. (\ref{eq:pearson}), this would instead allow for two signals which differ by an overall additive constant or multiplicative factor. It is immediate from Eqn. (\ref{eq:cor2proofoutline}) how Cor. 2 should be adapted to give sufficient conditions for this relaxed notion of synchronization. Firstly we no longer need exchange superoperator $\sP_{j,k}$ to be a weak symmetry and we do not need all $A$-operators to satisfy $[P_{j, k}, A] = 0$. Instead we require that if $\rho_n = A_n \rho_{\infty, n}$ has a non-zero, purely imaginary eigenvalue, then $P_{j, k} \rho_n = \alpha \rho_n P_{j, k}$ for some $\alpha$ which corresponds to the constant multiplicative factor between the two signals. Note that inhomogeneity of the NESSs under exhange, $P_{j, k}\rho_{\infty, n}P_{j, k} \ne \rho_{\infty, n}$ leads to an additive constant between the two signals.

As an extension to this we can also see how alternate modes of synchronisation can occur. For instance if there were only a single $A$ operator and the condition $[P_{j,k},A]=0$ were replaced by $P_{j,k}A - e^{\ii \theta} AP_{j,k}=0$ then we would have 
\begin{equation}
    \lim_{t \to \infty}\ave{O_j(t)} = \lim_{t \to \infty}\ave{O_k(t + \theta/\lambda)},
\end{equation}
corresponding to phase synchronisation


\subsection*{}

The results in this section have considered only the cases of purely imaginary eigenvalues. Thus they correspond to infinitely long-lived synchronized oscillations, i.e. \emph{stable} quantum synchronization. In the next section, we will analyze the behaviour of Liouvillian eigenvalues under perturbations and discuss the relation of these results to metastable quantum synchronization.

\section{Almost Purely Imaginary Eigenvalues and Metastable  \\ Quantum Synchronization}
\label{sec:almostpurelyimaginary}
Let us now suppose that our Liouvillian is perturbed analytically so that we may write 
\begin{equation} \label{eq:liouvpert}
    \sL'(s) = \sL + s \sL_1 + s^2 \sL_2 + \sO(s^3).
\end{equation}
For simplicity and to avoid unwanted technicalities, we will assume that this series expansion has an infinite radius of convergence. This is almost always the case in relevant examples where the Hamiltonian or Lindblad jump operators are generally perturbed only to some finite order. 

As explained in Sec. \ref{sec:defs}, metastable quantum synchronization requires eigenvalues with vanishingly small real part. Under a perturbation of the form in equation \eqref{eq:liouvpert}, it is known that the eigenvalues, $\lambda(s)$ vary continuously with $s$ \cite{BN}. Therefore in order to obtain vanishingly small real parts, it is clear that we must perturb those eigenvalues already on the imaginary axis. At this point we divide our analysis into two regimes, the first where $\lambda(0) = 0$ and the second where $\lambda(0) = \ii \omega,\ \omega \in \R\setminus\{0\}$.

\subsection{Ultra-Low Frequency Metastable Synchronization}\label{sec:ultalow}
In the first regime, we consider the case of perturbing a state with zero eigenvalue at $s=0$. Since there always exists a Liouvillian eigenstate with zero eigenvalue, this is only possible when the null space of $\sL$ is degenerate. This regime has been studied in depth by Macieszczak et al. \cite{Macieszczak_2016}. It was shown that when $\sL'(s)$ is perturbed away from $s=0$ the degeneracy in the $0$ eigenvalue is lifted in such a way that the eigenvalues are at least twice continuously differentiable,
\begin{equation}
    \lambda(s) = \ii \lambda_1 s + \lambda_2 s^2 + o(s^2),
\end{equation}
where $\lambda_1 \in \R$ and $\text{Re}(\lambda_2) \le 0$. This gives rise to periodic oscillations with time period $\sim 1/s$ and lifetime $\sim 1/s^2$.

In many ways, this is somewhat unsatisfactory for the purposes of synchronization in an applicable sense since as we extend the lifetime of our synchronization, the periodic behaviour becomes harder to observe as a consequence of the extended time period. We also remark that generically the order of $s$ only differs by one between the decay rate and the oscillation frequency. 

\subsection{Quantum Zeno Metastable Synchronization}\label{sec:zeno}

A more relevant variant of the ultra-low frequency case is when,
\begin{equation} \label{eq:zenopert}
    \sL'[\rho] = -\ii [H, \rho] + \gamma \sD[\rho] + \sO \left(\frac{1}{\gamma}\right),
\end{equation}
for some large $\gamma$. In this case the evolution is dominated by the dissipative term 
\begin{equation}
    \sD[\rho] =  \sum_\mu 2 L_\mu \rho L_\mu^\dag - \{L_\mu^\dag L_\mu, \rho\}.
\end{equation}
In this set up \emph{quantum Zeno dynamics} are possible. Physically this can, for example, correspond to experimental set-ups in regimes where it is difficult to suppress the effects of dephaing such as the example considered later in Sec \ref{sec:generalised_SUN_examples}. For more detailed derivations and in-depth discussions of the quantum Zeno effect, see \cite{Facchi1,Facchi2,Zanardi,Popkov}. We now re-scale the full quantum Liouvillian as $  \tilde{\sL} = \tfrac{1}{\gamma} \sL'$, and consider $s = 1/\gamma$ as the perturbative parameter. If the 0 eigenvalue is split then as in the above section we have 
\begin{equation}
    \begin{aligned}
        \tilde{\lambda} &= \ii \omega s + \lambda_2 s^2 + o (s^2) \\
        &= \ii \omega \frac{1}{\gamma} + \lambda_2 \frac{1}{\gamma^2} + o (s^2).
    \end{aligned}
\end{equation}
Transforming back to the unscaled Liouvillian we now have 
\begin{equation}
    \lambda = \ii \omega + \lambda_2 \frac{1}{\gamma} + o\left(\frac{1}{\gamma}\right),
\end{equation}
and thus we see that the oscillations occur on the relevant Hamiltonian time-scales for all $\gamma$. In this case the purely imaginary eigenvalues must come from \emph{stationary phase relations} \cite{BN} of the dissiaptive Liouvillian $\sD$. These are stationary states that fulfil the conditions of Th. 1 with trivial eigenvalue $\omega=0$. If all the $A$ operators and $\sD$ fulfil the conditions of Cor. 2 or 3, then robust metastable synchronization occurs with a lifetime $\sim \tfrac{1}{\gamma}$.

\subsection{Dynamical Metastable Synchronization}
We now analyse the case where $\lambda(0) = \ii \omega$, with $\omega \ne 0$, which we label dynamical metastable synchronization. This is physically distinct from the case where $\omega = 0$ as we are now considering perturbations to a system which already exhibits long-lived dynamics. By adapting the analysis of \cite{Macieszczak_2016}, in Appx. \ref{proof:Thm3} we prove the following theorem.
\begin{thm}
    For analytic $\sL'(s)$ with $\lambda(0) = \ii \omega,\ \omega \in \R$ we have 
    \begin{equation}
        \lambda(s) = \ii \omega + \ii \lambda_1 s + \lambda_2 s^{1+ \tfrac1p} + o\left(s^{1+ \tfrac1p}\right),
    \end{equation}
    for some integer $p \ge 1$. We also find that $\lambda_1 \in \R$ and $\lambda_2$ has non-positive real part.
\end{thm}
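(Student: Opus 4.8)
The plan is to adapt the perturbative spectral analysis of Macieszczak et al.\ \cite{Macieszczak_2016} (used in Sec.~\ref{sec:ultalow} for the degenerate zero eigenvalue) to a cluster of eigenvalues parked at a nonzero point $\ii\omega$ of the imaginary axis. First I would localise the problem with a Riesz projection: set $\Pi=\tfrac{1}{2\pi\ii}\oint_{\Gamma}(z-\sL)^{-1}\,dz$, where $\sL=\sL'(0)$ and $\Gamma$ is a small contour enclosing $\ii\omega$ and no other eigenvalue. Since $\sL'(s)$ is analytic, the same contour integral for $\sL'(s)$ produces an analytic family of projections $\Pi(s)$ with $\Pi(0)=\Pi$, of fixed finite rank for $|s|$ small, and the associated transformation function $W(s)$ (analytic, $W(0)=\one$, $W(s)\Pi=\Pi(s)W(s)$) conjugates the reduced generator to an analytic matrix family $M(s):=W(s)^{-1}\Pi(s)\,\sL'(s)\,\Pi(s)\,W(s)$ on the \emph{fixed} space $\mathrm{ran}\,\Pi$. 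Because $\ii\omega$ is a boundary eigenvalue of a contraction semigroup it is semisimple, so $M(0)=\ii\omega\,\one$ and $M(s)-\ii\omega=sM_1+s^2M_2+\cdots$; the Liouvillian eigenvalues near $\ii\omega$ are exactly the eigenvalues of $M(s)$, so the branch $\lambda(s)$ in the statement is one of them.

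Next I would invoke the Rellich--Kato theorem for analytic finite-dimensional families: the eigenvalues of $M(s)$ split into cycles, and the cycle containing $\lambda(s)$ is a convergent Puiseux series whose $p$ sheets are
\begin{equation*}
\lambda^{(j)}(s)=\ii\omega+\sum_{k\ge 1}\alpha_k\,\zeta^{jk}\,s^{k/p},\qquad \zeta=e^{2\pi\ii/p},\quad j=0,\dots,p-1,
\end{equation*}
for some integer $p\ge1$, all $p$ sheets being genuine eigenvalues of $\sL'(s)$ at each $s\neq0$. The physical input is that $\sL'(s)$ remains a Lindbladian for all real $s$ near $0$ --- adding $s$ times a Hermitian piece to $H$, or $s$ times extra jump operators, keeps the generator in GKSL form for either sign of $s$ --- so it generates a trace-norm contraction and $\mathrm{Re}\,\mu\le0$ for every $\mu\in\mathrm{spec}\,\sL'(s)$; in particular $\mathrm{Re}\,\lambda^{(j)}(s)\le0$ for all $j$ and all small real $s$ of both signs.

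The core of the proof is to turn these sign constraints into the claimed form, using the elementary fact that a nonzero complex number times a full set of $m$-th roots of unity lies in the closed left half-plane only for $m\in\{1,2\}$, and for $m=2$ only if the number is purely imaginary. Let $k_0=\min\{k:\alpha_k\neq0\}$. At $s>0$ the leading real part of the $j$-th sheet is $\mathrm{Re}(\alpha_{k_0}\zeta^{jk_0})\,s^{k_0/p}$, and as $j$ varies $\zeta^{jk_0}$ runs over the $(p/\gcd(k_0,p))$-th roots of unity; if $1\le k_0\le p-1$ then the integer $p/\gcd(k_0,p)$ is $\ge2$: if it is $\ge3$ this is impossible, and if it equals $2$ it forces $\alpha_{k_0}\in\ii\R$ while the same estimate at $s<0$ --- which rotates the roots by the extra phase $e^{\ii\pi k_0/p}\notin\{\pm1\}$ --- forces $\alpha_{k_0}e^{\ii\pi k_0/p}\in\ii\R$ as well, hence $e^{\ii\pi k_0/p}\in\R$ and $p\mid k_0$, a contradiction. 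Thus $\alpha_k=0$ for $1\le k\le p-1$. The coefficient $\alpha_p$ of $s^1$ is common to all sheets ($\zeta^{jp}=1$), and $\mathrm{Re}(\alpha_p)\,s\le0$ for both signs forces $\mathrm{Re}\,\alpha_p=0$, so $\alpha_p=\ii\lambda_1$ with $\lambda_1\in\R$. Finally $\lambda(s)-\ii\omega-\ii\lambda_1 s=\sum_{k>p}\alpha_k s^{k/p}$ is $s$ times a Puiseux series starting at order $s^{1/p}$; writing its leading nonzero term as $\lambda_2\,s^{1+1/p}$ and applying the $m$-th-roots argument once more at that order (where $\zeta^{j(p+1)}=\zeta^j$) gives $\mathrm{Re}\,\lambda_2\le0$, which is the last assertion.

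The step I expect to be the main obstacle is the bookkeeping in this final part: reconciling the genuine Puiseux exponents produced by $M(s)$ with the advertised form $1+1/p$, $p$ a positive integer, and correctly extracting $\lambda_2$ and its sign. For $p\ge3$ the half-plane constraint already forbids a nonzero coefficient at the exponent $1+1/p$ itself, so one must argue that the first correction surviving beyond $\ii\lambda_1 s$ always lands either at a half-integer order $s^{3/2}$ (the case $p=2$, with $\lambda_2$ then purely imaginary so $\mathrm{Re}\,\lambda_2=0\le0$) or is $o(s^2)$ (the case $p=1$, absorbed into $\lambda_2 s^2$ with $\lambda_2=0$ or $\lambda_2=\alpha_{2p}$). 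Secondary issues are justifying the two-sided contraction hypothesis (or, if only $s\ge0$ is available, obtaining the reality of $\lambda_1$ instead from the Macieszczak-type effective-generator structure on $\mathrm{ran}\,\Pi$), and the routine check that the $\Pi(s)$, $W(s)$ reduction is valid under the assumed analyticity.
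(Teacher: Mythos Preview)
Your proposal is correct and follows the same overall strategy as the paper: semisimplicity of $\ii\omega$ (as a peripheral eigenvalue of a contraction semigroup), Kato-style reduction to an analytic finite-dimensional family on the spectral subspace, and the CPTP constraint $\Re\lambda(s)\le 0$ for real $s$ of both signs to fix the coefficients.

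The difference is one of economy. After proving the semisimplicity lemma, the paper does not carry out the Puiseux/roots-of-unity combinatorics you describe; it simply invokes Theorem~2.3 of Kato, which for a semisimple unperturbed eigenvalue directly yields the form $\ii\omega+\lambda_1 s+\lambda_2 s^{1+1/p}+o(s^{1+1/p})$ with integer $p\ge 1$. The sign conclusions then follow in one line: $\Re(\lambda_1 s)\le 0$ for both signs of $s$ gives $\lambda_1\in\ii\R$, and $\Re(\lambda_2)\,s^{1+1/p}\le 0$ for $s>0$ gives $\Re\lambda_2\le 0$. Your explicit argument is precisely the mechanism underlying Kato's reduction, and it in fact proves more than the paper states (for instance that $\lambda_2=0$ whenever $p\ge 3$, and $\lambda_2\in\ii\R$ when $p=2$). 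The bookkeeping worry you raise at the end is therefore not a genuine obstacle: the advertised exponent $1+1/p$ is exactly what Kato outputs once semisimplicity holds, and the statement allows $\lambda_2=0$, so nothing needs to be reconciled. The two-sided CPTP hypothesis you flag is simply assumed in the paper's setup.
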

Crucially, this means that at first order in $s$ the perturbed eigenvalues remain purely imaginary, while the real part, which would contribute to decay, is higher order in $s$.

When considering the synchronization aspects, we first note that the conditions for robust metastable synchronization remain the same as in Cor. 2 and 3 in the leading order, i.e. for $\sL$ (see Appx. \ref{ref:JNF_discuss}). For sake of simplicity we assume that $\sL_1$ is such that it explicitly breaks the exchange symmetry $P_{j,k}$ between the sites we wish to synchronize, i.e. $\sP_{j,k} \sL_1 \sP_{j,k}=-\sL_1 $. This is the most relevant and standardly studied case in quantum synchronization. For instance, this case occurs if the subsystems $j,k$ are perfectly tuned to the same frequency in $\sL$ in the leading order, and we introduce a small detuning (e.g. \cite{quantumsynch}). In that case, we find the remarkable fact that,
\begin{cor}
For anti-symmetric perturbations $\sL_1$ which explicitly break the exchange symmetry, and thus synchronization, between sites $j$ and $k$, i.e. $\sP_{j,k} \sL_1 \sP_{j,k}=-\sL_1 $, the frequency of synchronization $\omega$ is stable to next-to-leading order in $s$, i.e. $\lambda_1=0$. 
\end{cor}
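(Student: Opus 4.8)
\emph{Proof proposal.} The plan is to combine first-order degenerate perturbation theory with the fact that, under the hypothesis $\sP_{j,k}\sL_1\sP_{j,k}=-\sL_1$, the perturbation anticommutes with the (still unbroken) exchange super-operator $\sP_{j,k}$, whereas the synchronized eigenmode is \emph{symmetric} under $\sP_{j,k}$. Since the linear-in-$s$ eigenvalue shift $\ii\lambda_1$ must live in the exchange-symmetric sector while $\sL_1$ maps that sector into the antisymmetric one, the shift is forced to vanish. First I would set up the perturbative bookkeeping exactly as in the proof of Th.~3 (adapting the analysis of \cite{Macieszczak_2016}): let $P_\omega$ be the spectral projector of $\sL$ onto the eigenvalue $\ii\omega$ and $V_\omega=\mathrm{ran}\,P_\omega$ the corresponding (possibly degenerate) eigenspace; then the coefficient $\ii\lambda_1$ in $\lambda(s)=\ii\omega+\ii\lambda_1 s+\lambda_2 s^{1+1/p}+o(s^{1+1/p})$ is an eigenvalue of the reduced perturbation $M:=P_\omega\,\sL_1\,P_\omega$ regarded as an operator on $V_\omega$, with the relevant branch determined by the associated (generalized) eigenvector of $M$; this is just the first-order content of the expansion already established for Th.~3.

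Next I would exploit the exchange symmetry of the leading-order problem. Because $P_{j,k}$ is a weak symmetry of $\sL$ (built into the leading-order conditions of Cor.~2/3), we have $[\sL,\sP_{j,k}]=0$, hence $[P_\omega,\sP_{j,k}]=0$, and $\sP_{j,k}$ restricts to an involution $\Pi:=\sP_{j,k}|_{V_\omega}$ with $\Pi^2=\mathrm{id}$, giving a splitting $V_\omega=V_\omega^{+}\oplus V_\omega^{-}$ into its $\pm1$ eigenspaces. Two observations then close the argument. (i) The eigenmode carrying the synchronized oscillation is $\rho=A\rho_\infty$ with $A$ obeying Th.~1 and, under Cor.~2/3, $[P_{j,k},A]=0$; moreover every NESS commutes with $P_{j,k}$ when $P_{j,k}$ is a weak symmetry (as used in the proof of Cor.~2), so $[\rho_\infty,P_{j,k}]=0$ and therefore $\sP_{j,k}(A\rho_\infty)=P_{j,k}A\rho_\infty P_{j,k}=A\rho_\infty$, i.e.\ the synchronized mode lies in $V_\omega^{+}$. (ii) From the hypothesis $\sP_{j,k}\sL_1\sP_{j,k}=-\sL_1$ together with $[P_\omega,\sP_{j,k}]=0$ one gets $\Pi M\Pi=-M$, i.e.\ $\Pi M=-M\Pi$ on $V_\omega$; equivalently $M$ is block off-diagonal with respect to $V_\omega^{+}\oplus V_\omega^{-}$, sending $V_\omega^{+}$ into $V_\omega^{-}$ and vice versa.

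Combining (i) and (ii): if the synchronization branch has $Mv=\ii\lambda_1 v$ with $0\neq v\in V_\omega^{+}$, then $\ii\lambda_1 v=Mv\in V_\omega^{-}$, and $V_\omega^{+}\cap V_\omega^{-}=\{0\}$ forces $\lambda_1=0$. Equivalently — and this is the cleanest way I would phrase it — the operation that conjugates by $\sP_{j,k}$ and simultaneously sends $s\mapsto -s$ leaves $\sL+s\sL_1$ invariant (exactly, and higher-order terms cannot affect $\lambda_1$ since it depends only on $\sL$ and $\sL_1$), so it pairs the branch $\lambda(s)$ with $\lambda(-s)$; the $\Pi$-symmetry of the synchronized mode makes this branch self-paired, whence $\ii\lambda_1=-\ii\lambda_1$ and $\lambda_1=0$. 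Therefore the oscillation frequency stays $\omega$ through order $s$ and the only surviving correction is the (generally decaying) term $\lambda_2 s^{1+1/p}$ of Th.~3, so that under Cor.~2/3 the metastable synchronization persists at the unshifted frequency $\omega$ up to the lifetime controlled by that term.

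I expect the main obstacle to be the careful handling of the degeneracy: making rigorous that the perturbative branch attached to the \emph{synchronized} oscillation is precisely the one whose leading (generalized) eigenvector of $M$ sits in the symmetric sector $V_\omega^{+}$, rather than some generic combination of the $\pm\lambda_1$-paired branches. In the generic case this requires tracking the Jordan structure of $M$ — the same structure that produces the fractional exponent $s^{1+1/p}$ in Th.~3 — and verifying that $V_\omega^{+}$ meets only $\ker M$; when $\dim V_\omega=1$ the argument collapses to the trivial observation that $\lambda(s)$ and $\lambda(-s)$ must coincide.
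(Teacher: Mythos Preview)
Your proposal is correct and takes essentially the same approach as the paper: both compute $\lambda_1$ via first-order perturbation theory as a matrix element of $\sL_1$ on the unperturbed (synchronized) eigenmode, then use that this mode lies in the exchange-symmetric sector while $\sP_{j,k}\sL_1\sP_{j,k}=-\sL_1$ sends the symmetric sector into the antisymmetric one, forcing the matrix element and hence $\lambda_1$ to vanish. Your treatment is more explicit about the degenerate case and adds the equivalent $s\mapsto -s$ reformulation, but the paper's short proof is the same mechanism in fewer lines.
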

This counter-intuitive result, proven in Appx. \ref{proof:Cor4} implies that the frequency at which synchronized observables oscillate is more stable to perturbations that disturb the synchronization through explicitly breaking the exchange symmetry than those which preserve this symmetry.

The preceding theorem may be strengthened in the case where $\lambda(0)$ is non-degenerate or where the perturbation of $\sL'(s)$ corresponds to at least a second-order perturbation of the jump operators as follows 
\begin{cor}
    For a Liouvillian of the form in Eq. \eqref{eq:masterequation} with 
    \begin{equation}
        \begin{aligned}
            H(s) &= H^{(0)} + s H^{(1)} + \sO(s^2), \\
            L_\mu(s) &= L_\mu^{(0)} + \sO(s^2), \ \forall \mu,
        \end{aligned}
    \end{equation}
    the exponent in Theorem 3 has $p = 1$ so that the eigenvalues are twice continuously differentiable in $s$.
\end{cor}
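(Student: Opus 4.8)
\emph{Proof proposal.} The claim is trivial when $\ii\omega$ is a \emph{simple} eigenvalue of $\sL$: standard analytic (Kato--Rellich) perturbation theory then makes $\lambda(s)$ holomorphic near $s=0$, so $p=1$ and $\lambda$ is in particular twice continuously differentiable. So assume $\ii\omega$ is degenerate, and let $\mathcal{P}_\omega$ be its Riesz spectral projector for $\sL$. Because the peripheral spectrum of a finite-dimensional Liouvillian is semisimple, $\mathcal{P}_\omega\sL\mathcal{P}_\omega=\ii\omega\,\mathcal{P}_\omega$; hence, by the Feshbach/Schur reduction already used to prove Th.~3, the eigenvalues of $\sL'(s)$ near $\ii\omega$ coincide with those of an analytic effective superoperator $\ii\omega\,\mathcal{P}_\omega+s\hat{\sL}_1+\sO(s^2)$ on $\mathrm{Ran}\,\mathcal{P}_\omega$, where $\hat{\sL}_1:=\mathcal{P}_\omega\sL_1\mathcal{P}_\omega$. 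The argument behind Th.~3 (adapting Macieszczak \emph{et al.}~\cite{Macieszczak_2016}) identifies the exponent $p$ with the size of the Jordan block of $\hat{\sL}_1$ controlling the branch under consideration; in particular $p=1$ as soon as $\hat{\sL}_1$ is diagonalizable, so it suffices to prove the latter.

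This is where the hypothesis enters. Since $L_\mu(s)=L_\mu^{(0)}+\sO(s^2)$ and $H(s)=H^{(0)}+sH^{(1)}+\sO(s^2)$, the first-order term of \eqref{eq:liouvpert} is the \emph{pure Hamiltonian} perturbation $\sL_1[\rho]=-\ii[H^{(1)},\rho]$, an inner $\ast$-derivation. Had some $L_\mu$ carried a first-order part, $\sL_1$ would additionally contain terms such as $2L_\mu^{(1)}\rho L_\mu^{(0)\dag}+\cdots$, which are not derivations; for those the argument below breaks, and $p>1$ does occur in general --- this is exactly why Th.~3 cannot be strengthened without the stated assumption.

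To show $\hat{\sL}_1$ diagonalizable I would pass to the asymptotic structure of $\sL$. Restricting to the support of the maximal-rank NESS we may take $\rho_\infty$ faithful; then, by Th.~1--2 (equivalently the decoherence-free-subsystem decomposition of \cite{BN,Albert_2016}), the peripheral eigenoperators span a finite-dimensional $\ast$-algebra $\mathcal{A}$ on which $\sL^\dag$ acts as a Hermitian inner derivation $\ii[K,\cdot]$, $K=K^\dag\in\mathcal{A}$, whose eigenspaces $\mathcal{A}_\mu=\{a\in\mathcal{A}:[K,a]=\mu a\}$ are mutually Hilbert--Schmidt orthogonal. Let $\Pi$ be the $\rho_\infty$-compatible conditional expectation onto $\mathcal{A}$, so $\Pi(xa)=\Pi(x)a$ and $\Pi(ax)=a\Pi(x)$ for $a\in\mathcal{A}$. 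Since the $-\ii\omega$-eigenspace of $\sL^\dag$ lies inside $\mathcal{A}$ we have $\mathcal{P}_\omega^\dag=\mathcal{P}_\omega^\dag\Pi$, hence, for $a\in\mathcal{A}_{-\omega}$,
\begin{equation}
  \hat{\sL}_1^\dag(a)=\mathcal{P}_\omega^\dag\,\sL_1^\dag(a)=\mathcal{P}_\omega^\dag\,\Pi\!\big(\ii[H^{(1)},a]\big)=\mathcal{P}_\omega^\dag\big(\ii[\hat h,a]\big),\qquad \hat h:=\Pi\!\big(H^{(1)}\big)\in\mathcal{A},
\end{equation}
where $\hat h$ is Hermitian because $\Pi$ is $\ast$-preserving. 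Thus $\hat{\sL}_1^\dag$ acts on $\mathcal{A}_{-\omega}$ as the Hilbert--Schmidt-orthogonal compression to $\mathcal{A}_{-\omega}$ of the inner derivation $\ii[\hat h,\cdot]$, which is skew-adjoint for the Hilbert--Schmidt inner product; a compression of a skew-adjoint operator by an orthogonal projector is skew-adjoint, hence normal, hence diagonalizable. Therefore $\hat{\sL}_1^\dag$, and so $\hat{\sL}_1$, is diagonalizable, which forces $p=1$, i.e.\ $\lambda(s)=\ii\omega+\ii\lambda_1 s+\lambda_2 s^2+o(s^2)$, twice continuously differentiable in $s$.

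The main obstacle is this last step: turning the slogan ``a Hamiltonian perturbation merely detunes the peripheral rotation frequencies'' into the precise statement that the single-frequency compression $\hat{\sL}_1$ of $\mathrm{ad}_{H^{(1)}}$ is non-defective. This needs the $\ast$-algebra / conditional-expectation structure of the asymptotic subspace together with the check that the relevant compression retains skew-adjointness, and it is exactly this structure --- rather than any generic feature of analytic perturbations --- that breaks down once the $L_\mu$ are perturbed already at first order, accounting for the form of the hypothesis.
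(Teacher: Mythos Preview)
Your argument and the paper's share the same skeleton: both observe that the hypothesis forces $\sL_1=-\ii[H^{(1)},\cdot]$, then argue that the first-order reduced operator $\tilde{\sL}^{(1)}=\sP\sL_1\sP$ on the $\omega$-eigenspace is diagonalizable (semisimple), so that Kato's reduction can be iterated once more to give $p=1$. The paper's proof, however, is essentially a one-liner: it asserts that because $\sL^{(1)}$ ``generates unitary dynamics'', so does its compression $\tilde{\sL}^{(1)}$ on the $\omega$-RSM, and hence the latter has semisimple spectrum. You instead supply the missing justification for that ``consequently'': the Riesz projector $\sP$ is not Hilbert--Schmidt self-adjoint in general, so a compression of a skew-adjoint superoperator by it need not remain skew-adjoint. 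Your route through the $\ast$-algebra structure of the peripheral subspace and the $\rho_\infty$-conditional expectation $\Pi$ is exactly what is needed to see that, on the \emph{left} eigenspace $\mathcal{A}_{-\omega}$, the compression is in fact the Hilbert--Schmidt-orthogonal compression of an inner derivation by a Hermitian $\hat h$, hence skew-adjoint, hence diagonalizable. So your proof is not a different strategy so much as a rigorous completion of the paper's; what it buys is an honest reason why the step the paper takes for granted actually holds. One minor remark: your phrase ``identifies the exponent $p$ with the size of the Jordan block'' is slightly loose (the Puiseux index need not equal the block size in general), but your actual claim --- that diagonalizability of $\hat{\sL}_1$ forces $p=1$ --- is the correct one and is all you use.
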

The proof can be found in Appx. \ref{proof:Cor5} and follows from noticing that $\sL^{(1)} = -\ii[H^{(1)}, (\cdot)]$ generates unitary dynamics. This result should be contrasted with the ultra-low frequency case. Under these type of perturbations we see that time period of the oscillations does not grow as $s \ra 0$ and instead remains $\sO(1)$. We also observe that there are two orders of $s$ between the decay rate and oscillation frequency, demonstrating that these are more stable and thus easier to observe experimentally and more relevant for utilisation than the ultra-low frequency case.

\subsection{Analogy With Classical Synchronization}

Before continuing to the examples, we pause to discuss the relationship between quantum synchronization in our sense and classical synchronization. Although there is no well-defined classical limit for the quantum systems and the microscopic observables we study, certain analogies can be drawn.

Firstly, classical synchronization, by definition, requires stable limit cycles \cite{synchbook}. More specifically, suppose that a finite perturbation in the neighborhood of a limit cycle $\ave{O(t)}$ leads to a new trajectory $\ave{\tilde{O}(t)}$. The limit cycle is exponentially stable if there exists a finite $a>0$ such that $|\ave{\tilde{O}(t)} - \ave{O(t)}| \lesssim e^{-a t}$.
In our case, provided the conditions of Th. 2 hold, any perturbation of the state $\rho(t)+\delta \rho$ that does not change the value of the strong dynamical symmetry, i.e. $\Tr(\delta \rho A)=0$ renders the limit cycle of $\ave{O(t)}$ exponentially stable provided that the Jordan normal form is trivial. Otherwise, it is stable. This follows directly by linearity from Eqs.\eqref{eigensystem}, \eqref{formalsolution} and Th. 2. We have previously used this fact implicitly to guarantee synchronization for generic initial conditions. Perturbations for which $\Tr(\delta \rho A)\neq 0$ generically change the amplitude of the limit cycle due to the finite-dimensionality of the Hilbert space and linearity. This is an unavoidable implication of our results and constitutes an important fundamental difference between the linear time evolution of a quantum observable $O$ and the time evolution of a classical observable.

Secondly, stability to noise for classical synchronization follows directly from exponential stability - noise may be understood as a random series of perturbations. Quantum mechanically, the influence of fully generic noise is fundamentally different as it induces decoherence and relaxation to stationarity. We have shown in Th. 3 that quantum synchronization is stable to arbitrary noise/dissipation at least to the second-order in perturbation strength. Moreover, as we have shown, in the quantum case symmetry-selective (not arbitrary) noise is \emph{fundamentally necessary} to induce synchronization.

Thirdly, the stability of the frequency of classical synchronization is \emph{neutral}, which means that a perturbation can change the frequency, but in a way that neither grows nor decays in time \cite{synchbook}. In our case Th. 3 guarantees precisely this, and Cor. 4 shows that when a perturbation explicitly breaks the synchronization, the frequency is, counter-intuitively, one order more stable than to a perturbation that does not break synchronization. This elucidates quantum synchronization as a cooperative dynamical stabilization phenomenon analogous to classical synchronization.

\section{Proving absence of synchronization and persistent oscillations}
\label{sec:absence}

We have provided a general theory of quantum synchronization based on necessary and sufficient algebraic conditions that may naturally be applied to quantum many-body systems. The following question remains: how can one easily show an absence of synchronization in a quantum system? Apart from calculating the long-time dynamics or diagonalizing the Liouvillian, which may be analytically or even numerically intractable for an extensive system, we can use the theory we have developed to provide the following theorem.
\begin{thm}
If there is a full rank stationary state $\rho_\infty$ (i.e. invertible) and the commutant\footnote{The commutant of a set of linear operators on a Hilbert space $\mathcal{A} \subset \sB(\sH)$, denoted $\mathcal{A}'$, is the set of $O \in \sB(\sH)$ which commute with all $A \in \mathcal{A}$. Note that all multiplies of the identity trivially belong to the commutant of any set.} is proportional to the identity, $\{H,L_\mu,L_\mu^\dagger\}'=c \one$, then there are no purely imaginary eigenvalues of $\sL$ and hence no stable synchronization. If this holds in the leading order, there are no almost purely imaginary eigenvalues and no metastable synchronization. 
\end{thm}
As shown in Appx. \ref{proof:Thm4}, this is a consequence of the strong dynamical symmetry of Thm. 2 not being unitary for $\omega \ne 0$. The absence of such a non-trivial commutant is implied by $\{H,L_\mu,L_\mu^\dagger\}$ forming a complete algebra of $\sB(\sH)$. This can be shown straightforwardly in numerous cases. For instance, if we have a system of $N$ spin-$1/2$ and $L^+_k=\gamma_k \sigma^+_k$, $L^-_k=\gamma_k \sigma^-_k$, and arbitrary $H$, the map is unital ($\one$ is a full rank stationary state) and e.g. $[L^+_k,(L^+_k)^\dagger]=\sigma^z_k$ form a complete algebra of Pauli matrices on $\sB(\sH)$. Further, the results of Evans and Frigerio \cite{Evans, Frigerio1, Frigerio2, Spohn} tell us that a trivial commutant is equivalent to the stationary state being unique.

As a broad set of examples to which this idea can be applied, under the mild assumption of having a full rank NESS, any 1D spin-$1/2$ system with nearest-neighbour interaction undergoing Markovian dissipation modelled by on-site non-Hermitian Lindblad operators cannot have purely imaginary eigenvalues. This follows from our result and the construction of the complete algebra given in Sec. 2.1 of \cite{ProsenReviewMPS}. Note that the spin-$1/2$ example we give later does not have a full rank NESS.

\section{Examples}
\label{sec:examples}
To demonstrate the application of our theory and to showcase the results of Sections \ref{sec:purelyimaginary} and \ref{sec:almostpurelyimaginary} we now present two examples. The first is a straightforward demonstration of how to anti-synchronize two qubits using a third ancilla qubit. While being an interesting result in its own right, since it has previously been implied that two qubits cannot synchronize in the sense of \cite{quantumsynch2}, this example should provide a pedagogical explanation of how our theory works. We then move on to discuss the Hubbard model with spin-agnostic heating. This example has been used several times to study long-time non-stationarity in open quantum systems, and here we use it as a springboard to construct a family of more generalized models that will exhibit quantum synchronization. These generalized models are of interest for their relation to experiments involving particles of spin $S>1/2$. Note that there are further applications of our theory to analyze additional examples from existing literature in the Appendices.

\subsection{Perfectly Anti-Synchronizing Two Spin-1/2's}
\label{sec:antiexample}
\begin{figure}[t]
    \centering
    \includegraphics[width = 0.5 \columnwidth]{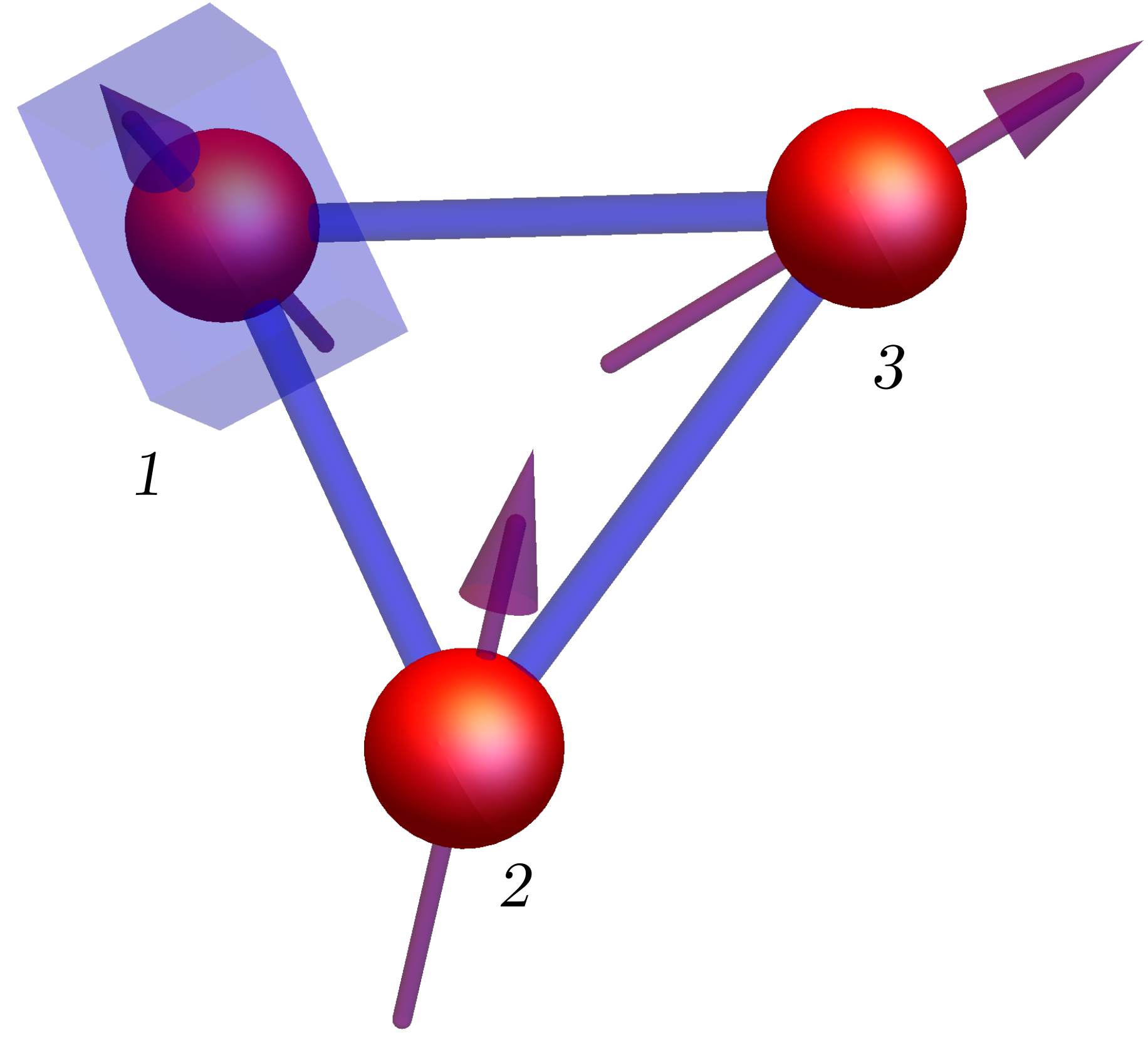}
    \caption{Anti-synchronizing two spin-$1/2$ (sites 2, 3) via a non-Markovian bath. The bath is the site 1 spin and the $L=\gamma \sigma^-_1$ loss term (the blue box).}
    \label{fig:system2}
\end{figure}
In \cite{quantumsynch2} it was argued that the smallest possible system that can be synchronized is a spin-1, and this was then extended in \cite{Roulet_2018} to two spin-1's. Here, using our theory, we complement this result by showing how it is, in fact, possible to \emph{anti-synchronize} two spin-1/2's through what is effectively a non-Markovian bath. Let $\sigma^\alpha_j, \alpha=+,-,z$ be the standard Pauli matrices on the site $j$. Take any 3-site Hamiltonian which non-trivially couples the three sites, is reflection symmetric around the first site, $P_{2,3}HP_{2,3}=H$, and further conserves total magnetization $S^z=\tfrac12 \sum_j \sigma^z_j$, $[H,S^z]=0$. The Hamiltonian can then be decomposed into blocks of conserved total magnetization. Furthermore, the block with eigenvalue $S^z=-1/2$, i.e. one spin-up and the other two spin-down, has one eigenstate, $\ket{\phi}$, with a node on site 1, i.e. $\sigma^-_1 \ket{\phi}=0$. To see this explicitly, consider the most general ansatz for an eigenstate with just a single excitation,
\begin{equation}
\ket{\psi}=\sum_j a_j \ket{j},
\end{equation}
where $j$ means there is a spin-up on-site $j$ and all other spins are down. Then to be an eigenstate with $H \ket{\psi}=E \ket{\psi}$ we see, 
\begin{dmath}
H(P_{2,3} \ket{\psi}) = P_{2,3}HP_{2,3} P_{2,3}\ket{\psi}=P_{2,3}H\ket{\psi} = E P_{2,3}\ket{\psi}=E (P_{2,3}\ket{\psi}).
\end{dmath}
Thus 
\begin{dmath}
    \ket{\phi}=\ket{\psi}-P_{2,3}\ket{\psi} = (a_2-a_3) \ket{2} + (a_3-a_2) \ket{3},
\end{dmath} 
is an eigenstate of H with energy $E$ and it has a node on site 1. Provided the Hamiltonian non-trivially couples the three sites, this state will be the unique eigenstate in the $S^z = -1/2$ sector with a node on site 1. We may exploit this by considering site 1 with a pure loss Lindblad $L=\gamma \sigma^-_1$ as the bath and the sites 2 and 3 as the system. This is illustrated in Fig.~\ref{fig:system2}.

\begin{figure}[!tb]
    \centering
    \includegraphics[width = 0.6\columnwidth]{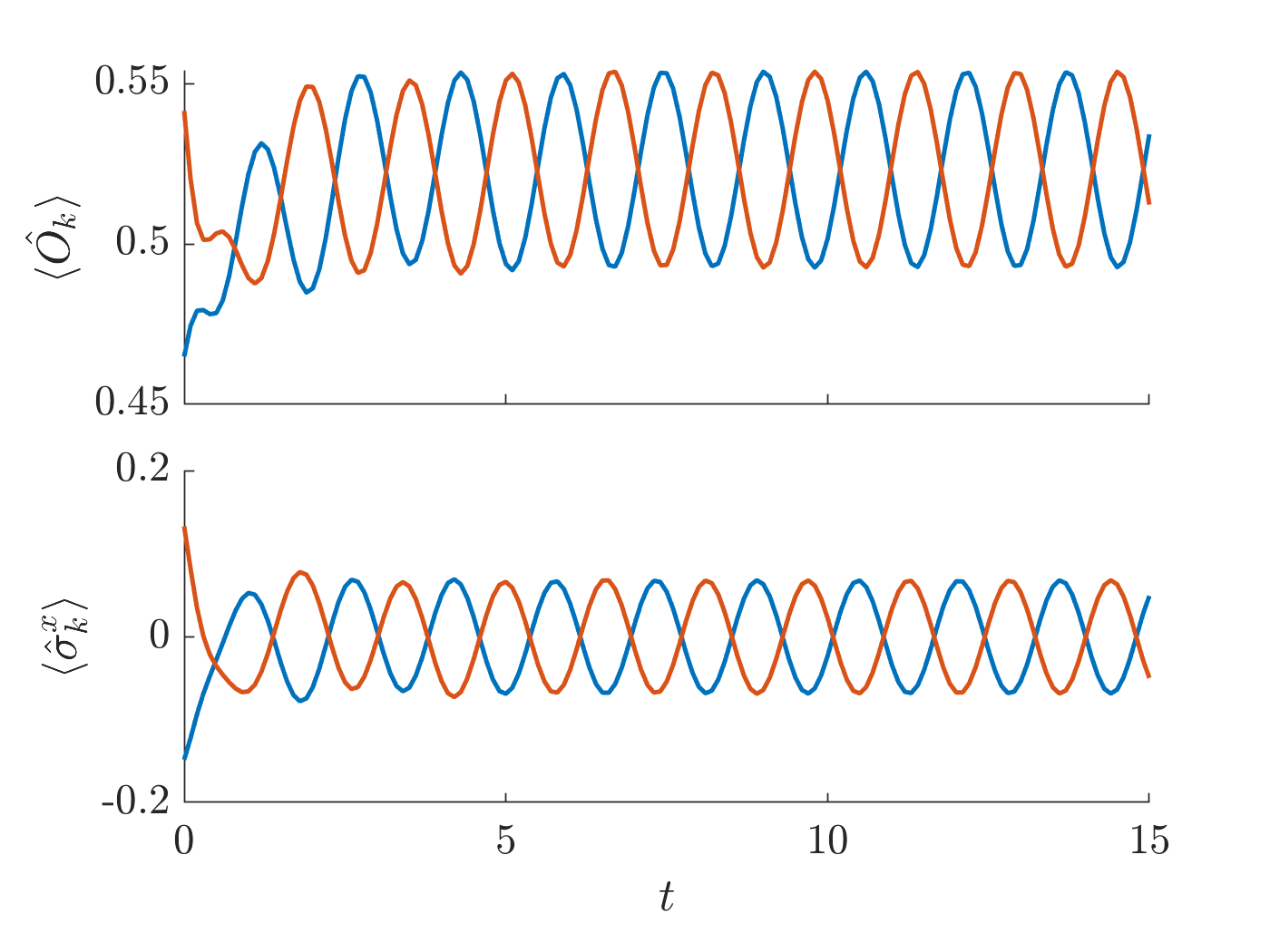}
    \caption{Evolution of the model described in Sec. \ref{sec:antiexample} with parameters $\Delta = 1$, $B = 0.5$ and $\gamma = 2$. The system is initially described by a randomly chosen density matrix. In the top plot we compare a randomly generated Hermitian observable on sites 2 and 3 (blue and red curves respectively). While they oscillate out of phase they have an offset equilibrium value which disrupts the perfect anti-synchronization. In the bottom plot we now compare the $\sigma^x$ observable on each site and see perfect anti-synchronization since $\ave{\sigma_1^z} \ra 0$ as $t \ra \infty$.}
    \label{fig:SpinHalfA-synch_observ}
\end{figure}

In this case we have two stationary states,
\begin{align}
\rho_{1,\infty}&=\ket{0,0,0}\bra{0,0,0}\\
\rho_{2,\infty}&=\frac{1}{2}(\ket{0,0,1}-\ket{0,1,0})(\bra{0,0,1}-\bra{0,1,0}).
\end{align}
Note that these are pure states and form a decoherence-free subspace \cite{Lidar}. The $A$ operator satisfying Th. 1 is,
\begin{equation}
A=(\ket{0,0,1}-\ket{0,1,0})\bra{0,0,0}
\end{equation}
The frequency $\omega$ will depend on the specific choice of the Hamiltonian. Taking, for example, the XXZ spin chain,
\begin{equation}
H=\sum_{j=1}^3 \sigma^+_j \sigma^-_{j+1}+\sigma^-_j \sigma^+_{j+1}+ \Delta \sigma^z_j \sigma^z_{j+1}+ B \sigma^z_j,
\end{equation}
with implied periodic boundary conditions, we have $\omega= -1 + 2B - 4 \Delta$. Note that there are persistent oscillations even in the absence of an external field $B=0$. In that case the interaction term $\sigma^z_j \sigma^z_{j+1}$ (corresponding in the Wigner-Jordan picture to a quartic term) picks out a natural synchronization frequency. Since $A$ is antisymmetric, $P_{2,3} AP_{2,3}=-A$, the oscillating coherences will also be antisymmetric. The symmetric stationary state $\rho_{1,\infty}$ spoils anti-synchronization between site 2 and 3 by offsetting the equilibrium value. However, an observable that is zero in this state $\tr\left(O_{k} \rho_{1,\infty}\right)=0$, $k = 2,3$, will be robustly and stably anti-synchronized $\lim_{t \to \infty}\ave{O_2(t)}=-\lim_{t \to \infty}\ave{O_3(t)}$ with frequency $\omega$. A possible choice is the transverse spin $O_{k}=\sigma^x_{k}$, $k = 2,3$. This is demonstrated in Figure \ref{fig:SpinHalfA-synch_observ}.

\begin{figure}[!t]F
    \centering
    \includegraphics[width = 0.95\columnwidth]{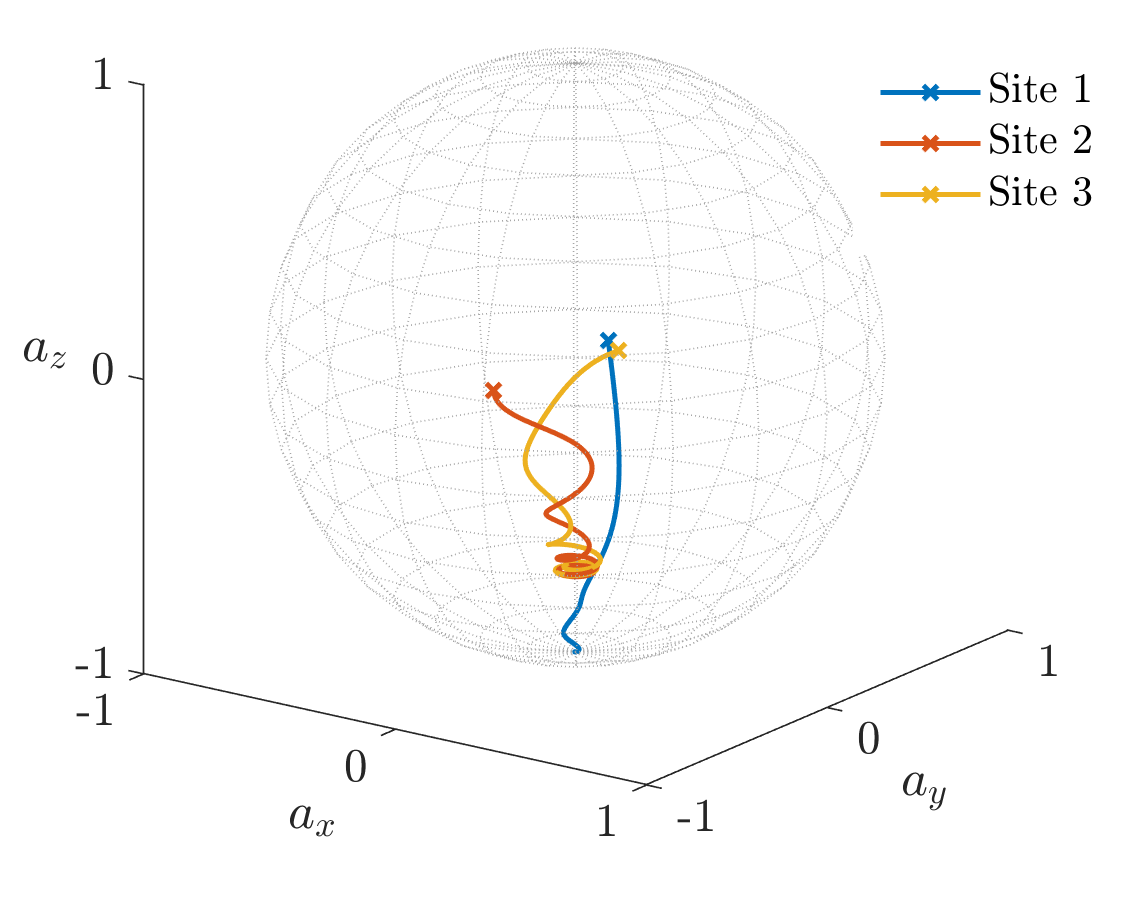}
    \caption{Bloch sphere representation of the evolution of the model described in Sec. \ref{sec:antiexample} with parameters $\Delta = 1$, $B = 0.5$ and $\gamma = 2$ and a randomly chosen initial state. We define the reduced density matrix for each site by taking the partial trace over the other sites, $\rho_k = \Tr_{A,B \ne k} (\rho)$. We then find and plot the corresponding Bloch sphere representation, $\vec{a}^{(k)}$  for the reduced states as $\rho_k = \tfrac12 (\1 + \vec{a}^{(k)} \cdot \vec{\sigma})$ where $\vec{\sigma}$ are the usual Pauli matrices. The initial point of each trajectory is marked with a cross. We see the second and third sites reach a limit cycle which they orbit perfectly out of phase, while the first site rapidly decays to the $\vec{a} = (0,0,1)$ point on the Bloch sphere. This demonstrates the anti-synchronization between only sites 2 and 3. Note that since the reduced states are note pure the trajectories live within the sphere.}
    \label{fig:SpinHalfA-synch_phaseSpace}
\end{figure}

Using this example, we can further make a link to previous studies of quantum synchronization, which focused on limit cycles in phase space. In Figure \ref{fig:SpinHalfA-synch_phaseSpace} we show the limit cycles of the second and third spins in the phase space defined by the usual Bloch sphere representation of a qubit. We see that the two limit cycles are perfectly out of phase, as expected for anti-synchronization. We can also see clearly that the first site is not synchronized to either the first or second site since its phase space trajectory quickly decays to a fixed point rather than the common limit cycle of sites 2 and 3.

\subsection{Lattice Models with Translationally Invariant non-Abelian Symmetries: Stable and Metastable Synchronization}\label{sec:examples_FermiHubbard}

We will now proceed to explore examples of many-body models that exhibit robust and stable/meta-stable synchronization between every site due to their symmetry structure. We first discuss the Fermi-Hubbard model as a base case and then explore its generalizations to multi-band and higher spin versions before commenting on how our theory can be applied to existing experimental set-ups. We will also explain how systems can be engineered using the theory in this work to create long-lived synchronization.

\subsubsection{Fermi-Hubbard Model with Spin Agnostic Heating}

The standard Fermi-Hubbard model has been previously shown to exhibit synchronization in the presence of spin agnostic heating \cite{Buca_2019,quantumsynch, Booker_2020}. We can now understand the appearance of synchronization in this model as a consequence of Cor. 3 as follows.

Consider a generalized $N$-site, spin-1/2 Fermi-Hubbard model on any bipartite lattice,
\begin{equation}\label{eq:HubbardHamiltonain}
    \begin{gathered}
        H = - \sum_{\ave{i,j}} \sum_{s \in \{\uparrow, \downarrow\}} (c_{i,s}^\dag c_{j,s} + \text{h.c}) +   \sum_{j} U_j n_{j,\uparrow}n_{j,\downarrow} \\  +\sum_{j}\epsilon_j  n_j + \frac{B_{j}}{2}(n_{j,\uparrow}-n_{j, \downarrow}) ,
    \end{gathered}
\end{equation}
where $c_{j,s}$ annihilates a fermion of spin $s \in \{\uparrow, \downarrow\}$ on site $j$ and $\ave{i,j}$ denotes nearest neighbor sites. The particle number operators are $n_{j,s} = c^\dag_{j,s}c_{j,s}$, $\ n_j = \sum_s n_{j,s}$. This model includes on-site interactions $U_{j}$, a potential $\epsilon_j$ and an inhomogenous magnetic field $B_{j,s}$ in the $z$-direction. We further include dominant standard 2-body loss, gain and dephasing processes, naturally realized in optical lattice setups \cite{Daley,twobody1,twobody2, coldatom1,coldatom2,coldatom3},
\begin{align}
L^{-}_{j}&=\gamma^-_{j} c_{j, \uparrow}c_{j,  \downarrow}\\
L^+_{j}&=\gamma^+_{j} c^\dagger_{j,  \uparrow}c^\dagger_{j, \downarrow}\\
L^z_j&=\gamma^z_j n_j,
\end{align}
with $j=1,\ldots N$.

From previous studies relating to time crystals, \cite{Buca_2019,Booker_2020,Chinzei} it is now well appreciated that in a homogenous magnetic field $B_j=B$ the total spin raising operator $S^+=\sum_j c^\dagger_{j,\uparrow}c_{j,\downarrow}$ is a strong dynamical symmetry with,
\begin{equation}
    [H,S^+]=B S^+,\ [L^\alpha_j,S^+]=0.
\end{equation} 
For $\gamma^-_j=\gamma^+_j$ we have 
\begin{equation}
    \sum_{\alpha,j} [L_j^\alpha,(L^\alpha_j)^\dagger]=0,
\end{equation}
so we find the map is unital. Provided that $S^+$ and its conjugate $S^-$ are the only operator satisfying the conditions of Th. 2, as is generically the case unless $\gamma^-_j=\gamma^+_j = \gamma_j^z = 0$, we may apply Cor. 3 and conclude that every site is stably, robustly synchronized with every other site because $S^+$ and $S^-$ have complete permutation invariance. We emphasise that the on-site potentials, $\varepsilon_j$, need not be the same in this case. This explicitly demonstrates how our theory can be applied non-perturbatively to non-homogeneous systems.

Metastability can be achieved by, for example, allowing detuning between the on-site magnetic field and considering perturbations from the average value of $B_j$ $\bar{B}:=1/N\sum_j B_j$, i.e. $s_j=B_j-\bar{B}$. As per Cor. 4 the synchronization will be stable to second order, i.e. $\sim \max(s_j^2)$. 

This model hints at a general framework to construct more elaborate examples of systems that exhibit quantum synchronization based on their symmetries. We give details on this general framework in Appx. \ref{sec:generalmodelframework}, which can in principle be applied to a broad class of models, including fermionic tensor models relevant for high energy physics and gauge/gravity dualities, \cite{Popov2}, and $U(2N)$ fermionic oscillators \cite{scarsdynsym6}. We will now discuss the case of multi-band, and $SU(N)$ Hubbard models, which have previously been explored in cold atoms \cite{sun1,sun2,sun3,sun4}.

\subsection{Generalised Fermi-Hubbard Model with $SU(N)$ Symmetry and Experimental Applications} \label{sec:generalised_SUN_examples}
For concreteness, we will first consider the model studied in \cite{sun4} describing fermionic alkaline-earth atoms in an optical lattice. These atoms are often studied as they have a meta-stable ${}^3P_0$ excited state, which is coupled to the ${}^1S_0$ ground state through an ultra-narrow doubly-forbidden transition \cite{Boyd_2007}. We will refer to these levels as $g$ (ground) and $e$ (excited). We will further label the nuclear Zeeman levels as $m = -I, \dots , I$ on-site $i$, where $N=2I+1$ is the total number of Zeeman levels of the atoms. For example, ${}^{87}\text{Sr}$ has $N=10$. It is further known that in these atoms, the nuclear spin is almost completely decoupled from the electronic angular momentum in the two states $\{g, e\}$ \cite{Boyd_2007}.
Thus to a good level of approximation, one can describe a system of these atoms in an optical trap using a two-orbital single-band Hubbard Hamiltonian \cite{sun4},
\begin{equation}\label{eq:sunHamiltonian}
    \begin{aligned}
        H &= - \sum_{\ave{i,j}}\sum_{s, m} J_s (c_{i,s,m}^\dag c_{j,s,m} + c_{j,s,m}^\dag c_{i,s,m})  \\ 
        &\phantom{=}+   \sum_{j,s} U_{ss} n_{j,s}(n_{j,s}-1) +V\sum_{j}n_{j,g}n_{j, e} \\ 
        &\phantom{=}+ V_{ex}\sum_{j,m, m'} c^\dagger_{j,g, m}c^\dagger_{j,e, m'}c_{j,g,m'}c_{j,e,m}.
    \end{aligned}
\end{equation}
Here the $c_{i, s, m}$ operator annihilates a state with nuclear spin $m$ and electronic orbital state $s\in\{g, e\}$ on site $i$. Further $n_{j, s} = \sum_{m} c_{j, s, m}^\dag c_{j, s, m}$ counts the number of atoms with electronic orbital state $s$ on site $j$. This model assumes that the scattering and trapping potential are independent of nuclear spin, which gives rise to a large $SU(N)$ symmetry.

Defining the nuclear spin permutation operators as 
\begin{equation}
    S_n^m = \sum_{j, s} c_{j, s, n}^\dag c_{j, s, m},
\end{equation}
which obey the $SU(N)$ algebra, 
\begin{equation}
    [S_n^m, S_q^p] = \delta_{mq}S_n^p - \delta_{np}S_q^m,
\end{equation}
we find that the Hamiltonian in equation \eqref{eq:sunHamiltonian} has full $SU(N)$ symmetry, 
\begin{equation}
    [H, S_n^m] = 0 \ \forall n, m.
\end{equation}

We can also introduce the electron orbital operators as, 
\begin{equation}
    T^\alpha = \sum_{j, s, s', m} c_{j, s, m}^\dag \sigma_{s s'}^\alpha c_{j, s', m},
\end{equation}
where $\alpha =x, y, z$ and $\sigma^\alpha$ are the Pauli matrices in the $g, e$ basis. These operators obey the usual $SU(2)$ algebra and are independent of the nuclear spin operators, $[T^\alpha, S_n^m] = 0$. In the specific case $J_e = J_g, \ U_{ee} = U_{gg} = V, \ V_{ex}= 0$ these are also a full symmetry of the system, $[H, T^\alpha]=0$. 

If we now introduce dephasing of the nuclear spin levels via the on-site Lindblad operators 
\begin{equation}
    L^{(m)}_j = \gamma_j^{(m)} S_m^m ,
\end{equation}
we break the $SU(N)$ nuclear-spin symmetry, while the electronic orbital symmetry is promoted to a strong symmetry \cite{BucaProsen} since the operators $T^\alpha$ all commute with the Lindblad operators. This may be accomplished by scattering with incoherent light that does not distinguish between various energy levels of the internal degree of freedom $s$ \cite{Daley, Daley2}. In particular, this represents the fact that the light has transmitted information to the environment about the location of an atom, but not its internal degree of freedom that remains coherent. The rates $\gamma^{(m)}_j$ can be made larger by introducing more light scattering.

In the presence of a field that only couples to the electronic degrees of freedom, 
\begin{equation}
    H_{\text{field}} = \omega T^z,
\end{equation}
the electronic orbital $SU(2)$ symmetry is broken to a dynamical symmetry with frequency $\omega$. Since the $T^\pm$ operators are translationally independent, all the conditions of Cor. 3 are satisfied to guarantee robust, stable synchronization between all pairs of sites.

In realistic set-ups, this strict symmetry structure is unlikely to be perfectly maintained. In particular certain cold atom species may have a finite exchange term $V_{ex}$ between the spin levels, e.g. \cite{abeln2020interorbital, Cornish}. Thus we would expect metastable synchronization to be present when the conditions $J_e = J_g, \ U_{ee} = U_{gg} = V, \ V_{ex}= 0$ do not hold perfectly or there are some inhomogeneities in the additional field. More generally, we could consider cases of interacting systems where the nuclear and electronic degrees of freedom are coupled through scattering processes, such as in the experiment conducted by \cite{Sengstock_2012}. Generically such systems will lack the symmetries required for robust, stable quantum synchronization, and thus the timescale over which synchronization can be maintained will be a measure of how much the symmetries are broken through these imperfections. 

One possible approach for achieving metastable synchronization in more complex systems is to proceed as follows. Ignoring interactions, we can diagonalize the single site Hamiltonians to obtain eigenstates $\ket{n}_j$ and energy levels $E_{n, j}$ for each site, $j$. These give trivial onsite dynamical symmetries $A_{j}^{n, m} = \ket{n}\bra{m}_j$ for $E_{n, j} \ne E_{m, j}$. We then introduce dephasing operators on each site which break all but a few of these dynamical symmetries. When we reintroduce interactions between neighbouring sites, we search for translationally invariant linear combinations of the remaining on-site strong dynamical symmetries, which are as close as possible to dynamical symmetries of the whole model. These linear combinations can, in principle, be further optimized by adjusting the experimental parameters of the system. These considerations emphasize the importance of our theory when engineering long-lived synchronization in more complex systems. Our theory tells us that to produce quantum systems that exhibit long-lived synchronization, the symmetries must be carefully controlled.

Another aspect of our theory that can be applied to these more complicated systems, even if they do not admit the symmetries required for synchronization, is to give the scalings of decay rates and frequencies of the meta-stable oscillations. As an example, we consider the experimental set-up investigated by \cite{Sengstock_2012, Sengstock_2014}, where fermionic atoms with nuclear spin-9/2 were confined to a deep optical lattice. In their experiment, they initialized the system so that every site contained two atoms, one with $m = 9/2$ and the second with $m=1/2$, and observed oscillations across the whole system between this initial state and the state in which the two atoms had spins $m = 7/2$ and $m=3/2$. These oscillations were most pronounced in the limit where the optical lattice was very deep, corresponding to minimal hopping. It is known that deep optical lattices can often cause dephasing processes to occur, so it is likely that in the limit of no hopping, the system also experiences strong dephasing processes. Thus we apply the results of Sec. \ref{sec:zeno} to predict that as the trap becomes shallower, the frequency should vary as, 
\begin{equation}
    \omega = \omega_0 + \frac{\lambda}{\gamma} + o(1/\gamma),
\end{equation}
for some constant $\lambda$, where $\gamma = U/J$ is the ratio between the interaction strength and the hopping. This ratio is known to be related to the lattice depth in a 1D sinusoidal lattice in the deep lattice limit as \cite{BlochReview},
\begin{equation}
    \gamma \sim \exp( \sqrt{V_0}),
\end{equation}
for $V_0$ the lattice depth, measured in units of $E_r$. The 1D approximation is valid because the other two dimensions of the optical lattice are kept at very deep values $V_\perp=35 E_r$ \cite{Sengstock_2012}. Thus we obtain 
\begin{equation}
    \omega(V_0) = \omega_0 + \lambda \exp(- \sqrt{V_0}).
\end{equation}
Our simple results appear to be in better agreement with the experimental measurements than the numerical simulations in \cite{Sengstock_2012}. 

\section{Conclusions}

We have introduced a general theory of quantum synchronization in many-body systems without well-defined semi-classical limits. So far, although a subject of great interest, quantum synchronization has been studied on an ad-hoc basis without a systematic framework. The advantage of our theory is that it provides an algebraic framework based on dynamical symmetries from which to systematically study quantum synchronization in many-body systems. Symmetries are useful in quantum physics, especially many-body physics, as they allow for exact results and statements without resorting to challenging and often infeasible analytical or numerical computations. Our framework also allows for exact solutions of all such dynamics in terms of this dynamical symmetry algebra.

We introduced definitions of stable and metastable synchronization, which align with the notion of identical synchronization in classical dynamical systems. Stable synchronization lasts for an infinite amount of time, whereas metastable lasts only for very long times compared to the system's characteristic timescale and is the one usually studied in the literature. In addition, we defined robust synchronization to capture the robustness of the synchronized behaviour to the initial conditions. This is important to identify systems where the synchronization process is performed by some internal mechanism rather than fine-tuning of the initial state. We further divided the cases of metastable synchronization into those for which the observable dynamics take place on relevant time scales or not. We have demonstrated the synchronization is associated with quantum coherence.

We provided several examples, both new ones and from existing literature, and have shown how to use our theory to understand and extend them. Curiously, even though it was implied that the smallest system that can be synchronized is a spin-1 \cite{quantumsynch2}, we have used our theory to find an example of two spin-$1/2$'s that anti-synchronize through interaction with a non-Markovian bath without contradicting the results of \cite{quantumsynch2}. We then studied the Fermi-Hubbard model and its generalizations to explore how to generate models which can be expected to exhibit quantum synchronization. We also discussed how these results relate to experimental set-ups and demonstrated why robust quantum synchronization requires careful engineering and controlling experimental imperfections. Apart from higher symmetry fermionic quantum gases we discussed, similar considerations can also be directly applied to other complex cold atom systems with high degrees of symmetry and a large number of degrees of freedom, such as quantum spinor gases \cite{spinor1,spinor2,spinor3}. This demonstrates how our theory provides a guideline for achieving synchronization that would be difficult to predict without using an algebraic perspective.

Our results provide several illuminating insights which help generate models that exhibit synchronization. The first is that the most straightforward way to synchronize quantum many-body systems is to use unital maps, e.g. dephasing. This is because we may then reduce the problem to eliminating dynamical symmetries that lack the required permutation symmetry structure for synchronization. Our results also indicate the importance of interactions. For instance, if a model has only quadratic 'interacting' terms corresponding to hopping or on-site fields, dynamical symmetries that are not translationally invariant are possible \cite{quantumsynch}. In particular free-fermion models admit a host of non-translationally invariant conservation laws $[H, Q_k]=0$, and these ruin the translation invariance in the long-time limit \cite{EsslerReview}. Adding interactions generally leaves only translationally invariant $A$ operators.

We also note that there has been recent debate about the related phenomenon of limit cycles in driven-dissipative systems with finite local Hilbert space dimensions (in particular spin-$1/2$ systems).  Mean-field methods find evidence of limit cycles (e.g. \cite{Sarang}), whereas including quantum correlations with some numerical methods seems to indicate an absence of limit cycle phases in these models (e.g. \cite{JuanJose,owen2018quantum}). However, other methods \cite{Landa} accounting for quantum correlations do show limit cycles, making the issue controversial. We note that limit cycles correspond to persistent dynamics and purely imaginary eigenvalues of the corresponding quantum Liouvillians. Thus our general algebraic theory should apply to these systems and can be used to prove either presence or absence of limit cycles. 

One direct generalization of our work should be to quantum Liouvillians with an explicit time dependence that would allow for the study of synchronization to an external periodic drive. In this case, one could also study discrete time translation symmetry breaking and discrete time crystals under dissipation \cite{LazaridesDissipation,Riera_Campeny_2020}, which would correspond to purely imaginary eigenvalues in the Floquet Liouvillian or equivalently to eigenvalues of the corresponding propagator lying on the unit circle. A further direction that should be explored is the synchronisation of sites which have differing local Hilbert space dimension, such as synchronising a spin-1/2 with a spin-1. Our framework should be applied in the future for generating new models that have synchronization. Finally, extensions to more general types of synchronization such as amplitude envelope synchronization and those models with infinite-dimensional Hilbert spaces should also follow.

\section*{Acknowledgements}
We thank C. Bruder and G. L. Giorgi for useful discussions.

\paragraph{Author contributions}
BB conceived the research and stated and proved the main theorems. CB stated and proved some of the theorems and wrote the majority of the manuscript. DJ suggested experimentally relevant examples. All authors contributed to discussions and writing of the manuscript. 

\paragraph{Funding information}
We acknowledge funding from EPSRC programme grant

\noindent EP/P009565/1, EPSRC National Quantum Technology Hub in Networked Quantum   Information Technology (EP/M013243/1), and the European Research Council under the European Union's Seventh Framework Programme (FP7/2007-2013)/ERC Grant Agreement no. 319286, Q-MAC. The work of DJ was partly supported by the Deutsche Forschungsgemeinschaft (DFG, German Research Foundation) via Research Unit FOR 2414 under project number 277974659 and via the Cluster of Excellence 'CUI: Advanced Imaging of Matter'—EXC 2056—under project number 390715994.

\appendix
\section*{Appendices}
We now provide proofs of the results in Sections \ref{sec:purelyimaginary} and \ref{sec:almostpurelyimaginary} along with additional analysis of previously studied examples of quantum synchronization and a discussion of how to construct more elaborate models which exhibit quantum synchronization using symmetry structures.

\section{Proof of Theorem 1} \label{proof:Thm1}
\begin{thm*}
    The following condition is necessary and sufficient for the existence of an eigenstate $\rho$ with purely imaginary eigenvalue $\ii \lambda$, $\sL \rho = \ii \lambda \rho, \ \lambda \in \R.$
    
    \begin{center}
       We have $\rho = A \rho_\infty$, where $\rho_\infty$ is a NESS and $A$ is a unitary operator which obey,
        \begin{gather}
            [L_\mu , A] \rho_\infty =0, \label{eq:Thm1.2appendix} \\
            \left(-\ii[H, A] - \sum_\mu [L_\mu^\dag, A]L_\mu\right)\rho_\infty = \ii \lambda A \rho_\infty , \ \lambda \in \R \label{eq:Thm1.3appendix}.
        \end{gather}
    \end{center}
\end{thm*}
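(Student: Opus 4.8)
The plan is to establish the two implications by quite different means: sufficiency by a direct algebraic manipulation that ``commutes $A$ through $\sL$,'' and necessity by a polar decomposition of $\rho$ together with the structure theory of the asymptotic subspace of a finite-dimensional Liouvillian.

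\textbf{Sufficiency.} Suppose $\rho = A\rho_\infty$, where $\rho_\infty$ is a NESS and $A$ --- not assumed unitary --- satisfies \eqref{eq:Thm1.2appendix}--\eqref{eq:Thm1.3appendix}. The key step is to use $[L_\mu,A]\rho_\infty = 0$ to replace $L_\mu A\rho_\infty$ by $AL_\mu\rho_\infty$ wherever it occurs in $\sL[A\rho_\infty]$, and then to expand each commutator with $H$ and each product with $L_\mu^\dagger$. This yields the operator identity
\begin{equation}
    \sL[A\rho_\infty] = A\,\sL[\rho_\infty] + \left(-\ii[H,A] - \sum_\mu [L_\mu^\dagger, A] L_\mu\right)\rho_\infty .
\end{equation}
Since $\sL[\rho_\infty]=0$ and $A$ obeys \eqref{eq:Thm1.3appendix}, the right-hand side is $\ii\lambda A\rho_\infty = \ii\lambda\rho$, so $\rho$ is an eigenstate with eigenvalue $\ii\lambda$. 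Note this direction uses neither unitarity of $A$ nor finite-dimensionality of $\sH$.

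\textbf{Necessity.} Now assume $\sL[\rho] = \ii\lambda\rho$ with $\lambda\in\R$ and $\rho\neq 0$, so $e^{t\sL}[\rho] = e^{\ii\lambda t}\rho$ and hence $\|e^{t\sL}[\rho]\|_1 = \|\rho\|_1$ for all $t$. Because $e^{t\sL}$ is trace-norm contractive, this forces $\rho$ into the peripheral (asymptotic) subspace, on which $e^{t\sL}$ acts invertibly. Take the polar decomposition $\rho = AR$ with $R=(\rho^\dagger\rho)^{1/2}\ge 0$ and $A$ a partial isometry from $\mathrm{supp}\,R$ onto $\mathrm{range}\,\rho$; since $\mathrm{rank}\,\rho = \mathrm{rank}\,\rho^\dagger$ these subspaces have equal dimension, so $A$ extends to a unitary on $\sH$. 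As $R\neq 0$, set $\rho_\infty := R/\Tr R$, a genuine density operator; absorbing the positive scalar $\Tr R$ (irrelevant since eigenvectors are defined only up to scale and \eqref{eq:Thm1.3appendix} is linear in $\rho_\infty$) we may write $\rho = A\rho_\infty$.

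It remains to show that $R$ is stationary, $\sL[R]=0$ (so that $\rho_\infty$ is a proper NESS), and that $[L_\mu,A]R=0$ for all $\mu$, which is \eqref{eq:Thm1.2appendix}. For this I would invoke the structure of the asymptotic subspace of a finite-dimensional Lindbladian --- decoherence-free subspaces and noiseless subsystems --- following \cite{BN,Burgarth_2013,Albert_2016,albert_thesis,Lidar}: on the peripheral subspace $e^{t\sL}$ restricts to a unitary rotation inside a block-decomposable subalgebra that carries a stationary state in each block, on which the jump operators act trivially up to the block grading. Feeding this back, the positive part $R$ lies in the stationary sector, giving $\sL[R]=0$, while the unitary part $A$ implements the rotation and therefore commutes with each $L_\mu$ on $\mathrm{supp}\,R$, i.e. $[L_\mu,A]R=0$. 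I expect this reduction --- from the spectral hypothesis to the two algebraic constraints, via the asymptotic-subspace theorem --- to be the main obstacle; the remainder is bookkeeping, since with $\sL[R]=0$ and $[L_\mu,A]R=0$ the same operator identity used in the sufficiency direction (which needed only $[L_\mu,A]R=0$) gives
\begin{equation}
    \ii\lambda A R = \sL[AR] = \left(-\ii[H,A] - \sum_\mu [L_\mu^\dagger, A] L_\mu\right)R ,
\end{equation}
which after dividing by $\Tr R$ is exactly \eqref{eq:Thm1.3appendix}. Finite-dimensionality enters only here, through the global polar decomposition and the asymptotic-subspace structure, explaining why necessity (unlike sufficiency) can fail for infinite-dimensional systems.
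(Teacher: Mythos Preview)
Your sufficiency argument and the overall skeleton of the necessity argument (polar decomposition $\rho=AR$, identify $R$ as a NESS, derive \eqref{eq:Thm1.3appendix} from the same operator identity) match the paper's. Where you diverge is precisely at the step you flag as ``the main obstacle'': you appeal to the abstract block structure of the asymptotic subspace to conclude $\sL[R]=0$ and $[L_\mu,A]R=0$, whereas the paper does something more concrete and self-contained. It invokes Theorem~5 of \cite{Burgarth_2013} twice: once to get the polar decomposition with $\sT_t R=R$ (hence $\sL[R]=0$), and once more to obtain the Kraus-operator intertwining relation $M_k(t)A\rho_\infty=e^{\ii\lambda t}AM_k(t)\rho_\infty$. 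Differentiating this and using $\sum_k M_k^\dagger M_k=\one$ yields $\sL^\dagger[A]\rho_\infty=\ii\lambda A\rho_\infty$ and its conjugate. The key move is then to sandwich the \emph{dissipation function}
\[
D[A]=\sL^\dagger[A^\dagger A]-\sL^\dagger[A^\dagger]A-A^\dagger\sL^\dagger[A]=\sum_\mu[L_\mu,A]^\dagger[L_\mu,A]
\]
between two copies of $\rho_\infty$: unitarity of $A$ and the two adjoint identities force $\rho_\infty D[A]\rho_\infty=0$, and positivity of the sum gives $[L_\mu,A]\rho_\infty=0$ for each $\mu$ individually.

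Your structural route could be made to work --- the decoherence-free/noiseless-subsystem decomposition does encode that the jump operators act trivially on the rotating part --- but it imports a heavier machinery and, as you wrote it, the inference ``$A$ implements the rotation and therefore commutes with each $L_\mu$ on $\mathrm{supp}\,R$'' is asserted rather than derived. The paper's dissipation-function argument is the missing concrete ingredient; it is short, uses only unitarity of $A$ and positivity, and is worth knowing on its own.
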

\begin{proof}
    Sufficiency can be checked directly by calculating $\sL[\rho] = \sL[A \rho_\infty]$. To prove the converse we first observe that $\rho$ is also an eigenstate of the corresponding quantum channel $\sT_t = e^{t\sL}$ with eigenvalue $e^{\ii \lambda t}$. Since this lies on the unit circle we may apply Theorem 5 of \cite{Burgarth_2013} to deduce that $\rho$ admits a polar decomposition of the form $\rho = AR$ where $A$ is unitary and $R$ is positive semi-definite with $\sT_t R = R$. In particular this implies $\sL [R] = 0$ so that $R$ is a steady-state of $\sL$ which we now call $\rho_\infty = R$. Note that this also implies $\rho_\infty$ is Hermitian and may be scaled to have unit trace.

Writing the channel in Kraus form as 
\begin{equation}
    \sT_t[x] = \sum_k M_k(t) x M_k^\dag(t),
\end{equation}
we can apply Theorem 5 of \cite{Burgarth_2013} again to find 
\begin{equation}\label{eq:KrausResultappendix}
    M_k(t) A \rho_\infty = e^{\ii \lambda t} AM_k(t) \rho_\infty.
\end{equation}
Now note that the adjoint channel is given by $\sT_t^\dag [x] =\sum_k M_k(t)^\dag x M_k(t)$, and so we can compute the adjoint Liouvillian as,
\begin{equation}
    \begin{aligned}
        \sL^\dag[x] &= \dd{\sT_t^\dag}{t}|_{t = 0} \\ 
        &= \sum_k \dot{M}_k^\dag(0)x M_k(0) + M^\dag_k(0)x \dot{M}_k(0).
    \end{aligned}
\end{equation}
Using the of derivative equation \eqref{eq:KrausResultappendix} and the requirement that Kraus operators satisfy $\sum_k M_k^\dag M_k = \1$ we can calculate,
\begin{equation}
    \begin{aligned}
        \sL^\dag [A] \rho_\infty &= \sum_k \dot{M}_k^\dag(0)A M_k(0)\rho_\infty \\ & \phantom{\sum_k \dot{M}_k} + M^\dag_k(0)A \dot{M}_k(0)\rho_\infty \\
        &= \sum_k \dot{M}_k^\dag(0) M_k(0) A \rho_\infty \\ &\phantom{\sum_k \dot{M}_k} + M_k^\dag(0) \dot{M}_k(0)A \rho_\infty \\
        & \phantom{\sum_k \dot{M}_k(0)A} + \ii \lambda M_k^\dag(0) A M_k(0) \rho_\infty \\
        &= \sum_k \dd{}{t} \left[M_k^\dag(t)M_k(t)\right]_{t = 0} A \rho_\infty \\
        & \phantom{\sum_k \dot{M}_k} + \ii \lambda \sum_k M_k^\dag(0)M_k(0) A \rho_\infty \\
        &= \ii \lambda A\rho_\infty.
    \end{aligned}
\end{equation}
A similar calculation using the conjugate equations, noting that $\rho_\infty$ is Hermitian, yields
\begin{equation}
    \rho_\infty \sL^\dag[A^\dag] = - \ii \lambda \rho_\infty A^\dag.
\end{equation}
We will now also introduce the \emph{dissipation function} \cite{Lindblad_1976, albert_thesis}, defined for any operator $x$ as 
\begin{equation}
    \begin{aligned}
        D[x] &= \sL^\dag[x^\dag x] - \sL^\dag[x^\dag] x - x^\dag \sL^\dag[x] \\
        &= \sum_\mu [L_\mu, x]^\dag[L_\mu, x].
    \end{aligned}
\end{equation}
Then by unitarity of $A$ and the above results for $\sL^\dag[A], \ \sL^\dag[A^\dag]$ we compute
\begin{equation}
    0=\rho_\infty D[A]\rho_\infty = \sum_\mu \rho_\infty[L_\mu, A]^\dag[L_\mu,A] \rho_\infty.
\end{equation}
Since this sum is positive definite we must have $[L_\mu, A] \rho_\infty = 0 \ \forall \mu$. We now compute $\sL[A \rho_\infty] = \ii \lambda A \rho_\infty$ using $\sL[\rho_\infty] = 0$ and $[L_\mu, A] \rho_\infty = 0$ to obtain 
\begin{equation}\label{eq:thm1conditionappendix}
    \left(-\ii[H, A] - \sum_\mu [L_\mu^\dag, A]L_\mu\right)\rho_\infty = \ii \lambda A \rho_\infty,
\end{equation} 
as required.
\end{proof}

\section{Proof of Corollary 1} \label{proof:Cor1}
\begin{cor*}
    The Liouvillian super-operator, $\sL$, admits a purely imaginary eigenvalue, $\ii \lambda$ only if there exists some unitary operator, $A$, satisfying
    \begin{gather}
        - \ii \rho_\infty A^\dag [H,A]\rho_\infty = \ii \lambda \rho_\infty^2,\\
        \rho_\infty A^\dag [L_\mu ^\dag, A]L_\mu \rho_\infty = 0, \ \forall \mu,\\
        \rho_\infty [L_\mu ^\dag, A^\dagger]L_\mu \rho_\infty = 0, \ \forall \mu.
    \end{gather}
    In which case the eigenstate is given by $\rho = A \rho_\infty$.
\end{cor*}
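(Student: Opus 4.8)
The plan is to derive the three stated identities by a direct algebraic rearrangement of the two conditions supplied by Theorem~1, using only that $A$ is unitary ($A^\dag A = \one$) and that $\rho_\infty$ is Hermitian. Throughout I take $A$ and $\rho_\infty$ to be the operators produced by Theorem~1, so that $[L_\mu, A]\rho_\infty = 0$ and $\big(-\ii[H,A]-\sum_\mu[L_\mu^\dag,A]L_\mu\big)\rho_\infty = \ii\lambda A\rho_\infty$, with $\sL[\rho_\infty]=0$, and the eigenstate statement $\rho = A\rho_\infty$ is then inherited verbatim.

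First I would establish the third identity. Taking the Hermitian adjoint of $[L_\mu, A]\rho_\infty = 0$ and using $\rho_\infty^\dag = \rho_\infty$ together with $[L_\mu, A]^\dag = -[L_\mu^\dag, A^\dag]$ gives $\rho_\infty[L_\mu^\dag, A^\dag] = 0$; right-multiplying by $L_\mu\rho_\infty$ then yields $\rho_\infty[L_\mu^\dag, A^\dag]L_\mu\rho_\infty = 0$ for every $\mu$.

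Next I would prove the second identity. Expanding the commutator and using $A^\dag A = \one$,
\begin{equation}
\rho_\infty A^\dag[L_\mu^\dag, A]L_\mu\rho_\infty = \rho_\infty A^\dag L_\mu^\dag A L_\mu\rho_\infty - \rho_\infty L_\mu^\dag L_\mu\rho_\infty .
\end{equation}
I would then insert $A L_\mu\rho_\infty = L_\mu A\rho_\infty$ (the content of $[L_\mu, A]\rho_\infty = 0$) together with its adjoint $\rho_\infty A^\dag L_\mu^\dag = \rho_\infty L_\mu^\dag A^\dag$ into the first term, turning it into $\rho_\infty L_\mu^\dag A^\dag A L_\mu\rho_\infty = \rho_\infty L_\mu^\dag L_\mu\rho_\infty$, so the two terms cancel and $\rho_\infty A^\dag[L_\mu^\dag, A]L_\mu\rho_\infty = 0$.

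Finally, for the first identity I would left-multiply the second condition of Theorem~1 by $\rho_\infty A^\dag$, obtaining
\begin{equation}
-\ii\,\rho_\infty A^\dag[H,A]\rho_\infty - \sum_\mu \rho_\infty A^\dag[L_\mu^\dag, A]L_\mu\rho_\infty = \ii\lambda\,\rho_\infty A^\dag A\rho_\infty = \ii\lambda\,\rho_\infty^2 ,
\end{equation}
and then drop the sum, which vanishes termwise by the identity just established, leaving $-\ii\,\rho_\infty A^\dag[H,A]\rho_\infty = \ii\lambda\,\rho_\infty^2$. Since every step is a bracket manipulation, there is no genuine obstacle here: the only points requiring care are the correct bookkeeping of the adjoints of the jump operators $L_\mu$ (in particular the sign in $[L_\mu, A]^\dag = -[L_\mu^\dag, A^\dag]$) and noting that it is precisely the unitarity of $A$ that makes the $A^\dag A$ factors collapse to the identity.
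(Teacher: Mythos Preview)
Your proof is correct and follows essentially the same route as the paper: both derive the second identity from $[L_\mu,A]\rho_\infty=0$ and unitarity, then obtain the first identity by left-multiplying the Theorem~1 condition by $\rho_\infty A^\dag$ and dropping the vanishing sum. Your derivation of the third identity (taking the adjoint of $[L_\mu,A]\rho_\infty=0$ directly) is in fact slightly cleaner than the paper's, which instead deduces it from the second identity together with $[L_\mu^\dag,A^\dag A]=0$.
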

\begin{proof}
    We may use the unitarity of $A$ and $[L_\mu, A] \rho_\infty = 0 \ \forall \mu$ to easily obtain
    \begin{equation}\label{eq:cor1conditionappendix}
        \rho_\infty A^\dag[L_\mu^\dag, A]L_\mu \rho_\infty = 0 ,\ \forall \mu. 
    \end{equation}
    We then left multiply equation \eqref{eq:thm1conditionappendix} by $\rho_\infty A^\dag$ and use \eqref{eq:cor1conditionappendix} to obtain
    \begin{equation}
        - \ii \rho_\infty A^\dag [H,A] \rho_\infty = \ii \lambda \rho_\infty^2
    \end{equation}
    as stated.
    We get $\rho_\infty [L_\mu ^\dag, A^\dagger]L_\mu \rho_\infty = 0, \forall \mu$ by using \eqref{eq:cor1conditionappendix} and the unitary of $A$, $[L_\mu ^\dag, A^\dagger A]=0$
\end{proof}

\section{Proof of Theorem 2} \label{proof:Thm2}
\begin{thm*}
When there exists a faithful (i.e. full-rank/ invertible) stationary state, $\tilde{\rho}_\infty$, $\rho$ is an eigenstate with purely imaginary eigenvalue if and only if $\rho$ can be expressed as,
    \begin{equation}
        \rho = \rho_{nm} = A^n \rho_\infty (A^\dag)^m
    \end{equation}
    where $A$ is a (not necessarily unitary) strong dynamical symmetry obeying 
    \begin{equation} 
        \begin{gathered}
            \left[H, A\right] = \omega A \\ [L_\mu, A]=[L_\mu^\dag, A] = 0, \ \forall \mu.
        \end{gathered}
    \end{equation}
    and $\rho_\infty$ is some NESS, not necessarily $\tilde{\rho}_\infty$.
    Moreover the eigenvalue takes the form 
    \begin{equation}
        \lambda = -\ii \omega (n-m)
    \end{equation}
    Furthemore, the corresponding left eigenstates, with $\sL^\dag \sigma_{mn} = \ii \omega (n-m) \sigma_{mn} $, are given by $\sigma_{mn} =(A')^m \sigma_0 \left((A')^\dagger\right)^n$ where $A'$ is also a strong dynamical symmetry and $\sigma_0=\one$.
\end{thm*}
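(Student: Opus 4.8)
The plan is to prove the two implications separately and then obtain the left-eigenstate statement by applying the necessity direction to the Heisenberg-picture generator $\sL^\dag$.

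\textbf{Sufficiency.} This direction is a direct algebraic check and does not use faithfulness of $\tilde\rho_\infty$. Given a strong dynamical symmetry $A$ with $[H,A]=\omega A$ and $[L_\mu,A]=[L_\mu^\dag,A]=0$, one first records that $\omega\in\R$: if $A\ket{E}\neq 0$ for an eigenvector of the Hermitian $H$, then $[H,A]=\omega A$ puts $E+\omega$ into the (real) spectrum of $H$. Taking adjoints gives $[H,A^\dag]=-\omega A^\dag$ and $[L_\mu,A^\dag]=[L_\mu^\dag,A^\dag]=0$, and the derivation identity yields $[H,A^n]=n\omega A^n$, $[H,(A^\dag)^m]=-m\omega(A^\dag)^m$. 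Substituting $\rho_{nm}=A^n\rho_\infty(A^\dag)^m$ into $\sL$ and commuting each $L_\mu,L_\mu^\dag$ freely through the powers of $A,A^\dag$, the dissipator collapses to $A^n\big(\sum_\mu 2L_\mu\rho_\infty L_\mu^\dag-\{L_\mu^\dag L_\mu,\rho_\infty\}\big)(A^\dag)^m$ while the Hamiltonian term produces $-\ii A^n[H,\rho_\infty](A^\dag)^m-\ii(n-m)\omega\,\rho_{nm}$; using $\sL[\rho_\infty]=0$ gives $\sL[\rho_{nm}]=-\ii\omega(n-m)\rho_{nm}$, which is purely imaginary. For the left eigenstates, note that $\sL^\dag[\one]=0$ always holds (this is just trace preservation of $\sL$), so $\one$ is a full-rank fixed point of $\sL^\dag$, and $A'=A^\dag$ is a strong dynamical symmetry of $\sL^\dag$ with the reversed frequency; running the same computation for $\sL^\dag$ gives $\sL^\dag[(A')^m\one((A')^\dag)^n]=\ii\omega(n-m)(A')^m\one((A')^\dag)^n$, and the biorthogonality with the $\rho_{nm}$ is the standard left/right eigenvector pairing.

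\textbf{Necessity.} Let $\rho$ be an eigenstate of $\sL$ with purely imaginary eigenvalue. By Theorem 1 we already have $\rho=A_0\rho_\infty$ with $A_0$ unitary and $\rho_\infty$ a NESS, but Theorem 1 only controls $A_0$ after right-multiplication by $\rho_\infty$, whereas a strong dynamical symmetry requires the stronger operator identities $[H,A]=\omega A$ and $[L_\mu,A]=0$. To upgrade, I would invoke the asymptotic-subspace (attractor) structure of $\sL$ developed in \cite{Burgarth_2013,Albert_2016,albert_thesis}: in finite dimension the operators supported on the non-decaying subspace of $\sH$ decompose as $\bigoplus_k \mathcal{B}(\sH_k^{(1)})\otimes\tau_k$ with each $\tau_k$ a fixed full-rank density matrix, the peripheral part of the dynamics being a block-diagonal unitary rotation $\omega_k\mapsto e^{-\ii H_k t}\omega_k e^{\ii H_k t}$ of the $\sH_k^{(1)}$ factors up to an overall phase. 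The decisive point is that a faithful stationary state $\tilde\rho_\infty$ forces the decaying subspace of $\sH$ to be empty, i.e. the associated projector $P=\one$, so this decomposition exhausts $\sH$; consequently every $L_\mu$ is block-diagonal of the form $\bigoplus_k\one_{\sH_k^{(1)}}\otimes\ell_{k,\mu}$ and the generator of the peripheral rotation lives entirely in $\bigoplus_k\mathcal{B}(\sH_k^{(1)})$. A peripheral eigenmode with eigenvalue $\ii\lambda$ is then a superposition of blocks $\ket{E_a^{(k)}}\!\bra{E_b^{(k)}}\otimes\tau_k$ in which $E_a^{(k)}-E_b^{(k)}$ takes a single value; defining $A$ as the operator that shifts by this amount within each $\sH_k^{(1)}$ factor and acts as $\one$ on each $\sH_k^{(2)}$ factor, one checks $[H,A]=\omega A$ from the block Hamiltonians, while $[L_\mu,A]=[L_\mu^\dag,A]=0$ is automatic because $A$ acts only on the $\sH_k^{(1)}$ factors, on which each $L_\mu$ acts as $\one$. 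Re-expressing $\rho$ in this decomposition then displays it as $A^n\rho_\infty(A^\dag)^m$ with eigenvalue $-\ii\omega(n-m)$, where $\rho_\infty=\bigoplus_k p_k\one\otimes\tau_k$ (restricted to the relevant blocks) is a NESS, not necessarily $\tilde\rho_\infty$.

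\textbf{Left eigenstates and anticipated difficulty.} The left-eigenstate claim follows by applying the same attractor-structure analysis to $\sL^\dag$, which generates a unital completely positive semigroup for which $\one$ is the (full-rank) fixed point and whose strong dynamical symmetries $A'$ are exactly those of $\sL$ with reversed frequency (canonically $A'=A^\dag$); its right eigenmodes $(A')^m\one((A')^\dag)^n$ are precisely the left eigenmodes of $\sL$. I expect the real work to be concentrated in the necessity step: importing the decomposition of \cite{Burgarth_2013,Albert_2016,albert_thesis} in a form sharp enough to (i) deduce $P=\one$ and the clean tensor-product structure purely from the existence of a faithful stationary state, (ii) promote the relation ``$[L_\mu,A]\rho_\infty=0$'' of Theorem 1 to the exact operator identity $[L_\mu,A]=0$ (and likewise $[H,A]=\omega A$ rather than its $\rho_\infty$-weighted version), and (iii) realise a general peripheral eigenmode $\As(\sH)$-component — which may be spread over several blocks and may sit at an integer multiple of a base frequency — with a single ladder operator $A$; the reality of $\omega$ and the commensurability remarks of Sec.~\ref{sec:purelyimaginary_defi} are exactly what make (iii) consistent.
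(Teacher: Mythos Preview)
Your approach shares the paper's core mechanism --- a faithful $\tilde\rho_\infty$ forces the asymptotic projector $P=\one$, after which the structural results of \cite{Burgarth_2013,Albert_2016,albert_thesis} supply the operator identities --- but the execution differs in a way that sidesteps exactly the difficulty you flag as (ii). The paper starts not from a right eigenmode and Theorem~1 but from a \emph{left} eigenmode $A^\dag$ with $\sL^\dag A^\dag=-\ii\omega A^\dag$: Proposition~2 of \cite{Albert_2016} and Theorem~4 of \cite{albert_thesis} then give $[PHP,PAP]=\omega\,PAP$ and $[PL_\mu P,PAP]=[PL_\mu^\dag P,PAP]=0$ directly as operator identities on the non-decaying subspace, so $P=\one$ yields the strong-dynamical-symmetry relations immediately, with no need to promote a $\rho_\infty$-weighted relation to an exact one or to unpack the tensor decomposition block by block. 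The right eigenmode is then recovered as $A\rho_\infty$ via Lemma~3 of \cite{Burgarth_2013}, and the $A^n\rho_\infty(A^\dag)^m$ tower together with its eigenvalues is imported from \cite{Buca_2019}. Your explicit route through the $\bigoplus_k\mathcal B(\sH_k^{(1)})\otimes\tau_k$ decomposition is correct in principle and more self-contained, but the paper's left-eigenmode shortcut avoids precisely the work you anticipate in (ii) and (iii).

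For the left eigenstates the paper again takes a different path: rather than rerunning the attractor analysis on $\sL^\dag$, it invokes Lemma~3 of \cite{Burgarth_2013} once more to associate to each oscillating coherence $A\rho_\infty$ the left eigenvector $A'=A\rho_\infty\tilde\rho_\infty^{-1}$, which, being itself a peripheral left eigenmode, must satisfy the same strong-dynamical-symmetry relations; the form $\sigma_{mn}=(A')^m\one((A')^\dag)^n$ is then checked by direct calculation. Your plan to apply the asymptotic-subspace theorems directly to $\sL^\dag$ carries a technical caveat you do not address: those results are stated for CPTP generators, whereas $e^{t\sL^\dag}$ is unital CP but not trace-preserving, so you would either need the unital-CP analogues of the cited theorems or precisely the $\tilde\rho_\infty^{-1}$ duality that the paper uses to pass between the Schr\"odinger and Heisenberg pictures.
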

\begin{proof}
    We will make use of \cite{Burgarth_2013}, \cite{Albert_2016} and \cite{albert_thesis}. The asymptotic subspace of the Liouvillian $As(\HS)$ is defined as a subspace of the space of linear operators $\OS$ such that all initial states $\rho(0)$ in the long-time limit end up in $\rho(t \to \infty) \in As(\HS)$. The projector $P$ ($P^2=P=P^\dagger$) to the corresponding non-decaying part of the Hilbert space $\HS$ is uniquely defined \cite{Albert_2016,albert_thesis} as, for all $\rho(t \to \infty) \in As(\HS)$,
\begin{align}
&\rho(t \to \infty)=P\rho(t \to \infty)P \nonumber \\
&\tr(P)=\max_{\rho(t \to \infty)}\{\rank{\rho(t \to \infty)}\}. \label{projectorappendix}
\end{align}
It follows therefore if there is a full rank $\tilde{\rho}_\infty$ that $P=\one$.

From the proof of Proposition 2 of \cite{Albert_2016} and Theorem 4 (Eqs. (2.39-2.40)) of \cite{albert_thesis} we know that for a left eigenmode with purely imaginary eigenvalue $A^\dagger \sL =-\ii \omega A^\dagger$, we have,
\begin{equation}
    \begin{gathered}
    \label{criteriaTh2appendix}
    [PHP,PAP]=\omega PAP, \\  
    [PL_\mu P,PAP]=[P L_\mu^\dagger P,PAP]=0.
    \end{gathered}
\end{equation} 
Which reduces to 
\begin{equation}
\begin{gathered}
\label{criteriaTh2.1appendix}
[H,A]=\omega A, \\
[L_\mu ,A]=[ L_\mu^\dagger ,A]=0.
\end{gathered}
\end{equation}
Since $P = \1$.
The left eigenmode $A^\dagger$ is also the eigenmode of the dual map $\sT_t^\dagger=\exp\left(\sL^\dagger t\right)$. Since $A^\dagger$ corresponds to a peripheral eigenvalue of $\sT_t^\dagger$, by Lemma 3 of \cite{Burgarth_2013}, the corresponding right eigenmode with eigenvalue $\ii \omega$ is $\rho' = A \rho_\infty$. By the same Lemma, to every oscillating coherence $A \rho_\infty$ there corresponds a left eigenvector of the form $A \rho_\infty \tilde{\rho}_\infty^{-1}=A'$, which must also satisfy the conditions \eqref{criteriaTh2.1appendix}. Now we find that we can write $\rho' = A' \tilde{\rho}_\infty$ where $A'$ satisifes the conditions of \eqref{criteriaTh2.1appendix}.  We now have the same criteria as those given in \cite{Buca_2019} and the statement of the theorem follows. The converse was shown in \cite{Buca_2019}. It is a straightforward calculation to show that $\sigma_{nm} =A^n \left(A^\dagger\right)^m$ is a left eigenmode with the desired eigenvalue. 
\end{proof}

\section{Proof of Corollary 2} \label{proof:Cor2}
\begin{cor*}
The following conditions are sufficient for robust and stable synchronization between subsystems $j$ and $k$ with respect to the local operator $O$:
\begin{itemize}
    \item The operator $P_{j,k}$ exchanging $j$ and $k$ is a weak symmetry \cite{BucaProsen} of the quantum Liouvillian $\sL$ (i.e. $[\sL,\sP_{j,k}]=0$)
    \item There exists at least one $A$ fulfilling the conditions of Th. 1
    \item For at least one $A$ fulfilling the conditions of Th. 1 and the corresponding $\rho_\infty$ we have $\tr[ O_j A \rho_\infty] \ne 0$
    \item All such $A$ fulfilling the conditions of Th. 1 also satisfy $[P_{j,k},A]=0$
\end{itemize}
\end{cor*}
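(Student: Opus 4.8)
The plan is to let the dynamics run to the long-time limit, where by Theorem 1 the surviving part of $\rho(t)$ is a superposition of oscillating coherences $A_n\rho_{\infty,n}e^{\ii\lambda_n t}$, to show that the four hypotheses together force this asymptotic state to be invariant under the exchange superoperator $\sP_{j,k}$, and then to read off $\ave{O_j(t)}=\ave{O_k(t)}$ from a short trace manipulation. The first ingredient is bookkeeping: since $P_{j,k}$ is unitary with $P_{j,k}^2=\one$ it is Hermitian, so $\sP_{j,k}$ is self-adjoint for the Hilbert--Schmidt inner product and $[\sL,\sP_{j,k}]=0$ gives $[\sL^\dagger,\sP_{j,k}]=0$ as well. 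Hence $\sP_{j,k}$ commutes with $e^{t\sL}$ and with the spectral projection onto the peripheral (purely imaginary) part of the spectrum; in particular it maps the set of NESS into itself and sends oscillating coherences to oscillating coherences with the same eigenvalue.

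The heart of the argument is to upgrade ``$\sP_{j,k}$ permutes the NESS'' to ``$\sP_{j,k}$ fixes every NESS''. Let $\rho_\infty$ be any NESS and put $\rho_\infty^{-}=\tfrac{1}{2}\bigl(\rho_\infty-\sP_{j,k}[\rho_\infty]\bigr)$, which by the previous paragraph is a traceless Hermitian operator with $\sL[\rho_\infty^{-}]=0$, i.e.\ an eigenstate at eigenvalue $0$. Applying Theorem 1 at $\lambda=0$ to $\rho_\infty^{-}$ gives $\rho_\infty^{-}=AR$ with $R\ge0$, $\sL[R]=0$, and $A$ a unitary obeying the conditions of Theorem 1, so the fourth hypothesis forces $[P_{j,k},A]=0$. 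Combining $\sP_{j,k}[\rho_\infty^{-}]=-\rho_\infty^{-}$ with $[P_{j,k},A]=0$ yields $A\bigl(R+\sP_{j,k}[R]\bigr)=0$; since $A$ is invertible and $R,\sP_{j,k}[R]\ge0$, this forces $R=0$, hence $\rho_\infty^{-}=0$ and $[P_{j,k},\rho_\infty]=0$. Since the fourth hypothesis also gives $[P_{j,k},A_n]=0$ for every admissible $A_n$, every term of the asymptotic expansion $\rho(t)\to\sum_n c_n\, A_n\rho_{\infty,n}\,e^{\ii\lambda_n t}$ of Theorem 1 is $\sP_{j,k}$-invariant.

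The rest is then short. Inserting $P_{j,k}^2=\one$, using cyclicity of the trace, $P_{j,k}O_jP_{j,k}=O_k$, and the $\sP_{j,k}$-invariance of each asymptotic term reproduces the chain of equalities stated in the main text and gives $\lim_{t\to\infty}\ave{O_j(t)}=\lim_{t\to\infty}\ave{O_k(t)}$ for an arbitrary initial state. Moreover $O_j=\sP_{j,k}[O_k]$ and $[\sL^\dagger,\sP_{j,k}]=0$ give the exact Heisenberg-picture identity $O_j(t)=P_{j,k}O_k(t)P_{j,k}$, so the synchronization is robust in the sense of Definition 3. Finally, hypotheses two and three supply an $A$ with $\lambda\neq0$ and $\tr[O_jA\rho_\infty]\neq0$, so $O_j(t)$ retains a genuinely oscillating, non-decaying component and $\partial_t O_j(t)\not\equiv0$, delivering stability and non-triviality.

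The one delicate point is the NESS-fixing step: a weak symmetry on its own only permutes steady states (a decoherence-free subspace on which $P_{j,k}$ acts as a bit flip is a ready example), and it is precisely the fourth hypothesis, applied to the polar unitary $A$ of the $\sP_{j,k}$-antisymmetric stationary part $\rho_\infty^{-}$, that excludes such behaviour; invertibility of $A$ together with positivity of $R$ is what converts $[P_{j,k},A]=0$ into $[P_{j,k},\rho_\infty]=0$. Everything else is routine manipulation of the Theorem 1 eigenmode expansion and of the exchange superoperator.
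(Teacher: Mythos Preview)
Your proof is correct, and the overall architecture (reduce to the asymptotic expansion $\sum_n c_n A_n\rho_{\infty,n}e^{\ii\lambda_n t}$, show each factor commutes with $P_{j,k}$, then push $P_{j,k}$ through the trace) matches the paper. Where you genuinely diverge is in the key step of proving $[\rho_{\infty,n},P_{j,k}]=0$. The paper block-decomposes $\sL$ using the weak symmetry into sectors $\sB_1$ and $\sB_2$, observes that proper (diagonal) stationary states must sit in $\sB_1$, and then invokes Proposition~16 of Baumgartner--Narnhofer to argue that any stationary phase relation living in $\sB_2$ would force the existence of a unitary intertwiner $U$ between the $\pm1$ eigenspaces of $P_{j,k}$; such a $U$ would itself be an admissible $A$ from Theorem~1 that fails to commute with $P_{j,k}$, contradicting the fourth hypothesis. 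Your route bypasses the Baumgartner--Narnhofer machinery entirely: you form the antisymmetric part $\rho_\infty^{-}$, feed it straight into the polar decomposition of Theorem~1 at $\lambda=0$, use hypothesis four to force $[P_{j,k},A]=0$, and then kill $R$ by the positivity argument $R+\sP_{j,k}[R]=0\Rightarrow R=0$. This is more self-contained (no external structure theorems needed beyond Theorem~1) and somewhat slicker; the paper's approach in turn makes explicit where the obstruction would reside (off-diagonal stationary phase relations) and connects to the general classification of steady-state manifolds. You are also a bit more careful than the paper in explicitly deriving the Heisenberg-picture identity $O_j(t)=P_{j,k}O_k(t)P_{j,k}$ from $[\sL^\dagger,\sP_{j,k}]=0$ to verify robustness in the sense of Definition~3.
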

\begin{proof}
    In the long-time limit we have,
\begin{equation}
\lim_{t \to \infty}\ave{O_j(t)}=\tr \left[O_j \sum_n c_n A_n \rho_{\infty,n} e^{\ii \lambda_n t}\right],
\end{equation}
where $c_n=\sinner{\sigma_n}{\rho(0)}$ for an initial state $\rho(0)$ which we assume to be arbitrary and $A_n$ are all operators satisfying the conditions of Th. 1 and $\sL \rho_{\infty,n}=0$ forming a complete basis for the eigenspaces with purely imaginary eigenvalues. Clearly,
\begin{equation}
\label{clearlyappendix}
\lim_{t \to \infty}\ave{O_j(t)}=\tr\left[O_j P_{j,k}^2 \sum_n c_n A_n \rho_{\infty,n} e^{\ii \lambda_n t}\right],
\end{equation}
and we just need to commute the $P_{j,k}$ to the left to use $O_k= P_{j,k} O_j P_{j,k}$ which results in $\lim_{t \to \infty}\ave{O_j(t)}=\lim_{t \to \infty}\ave{O_k(t)}$ for any initial state. By assumption, $[A_n,P_{j,k}]=0,\forall n$. We just need to show that $[\rho_{\infty,n},P_{j,k}]=0,\forall n$. To do so we use the fact that $P_{j,k}$ is a weak symmetry of the Liouvillian $[\sP_{j,k},\sL]=0$. The two eigenvalues of $P_{j,k}$ are $+1,-1$ and it has orthogonal eigenspaces. This implies that the Liouvillian is block reduced to the eigenspaces  $\sB_1=\{+1,+1\}\oplus\{-1,-1\}$ and $\sB_2=\{+1,-1\}\oplus\{-1,+1\}$ corresponding to the two $+1$ and $-1$ eigenvalues of $\sP_{j,k}$, respectively \cite{BucaProsen}. The only possible eigenmodes with eigenvalue 0 as per Theorem 18 of Baumgartner and Narnhofer \cite{BN} are either operators that are diagonal in some basis or \emph{stationary phase relations}. The diagonal subspace $\sB_1$ contains all matrices that are diagonal. To see this consider $\rho_2 \in \sB_2$. It can be written as $\rho_2=P_{+1} \rho_a P_{-1}+P_{-1} \rho_a P_{_1}$ where 
\begin{equation}
    P_\pm = \frac12 \left( \1 \pm P_{i,j} \right)
\end{equation}
are the corresponding projectors which commute,  $[P_{+1},P_{-1}]=0$, and are mutually diagonalizable. We have either $P_{+1}\ket{\psi_+}=\ket{\psi_+}, \ P_{-1}\ket{\psi_+}=0$ or the reverse. Therefore $\bra{\psi_{\alpha}}\rho_2\ket{\psi_{\alpha}}=0$. For $\rho_{\infty,n} \in \sB_1$ we immediately have $[\rho_{\infty,n} ,P_{j,k}]=0$. All the eigenmodes with eigenvalue 0 which belong to  $\sB_2$ are \emph{stationary phase relations} because they only contain off-diagonal elements. It remains to show that all \emph{stationary phase relations}, $\rho_{\infty,n} \in \sB_2$, either satisfy $[\rho_{\infty,n},P_{j,k}]=0$ or vanish. This directly follows from Proposition 16 of Baumgartner and Narnhofer \cite{BN} that states the existence of a stationary phase relation implies the existence of a unitary $U$ such that $[H,U]=[L_\mu,U]=0,\forall \mu$. However, such a $U$ must intertwine between the subspaces $+1$ and $-1$ of $P_{j,k}$, with projectors $P_1$ and $P_{-1}$, respectively, i.e.  $U P_{+1}=P_{-1} U$ and therefore $[U,P_{j,k}] \neq 0$. However, $U$ satisfies all the conditions for an $A$ operator from Th. 1, and by assumption does not exist. Thus there are no non-zero  $\rho_{\infty,n} \in \sB_2$.
\end{proof}

\section{Proof of Corollary 3}\label{proof:Cor3}
\begin{cor*}
The following conditions are sufficient for robust and stable synchronization between subsystems $j$ and $k$ with respect to the local operator $O$:
\begin{itemize}
    \item The Liouvillian, $\sL$ is unital ($\sL(\one)=0$)
    \item There exists at least one $A$ fulfilling the conditions of Th. 1
    \item For at least on $A$ fulfilling the conditions of Th. 1 and the corresponding $\rho_\infty$ we have $\tr[ O_j A \rho_\infty] \ne 0$
    \item All such $A$ fulfilling the conditions of Th. 1 also satisfy $[P_{j,k},A]=0$
\end{itemize}
Furthermore, if all $A$ are translationally invariant, $[A,P_{j,j+1}]=0, \forall j$, then every subsystem is robustly and stably synchronized with every other subsystem.
\end{cor*}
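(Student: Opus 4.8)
The plan is to follow the proof of Corollary~\ref{cor:synch1} almost line for line, with unitality playing the role that the weak-symmetry hypothesis played there, namely as the tool that pins down the stationary subspace. First I would write, for an \emph{arbitrary} initial state $\rho(0)$ and after the transient has decayed,
\begin{equation}
\lim_{t\to\infty}\ave{O_j(t)}=\tr\left[O_j\sum_n c_n A_n\rho_{\infty,n}e^{\ii\lambda_n t}\right],
\end{equation}
with $c_n=\sinner{\sigma_n}{\rho(0)}$, where the $A_n$ together with their adjoints exhaust the operators fulfilling the conditions of Theorem~1 and the $\rho_{\infty,n}$ are the associated NESS; this is just \eqref{formalsolution} with the decaying modes dropped. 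Inserting $P_{j,k}^2=\one$, commuting one factor of $P_{j,k}$ to the left using $[A_n,P_{j,k}]=0$ (fourth bullet), and using $P_{j,k}O_jP_{j,k}=O_k$, the statement collapses --- exactly as for Corollary~\ref{cor:synch1} --- to the single requirement $[\rho_{\infty,n},P_{j,k}]=0$ for every NESS. Robustness (operator-level equality for any $\rho(0)$) is then automatic from this formulation, and $\partial_t\ave{O_j(t)}\not\equiv0$ follows from the third bullet, $\tr[O_jA\rho_\infty]\neq0$.

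The crux is establishing $[\rho_{\infty,n},P_{j,k}]=0$ without the $\sP_{j,k}$-block decomposition that drove the Corollary~\ref{cor:synch1} argument, and this is where unitality does the work. Since $\sL[\one]=0$, the identity is a \emph{faithful} stationary state, so Theorem~2 and the structure results of \cite{Burgarth_2013,Albert_2016,albert_thesis} apply with $\tilde\rho_\infty=\one$; in particular the asymptotic projector is $P=\one$, and every eigenoperator of $\sL$ with purely imaginary eigenvalue --- hence in particular every NESS --- is a strong dynamical symmetry $A'$ obeying $[H,A']=\omega'A'$ and $[L_\mu,A']=[L_\mu^\dag,A']=0$, with $\omega'=0$ for a NESS. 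Any such $A'$ trivially satisfies the conditions of Theorem~1 with $\rho_\infty=\one$ and $\lambda=0$, so the fourth bullet forces $[A',P_{j,k}]=0$, i.e. $[\rho_{\infty,n},P_{j,k}]=0$. Equivalently, one may keep the generic NESS $\rho_\infty$ of Theorem~2 and observe that every NESS is a polynomial in the strong dynamical symmetries $A$, $A^\dag$ and $\one$; each $A$ commutes with $P_{j,k}$ by the fourth bullet, hence so does $A^\dag$ since $P_{j,k}$ is Hermitian, hence so does every NESS. This closes the chain and gives $\lim_{t\to\infty}\ave{O_j(t)}=\lim_{t\to\infty}\ave{O_k(t)}$ at the operator level; the conjugate modes of the form $\rho_{\infty,n}A_n^\dag$ are handled identically since the NESS set is closed under taking adjoints.

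For the ``Furthermore'' clause I would use the elementary fact that any transposition $P_{j,k}$ is a product of nearest-neighbour transpositions $P_{l,l+1}$, so $[A,P_{l,l+1}]=0$ for all $l$ implies $[A,P_{j,k}]=0$ and likewise for the $\rho_{\infty,n}$; hence all four hypotheses of the corollary hold simultaneously for every pair $(j,k)$, the non-degeneracy condition transferring between sites because $\tr[O_kA\rho_\infty]=\tr[O_j\,P_{j,k}A\rho_\infty P_{j,k}]=\tr[O_jA\rho_\infty]$ once $A$ and $\rho_\infty$ commute with $P_{j,k}$. Therefore every subsystem is robustly and stably synchronized with every other. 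I expect the main obstacle to be the second paragraph: invoking the peripheral-spectrum structure theory carefully enough that \emph{all} NESS --- not merely a single distinguished state --- are identified with strong dynamical symmetries, and confirming that the zero-frequency symmetries genuinely count among the ``operators fulfilling the conditions of Theorem~1'' to which the fourth bullet applies; the remaining steps are the bookkeeping already carried out in the proof of Corollary~\ref{cor:synch1}.
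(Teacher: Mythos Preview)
Your proof is correct and takes a genuinely different, more streamlined route than the paper's. The paper works through the Baumgartner--Narnhofer enclosure decomposition: unitality plus Theorem~2 make each $A_n$ a strong dynamical symmetry, so $A_nA_n^\dag$ is a strong symmetry whose spectral projectors $P_{n,a}$ are stationary states commuting with $P_{j,k}$; it then argues that the $P_{n,a}$ project onto \emph{minimal} enclosures (else further projectors would furnish additional $A$-operators, contradicting the fourth bullet), and invokes Theorem~18 of \cite{BN} to conclude that every NESS is assembled from the $P_{n,a}$ and off-diagonal intertwiners $U$, each of which is an $A$-operator and hence commutes with $P_{j,k}$. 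You bypass the enclosure machinery entirely by observing that with $\tilde\rho_\infty=\one$ the map $\rho\mapsto\rho\,\tilde\rho_\infty^{-1}$ from right to left peripheral eigenmodes (Lemma~3 of \cite{Burgarth_2013}, used inside the proof of Theorem~2) is the identity, so every NESS is simultaneously a left fixed point and therefore, by the $P=\one$ case of \cite{Albert_2016,albert_thesis}, lies in the commutant $\{H,L_\mu,L_\mu^\dag\}'$; it is then directly an operator satisfying the conditions of Theorem~1 with $\rho_\infty=\one$, $\lambda=0$, and the fourth bullet finishes. Your route is shorter; the paper's has the side benefit of exhibiting the enclosure structure of the stationary manifold.

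Two small points worth tightening when you write this up. First, the claim ``every NESS is itself a strong dynamical symmetry'' is not the literal statement of Theorem~2 but follows from its \emph{proof} combined with $\tilde\rho_\infty=\one$; say so explicitly. Second, both your argument and the paper's rely on reading ``$A$ fulfilling the conditions of Theorem~1'' to include non-unitary $A$ satisfying \eqref{eq:Thm1.2}--\eqref{eq:Thm1.3}; the paper does adopt this reading (it says projectors to enclosures ``satisfy the conditions of Th.~1 trivially''), and you correctly flag the issue, but it is worth stating the convention once rather than leaving it as an anticipated obstacle.
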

\begin{proof}
    We return to Eq.~\eqref{clearlyappendix} and wish to show that $[\rho_{\infty, n}, P_{j,k}] = 0$ but without the assumption of weak symmetry. If $\sL$ is unital, then all $A_n$ satisfy conditions of Th.2 and therefore it straightforwardly follows that $A_n A_n^\dagger$ is a strong symmetry \cite{BucaProsen,Zhao} of the Liouvillian and that the projectors to the eigenspaces of $A_n A^\dagger_n$, $P_{n,a}$, are stationary states $\sL P_{n,a}=0$. By assumption $[P_{j,j+1},A_n A_n^\dagger]=[P_{n,a},P_{j,j+1}]=0$. By Theorem 3 of \cite{BN} $P_{n,a}$ are projectors to enclosures. The existence of more minimal enclosures that do not satisfy the symmetry requirement would imply that there are more operators $A$ (as projectors to enclosures satisfy the conditions of Th. 1 trivially), and this cannot happen by assumptions of the corollary. Therefore, $P_{n,a}$ are projectors to minimal enclosures. By Theorem 18 of \cite{BN} all the minimal diagonal blocks $P_{n,a} \rho P_{n,a}$ contain a unique stationary state, which must be up to a constant $P_{n,a}$. The lack of stationary phase relations and oscillating coherences that do not commute with $P_{j,k}$ follows from the fact that by Theorem 18 of \cite{BN} the unique stationary state in each off-diagonal block is of the form of $U P_{n,a}$. This intertwiner, like in the proof of Cor. 2, satisfies the conditions for an $A$ and therefore must commute with $P_{j,k}$ by assumption. 
\end{proof}

\section{Proof of Theorem 3}\label{proof:Thm3}
\begin{thm*}
    Let $\sL'(s)$ generate a CPTP map and depend analytically on $s$ with Taylor series
    \begin{equation}
        \sL'(s) =\sL + s \sL_1 + s^2 \sL_2 + \sO(s^2).
    \end{equation}
    Suppose there exists some $\omega \in \R\setminus \{ 0\}$ and $ \rho_0$ such that $\sL \rho_0 = \ii \omega \rho_0$. Then for $s$ sufficiently small $\lambda(s)$ is an eigenvalue of $\sL'(s)$ given by
    \begin{equation}
        \lambda(s) = \ii \omega + \lambda_1 s + \lambda_2 s^{1 + \tfrac1p} + o\left(\lambda_2 s^{1 + \tfrac1p}\right)
    \end{equation}
    where $\lambda_1$ is purely imaginary, $p$ is some positive integer and $\lambda_2$ has non-positive real part.
\end{thm*}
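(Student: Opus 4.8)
The plan is to recognise the claimed expansion as the output of analytic (Kato) perturbation theory for a finite-dimensional matrix family, constrained by complete positivity, adapting the analysis of Macieszczak et al.\ \cite{Macieszczak_2016}. Since every site has a finite local Hilbert space, $\sL'(s)$ is an analytic family of operators on the finite-dimensional space $\OS$, so the standard theory of $s$-dependent matrices applies. The first step is to note that the eigenvalue $\ii\omega$ of $\sL=\sL'(0)$ is \emph{semisimple}, i.e.\ it carries no nontrivial Jordan block: this follows from contractivity of the generated dynamics --- $e^{t\sL}$ is trace-norm non-increasing for all $t\ge 0$, so $\|e^{t\sL}\|$ cannot grow in $t$, whereas a Jordan block of size $\ge 2$ attached to a purely imaginary eigenvalue would force linear-in-$t$ growth --- and it is also visible from the structure of the asymptotic subspace used in the proof of Th.~2. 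Hence, writing $P_0$ for the spectral projector of $\sL$ onto the $\ii\omega$-eigenspace (which contains $\rho_0$), one has $\sL P_0 = P_0\sL = \ii\omega P_0$.

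Next I would invoke Kato's theory of the eigenvalue group splitting off from $\ii\omega$. For $s$ near $0$ the total eigenprojection $P(s)$ onto this group depends analytically on $s$, and on its fixed-dimensional range the reduced generator is analytic,
\begin{equation}
    \tilde{\sL}(s) = \ii\omega\,\one + s\,T_1 + s^2\,T_2 + \sO(s^3), \qquad T_1 = P_0\sL_1 P_0 ,
\end{equation}
with $T_2$ built from $\sL_1,\sL_2$ and the reduced resolvent of $\sL$ at $\ii\omega$. Its eigenvalues are $\ii\omega + s\,\nu(s)$, where $\nu(s)$ is an eigenvalue branch of the analytic family $M(s)=T_1 + sT_2+\cdots$, hence given by a convergent Puiseux series $\nu(s)=\lambda_1 + \lambda_2 s^{1/p}+\sO(s^{2/p})$ for some positive integer $p$ (the cycle length, i.e.\ the Jordan-block size of the eigenvalue $\lambda_1$ of $T_1$ in the generic case, and $p=1$ when that eigenvalue is semisimple), with $\lambda_1$ an eigenvalue of $T_1$. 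Substituting yields precisely
\begin{equation}
    \lambda(s) = \ii\omega + \lambda_1 s + \lambda_2\, s^{1+\frac1p} + o\!\left(s^{1+\frac1p}\right).
\end{equation}
It is exactly the semisimplicity of $\ii\omega$ in $\sL$ that removes any fractional power below $s^1$, deferring any defect to the structure of $T_1$, which supplies the integer $p$; in the $\omega=0$ case of Sec.~\ref{sec:ultalow} the extra trace-preservation and positivity structure of the null space forces $p=1$, whereas for $\omega\neq 0$ the oscillating eigenspace carries no such constraint.

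It remains to impose complete positivity. Because $\sL'(s)$ generates a CPTP map for $s$ in a (two-sided) neighbourhood of $0$, every eigenvalue obeys $\mathrm{Re}\,\lambda(s)\le 0$ there. Since the correction $\lambda_1 s$ sits at integer order, $\mathrm{Re}\,\lambda(s)=\mathrm{Re}(\lambda_1)\,s + o(s)$ would change sign with $s$ unless $\mathrm{Re}(\lambda_1)=0$, so $\lambda_1$ is purely imaginary; with this in hand $\mathrm{Re}\,\lambda(s)=\mathrm{Re}\!\left(\lambda_2 s^{1+1/p}\right)+o(s^{1+1/p})$, and evaluating on the branch with $s^{1/p}>0$ for $s>0$ gives $\mathrm{Re}(\lambda_2)\le 0$. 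I expect the main obstacle to be this last step done carefully: one must control the range of $s$ over which $\sL'(s)$ is genuinely a CPTP generator and rule out $\mathrm{Re}(\lambda_1)<0$ rather than merely $\mathrm{Re}(\lambda_1)\le 0$ --- this being exactly the point that distinguishes a neutral drift of the synchronization frequency from ordinary decay --- together with the bookkeeping that confirms the first nonzero correction beyond $\lambda_1 s$ lands at $s^{1+1/p}$ with $p$ a genuine positive integer. Both are routine once the reduction above is in place.
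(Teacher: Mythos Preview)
Your proposal is correct and follows essentially the same route as the paper: you establish semisimplicity of the peripheral eigenvalue $\ii\omega$ from contractivity of the generated CPTP dynamics (the paper cites a proposition of Wolf to the same effect), then invoke Kato's reduction to the analytic total projection and the resulting Puiseux expansion (the paper's Theorem~2.3 of \cite{Kato_1995}), and finally read off $\lambda_1\in\ii\R$ and $\mathrm{Re}(\lambda_2)\le 0$ from the constraint $\mathrm{Re}\,\lambda(s)\le 0$ for real $s$. Your treatment of the sign argument for $\lambda_1$ is in fact slightly more explicit than the paper's, which simply states that the conclusions follow ``immediately''.
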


\begin{proof}
    We first note that if $\lambda(0)$ is a non-degenerate eigenvalue, then it is known that $\lambda(s)$ is analytic; thus, the above result is trivial. In fact, in this case, we also find that the eigenstate $\rho(s)$ also depends analytically on $s$.
    
            For the case where $\ii \omega$ is an $m$-fold degenerate eigenvalue of $\sL_0$ we define the `$\omega$-rotating stable manifold' ($\omega$-RSM) as the eigenspace spanned by the $m$ eigenmode corresponding to $\ii\omega$. We next prove the following lemma
            \begin{lem}\label{lemma:imaginaryevalsappdx}
                The purely imaginary eigenvalues of $\sL$ have one-dimensional Jordan blocks.
            \end{lem}
            \begin{proof}
                Let the CPTP map generated by $\sL$ be $\sT_t = e^{t \sL}$. Note that the eigenspace of $\sL$ with eigenvalue $\ii \omega$ is also an eigenspace of $\sT_t$ with eigenvalue $e^{\ii t \omega}$ which has unit modulus. Thus by proposition 6.1 of \cite{Wolf_2012} the Jordan blocks corresponding to $e^{\ii t \omega}$ in $\sT_t$ are all one-dimensional. It follows that the Jordan blocks of $\sL$ corresponding to $\ii \omega$ are also all one-dimensional.
            \end{proof}
            
            Since $\ii \omega$ is a semi-simple eigenvalue and we can directly apply Theorem 2.3 of \cite{Kato_1995} to write 
            \begin{equation}
                \lambda(s) = \ii \omega + \lambda_1 s + \lambda_2 s^{1+\tfrac1p}  + o\left(s^{1+\tfrac1p}\right)
            \end{equation}
            for some integer $p\ge 1$. Finally we observe that since $\sL(s)$ always generates a CPTP map, we must have $\Re(\lambda(s)) \le 0$ for all $s \in \R$ and thus immediately we can deduce that $\lambda_1$ is purely imaginary and $\lambda_2$ has non-positive real part. 
\end{proof}

\section{Proof of Corollary 4}\label{proof:Cor4}
\begin{cor*}
    For perturbations $\sL_1$ which explicitly break the exchange symmetry, and thus synchronization, between sites $j$ and $k$, i.e. $\sP_{j,k} \sL_1 \sP_{j,k}=-\sL_1 $, the frequency of synchronization $\omega$ is stable to next-to-leading order in $s$, i.e. $\lambda_1=0$.  
\end{cor*}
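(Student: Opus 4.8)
Proof proposal.

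The plan is to reduce the claim, via the perturbation theory already used in the proof of Theorem~3 (Kato, Ch.~II), to the statement that the reduced first-order perturbation vanishes on the $\omega$-rotating stable manifold, and then to obtain that vanishing from a parity argument under $\sP_{j,k}$. By the Lemma in the proof of Theorem~3, $\ii\omega$ is a semi-simple eigenvalue of $\sL$; let $\sP$ denote its (finite-rank) spectral projector, so the range of $\sP$ is the $\omega$-RSM. For a semi-simple eigenvalue, the admissible first-order coefficients $\lambda_1$ in the $\omega$-group expansion $\lambda(s)=\ii\omega+\lambda_1 s+\lambda_2 s^{1+1/p}+o(s^{1+1/p})$ are exactly the eigenvalues of $\sP\sL_1\sP$ acting on the range of $\sP$; equivalently each $\lambda_1$ is an eigenvalue of the matrix $M_{ab}=\sinner{\sigma_a}{\sL_1\rho_b}$ built from a biorthogonal basis $\{\rho_b\}$, $\{\sigma_a\}$ of right/left eigenmodes of $\sL$ at $\ii\omega$. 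Hence it suffices to prove $\sP\sL_1\sP=0$.

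For this I would assemble two inputs about the leading-order Liouvillian, both already available. First, since the exchange symmetry that $\sL_1$ breaks is a symmetry of $\sL$ (this is literally the first hypothesis of Cor.~2, and is presupposed by the statement of the corollary), we have $[\sL,\sP_{j,k}]=0$, so the superoperator $\sP_{j,k}$ commutes with the spectral projector $\sP$. Second --- and this is exactly what is established inside the proofs of Cor.~2 and Cor.~3 --- every eigenmode of $\sL$ with purely imaginary eigenvalue commutes with $P_{j,k}$: the oscillating coherences $A_n\rho_{\infty,n}$ do (because $[A_n,P_{j,k}]=[\rho_{\infty,n},P_{j,k}]=0$), hence so does every element of the range of $\sP$, i.e.\ it lies in the $+1$ eigenspace of $\sP_{j,k}$. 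Together with $P_{j,k}^2=\one$ (hence $\sP_{j,k}^2=\id$) these give $\sP_{j,k}\sP=\sP\sP_{j,k}=\sP$ and therefore $\sP_{j,k}\sP\sP_{j,k}=\sP$.

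The cancellation is then immediate. Evaluating $\sP_{j,k}(\sP\sL_1\sP)\sP_{j,k}$ one way, associativity together with $\sP_{j,k}\sP=\sP\sP_{j,k}=\sP$ gives $(\sP_{j,k}\sP)\,\sL_1\,(\sP\sP_{j,k})=\sP\sL_1\sP$; evaluating it the other way, insert $\sP_{j,k}^2=\id$ on both sides of $\sL_1$ and use the hypothesis $\sP_{j,k}\sL_1\sP_{j,k}=-\sL_1$ with $\sP_{j,k}\sP\sP_{j,k}=\sP$, obtaining $(\sP_{j,k}\sP\sP_{j,k})(-\sL_1)(\sP_{j,k}\sP\sP_{j,k})=-\sP\sL_1\sP$. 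Hence $\sP\sL_1\sP=-\sP\sL_1\sP=0$, so $\lambda_1=0$. Equivalently and more concretely: since $\rho_0$ and $\sigma_0$ both commute with $P_{j,k}$, cyclicity of the trace and $P_{j,k}^2=\one$ give $\lambda_1=\Tr(\sigma_0^\dagger\sL_1[\rho_0])=\Tr\!\big(\sigma_0^\dagger\,\sP_{j,k}\sL_1\sP_{j,k}[\rho_0]\big)=-\Tr(\sigma_0^\dagger\sL_1[\rho_0])=-\lambda_1$, and the same computation kills every entry of $M_{ab}$ in the degenerate case.

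The main obstacle is not a computation but bookkeeping: one must be sure the first-order coefficient really is governed by $\sP\sL_1\sP$ restricted to the range of $\sP$ --- which hinges on $\ii\omega$ being semi-simple, supplied by the Lemma in Theorem~3's proof --- and one must quote the precise statements inside the proofs of Cor.~2 and Cor.~3 guaranteeing that all purely-imaginary eigenmodes of $\sL$ are $P_{j,k}$-invariant, together with the weak symmetry $[\sL,\sP_{j,k}]=0$, so that $\sP$ and $\sP_{j,k}$ interact in exactly the way the parity argument needs. With those inputs in hand the result drops out of the two-sided evaluation above; note that it in fact strengthens the conclusion of Theorem~3, where $\lambda_1$ was only known to be purely imaginary.
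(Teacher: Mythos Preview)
Your proposal is correct and follows essentially the same route as the paper: both reduce $\lambda_1$ to a Hilbert--Schmidt pairing of the form $\sinner{\sigma_0}{\sL_1\rho_0}$ (or, in the degenerate case, the matrix $M_{ab}$ thereof) and kill it by the parity mismatch---the unperturbed eigenmodes sit in the $+1$ sector of $\sP_{j,k}$ while $\sL_1$ flips parity. Your write-up is in fact more careful than the paper's: you make explicit the role of semi-simplicity (so that $\lambda_1$ is governed by $\sP\sL_1\sP$), you treat the degenerate case by showing the whole matrix vanishes rather than a single entry, and you point to the exact places in the proofs of Cor.~2/3 supplying the $P_{j,k}$-invariance of the $\omega$-RSM and the weak symmetry $[\sL,\sP_{j,k}]=0$.
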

\begin{proof}
We begin by noting that the projector to the $\lambda(s)$ group is differentiable to leading order \eqref{kato_projappendix}. We may write the eigenmode equation $\sL'(s) \rho(s)=\lambda(s) \rho(s)$ in the first order as,
\begin{align}
 \sL \sP_1+\sL_1 \sP_0=\lambda_0 \sP_1+\lambda_1 \sP_0, 
\end{align}
which we multiply by the corresponding left eigenvector of $\sL$ $\sbra{\sigma(0)}$ and obtain,
\begin{align}
\lambda_1=\sbra{\sigma(0)} \sL_1 \sP_0,
\end{align}
which is clearly zero as $\sP_0 \in \sB(\sB_1)$, $\sL_1 \in \sB(\sB_{-1})$ and $\sbra{\sigma(0)} \in \sB_1$, where the indices of $\sB_{p}$ denote the $p=\pm 1$ eigenspaces of the exchange superoperator $\sP_{j,k}$. 
\end{proof}

\section{Proof of Corollary 5} \label{proof:Cor5}
\begin{cor*}
    Suppose our Liouvillian $\sL'(s)$ has Hamiltonian $H(s)$ and jump operators $L_\mu(s)$. If the perturbation is such that 
    \begin{equation}
        \begin{aligned}
            H(s) &= H^{(0)} + s H^{(1)} + \sO(s^2)\\
            L_\mu (s) &= L_\mu^{(0)} +  \sO(s^2)
        \end{aligned}
    \end{equation}
    then we find that $p = 1$ in the result of Theorem 3.
\end{cor*}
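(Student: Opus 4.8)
The plan is to isolate the single structural feature distinguishing this perturbation --- that the first-order term $\sL^{(1)}=-\ii\,\mathrm{ad}_{H^{(1)}}$ is a Hamiltonian derivation, the jump operators being undisturbed to first order --- and feed it into the Kato reduction already used in the proof of Theorem~3. First I would dispose of the non-degenerate case: if $\ii\omega$ is a simple eigenvalue of $\sL^{(0)}$ then, as noted in the proof of Theorem~3, $\lambda(s)$ and its eigenstate are analytic, which is $p=1$. So assume $\ii\omega$ is degenerate with $\omega$-RSM $R_\omega$; by the Lemma in the proof of Theorem~3 it is semi-simple, so the spectral projector of $\sL'(s)$ onto the $\omega$-group is analytic, $\sP(s)=\sP+s\sP^{(1)}+\sO(s^2)$, and the $\omega$-group eigenvalues are exactly the eigenvalues of the analytic compressed family $\hat B(s)=\sP(s)\sL'(s)\sP(s)$, whose expansion on $R_\omega$ begins $\ii\omega\,\sP+s\,[\sL^{(1)}]_\sP+\sO(s^2)$.

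Next I would invoke the successive-reduction description of the branching exponent (Kato, Thm.~II.2.3, as already used): within the $\omega$-group the non-analyticity $\lambda(s)=\ii\omega+\ii\lambda_1 s+\lambda_2 s^{1+1/p}+o(s^{1+1/p})$ with $p\ge 2$ can only be produced when the \emph{first-order} restricted perturbation $[\sL^{(1)}]_\sP=\sP\,\sL^{(1)}\,\sP$ carries a nontrivial Jordan block of size $p$; if $[\sL^{(1)}]_\sP$ is diagonalizable, any further branching on a degenerate eigenspace of it is governed by the $\sO(s^2)$ terms and the correction after $\ii\lambda_1 s$ sits at order $s^2$, i.e.\ $1+1/p=2$ and $p=1$. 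Hence everything reduces to showing that $[\sL^{(1)}]_\sP$ is semi-simple.

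The crux is therefore: \emph{the compression of the Hamiltonian derivation $\sL^{(1)}=-\ii\,\mathrm{ad}_{H^{(1)}}$ to the peripheral eigenspace $R_\omega$ of the Lindbladian $\sL^{(0)}$ is diagonalizable.} This is where I would use the structure of the asymptotic subspace $\As(\HS)$ exploited in the proof of Theorem~2 (via \cite{Burgarth_2013,Albert_2016,albert_thesis}): $\sL^{(0)}$ restricted to $\As(\HS)$ is, up to the fixed conditional expectation $\mathcal E:\OS\to\As(\HS)$, a direct sum of genuine Hamiltonian derivations on noiseless subsystems, so it is skew-adjoint with respect to the natural inner product on $\As(\HS)$ that makes $e^{t\sL^{(0)}}|_{\As(\HS)}$ unitary, and $R_\omega$ is one of its eigenspaces --- so $\sP$ acts on $\As(\HS)$ as the \emph{orthogonal} projector onto $R_\omega$ in that inner product (and $\sP=\one$ in the unital/full-rank case of Theorem~2, which simplifies matters further). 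Since $\mathcal E$ intertwines the Hilbert--Schmidt product on $\OS$ with that natural product on $\As(\HS)$, the object $[\sL^{(1)}]_\sP$ is, on $R_\omega$, the compression of the skew-adjoint superoperator $-\ii\,\mathrm{ad}_{H^{(1)}}$ (pushed through $\mathcal E$ to an effective/partial Hamiltonian derivation) to one of its own orthogonal eigensubspaces; such a compression is again skew-adjoint, hence diagonalizable, hence has no Jordan block. Therefore $p=1$, and $\lambda(s)$ is twice (indeed infinitely often) continuously differentiable in $s$.

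The main obstacle I expect is precisely this last step: verifying carefully that the conditional expectation onto $\As(\HS)$ is self-dual between the Hilbert--Schmidt inner product on $\OS$ and the ``unitarizing'' inner product on $\As(\HS)$, and tracking the multi-block case where $R_\omega$ spans several noiseless blocks of $\As(\HS)$, so that $[\sL^{(1)}]_\sP$ genuinely remains a skew-adjoint (not merely compressed) operator. Once that compatibility is in place the rest is routine Kato perturbation bookkeeping; the companion statement for non-degenerate $\lambda(0)$ requires nothing beyond analytic perturbation theory.
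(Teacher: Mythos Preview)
Your approach is essentially the same as the paper's: both reduce to showing that $[\sL^{(1)}]_{\sP}=\sP\,\sL^{(1)}\,\sP$ is semi-simple on the $\omega$-RSM and then invoke Kato's successive reduction to force $p=1$. The paper's proof is in fact only three lines: it observes that $\sL^{(1)}=-\ii[H^{(1)},\,\cdot\,]$ generates unitary dynamics, asserts that consequently $\tilde{\sL}^{(1)}$ ``generates unitary dynamics on the $\omega$-RSM'' and hence has semi-simple eigenvalues, and applies Kato.

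What is notable is that you have identified, and set out to prove, exactly the step the paper treats as obvious. The passage from ``$-\ii\,\mathrm{ad}_{H^{(1)}}$ is skew-adjoint in Hilbert--Schmidt'' to ``its compression by the spectral projector $\sP$ is diagonalizable'' is \emph{not} automatic, because $\sP$ is built from biorthogonal left/right Liouvillian eigenvectors and need not be Hilbert--Schmidt orthogonal. Your proposed fix --- pass to the natural inner product on $\As(\HS)$ in which the peripheral dynamics is unitary, so that $\sP$ restricted there \emph{is} orthogonal and the compression stays skew-adjoint --- is the right idea and is more careful than the paper. Your worry about tracking the conditional expectation across multiple noiseless blocks is legitimate bookkeeping but does not hide any genuine obstruction: the block decomposition of $\As(\HS)$ from \cite{Albert_2016,albert_thesis} makes the unitarizing inner product block-diagonal, so the compression of $\mathrm{ad}_{H^{(1)}}$ decomposes accordingly.
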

\begin{proof}
    We follow the reduction method of \cite{Kato_1995} and \cite{Macieszczak_2016}. From \cite{Kato_1995} we recall that the eigenvalues perturbed away from $\omega$ (called the $\omega$-group) are not in general analytic. They are instead branches of analytic functions and the corresponding eigenstates may contain poles. However the projection onto the span of the $\omega$-group eigenstates is analytic and thus the restriction of $\sL(s)$ to this subspace is also analytic. Let us write this projection operator as $\sP(s)$ and thus the restricted Liouvillian is given by $[\sL(s)]_{\sP(s)} = \sP(s) \sL(s) \sP(s)$. We can use the result from Section II.2 of \cite{Kato_1995} to write 
            \begin{equation}
                \sP(s) = \sP + s \sP_1 + \sO(s^2), \label{kato_projappendix}
            \end{equation}
            where
            \begin{equation}
                \sP_1 = -\sS \sL_1 \sP - \sP \sL_1 \sS.
            \end{equation}
            Here $\sP$ is the zero order projector onto the $\omega$-RSM and $\sS$ is the reduced resolvent of $\sL(s)$ at $\ii \omega$ which obeys $\sS \sP = \sP \sS = 0$ and $\sS (\sL - \lambda \id) = (\sL - \lambda \id) \sS = \id - \sP$. We also write (as on page 78 of \cite{Kato_1995})
            \begin{equation}
                    \left[\sL(s) \right]_{\sP(s)} = \ii \omega \sP + s [\sL_1]_{\sP} + \sO(s^2 \|{\sL}_2\|).
            \end{equation}
            Since $\sL_1 = -\ii[H^{(1)}, \ \circ \ ]$ we can see that $\sL_1$ generates unitary dynamics, and thus its projection to the $\omega$-RSM, $[\sL_1]_{\sP}$ also generates unitary dynamics and thus has purely imaginary eigenvalues which are semi-simple. By the reduction arguments in Section II.3 \cite{Kato_1995}, we can deduce that the eigenvalue $\lambda(s)$ is twice differentiable, 
            \begin{equation}
                \lambda(s) = \ii \omega + \ii \lambda_1 s + \lambda_2 s^2 + o\left(s^2\right),
            \end{equation}
            corresponding to $p=1$ in Thm 3.
\end{proof}

\section{Proof of Theorem 4} \label{proof:Thm4}
\begin{thm*}
    If there is a full rank stationary state $\rho_\infty$ (i.e. no zero eigenvalues, invertible) and the commutant $\{H,L_\mu,L_\mu^\dagger\}'=c \one$, then there are no purely imaginary eigenvalues of $\sL$ and hence no stable synchronization. If this holds in the leading order, there are no almost purely imaginary eigenvalues and no metastable synchronization. 
\end{thm*}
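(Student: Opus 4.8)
The plan is to argue by contradiction, leaning entirely on the complete characterisation of purely imaginary eigenvalues from Thm.~2. Suppose $\sL$ had an eigenstate with a nonzero purely imaginary eigenvalue. Because $\rho_\infty$ is full rank, the hypotheses of Thm.~2 are met, so this eigenstate is of the form $A^n\rho_\infty(A^\dagger)^m$ for a strong dynamical symmetry $A$ obeying $[H,A]=\omega A$ and $[L_\mu,A]=[L_\mu^\dagger,A]=0$ for all $\mu$, with eigenvalue $-\ii\omega(n-m)$. For this eigenvalue to be nonzero we need $\omega\neq 0$ and $n\neq m$; in particular $A\neq 0$, and taking the trace of $[H,A]=\omega A$ forces $\Tr A=0$, consistent with the remark after Thm.~2 that such an $A$ cannot be unitary.

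The crucial step is to build a nontrivial element of the commutant $\{H,L_\mu,L_\mu^\dagger\}'$ out of $A$, contradicting the hypothesis that this commutant equals $c\one$. Taking adjoints of $[L_\mu,A]=[L_\mu^\dagger,A]=0$ gives $[L_\mu,A^\dagger]=[L_\mu^\dagger,A^\dagger]=0$, so $A^\dagger A$ commutes with every $L_\mu$ and every $L_\mu^\dagger$. Since $\omega\in\R$, adjoining $[H,A]=\omega A$ gives $[H,A^\dagger]=-\omega A^\dagger$, hence
\begin{equation}
    [H,A^\dagger A]=[H,A^\dagger]A+A^\dagger[H,A]=-\omega A^\dagger A+\omega A^\dagger A=0 .
\end{equation}
Thus $A^\dagger A\in\{H,L_\mu,L_\mu^\dagger\}'=c\one$, so $A^\dagger A=c'\one$ with $c'\geq 0$; as $A\neq 0$ we have $c'>0$ and $A$ is invertible.

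I would then finish the stable case with a spectral obstruction. Invertibility turns $[H,A]=\omega A$ into $A^{-1}HA=H+\omega\one$, so $H$ and $H+\omega\one$ are similar and have the same spectrum; but $\mathrm{spec}(H+\omega\one)=\mathrm{spec}(H)+\omega$, which is impossible for a finite-dimensional $\HS$ when $\omega\neq 0$. This contradiction rules out any nonzero purely imaginary eigenvalue of $\sL$, and since persistent oscillations, and hence stable synchronization, require such an eigenvalue, stable synchronization is excluded.

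For the metastable claim, suppose the hypotheses hold for the leading-order generator $\sL=\sL'(0)$. A trivial commutant forces the steady state to be unique (by the result of Evans and Frigerio cited at the end of Sec.~\ref{sec:absence}), so $0$ is a non-degenerate eigenvalue of $\sL$ and, by the argument above, the only point of $\mathrm{spec}(\sL)$ on the imaginary axis; every other eigenvalue has strictly negative real part. Under the analytic perturbation of Sec.~\ref{sec:almostpurelyimaginary}, an almost purely imaginary eigenvalue can only branch off an eigenvalue that already lies on the imaginary axis: the non-degenerate zero branch stays pinned at $0$, since every CPTP generator retains a steady state, and so produces no oscillation, foreclosing the ultra-low-frequency and quantum Zeno mechanisms of Secs.~\ref{sec:ultalow} and~\ref{sec:zeno}; meanwhile the remaining eigenvalues stay in the open left half-plane for small $s$ by continuity. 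Hence no metastable synchronization either. I expect the main obstacle to lie precisely here: one must be certain that the degeneracy-lifting perturbation analysis cannot manufacture oscillating coherences out of the stationary sector, and it is the non-degeneracy of the zero eigenvalue, itself a consequence of the trivial commutant, that rules this out.
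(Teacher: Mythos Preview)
Your proof is correct and follows the same route as the paper: both invoke Thm.~2 to obtain a strong dynamical symmetry $A$, observe that $A^\dagger A$ lies in the commutant $\{H,L_\mu,L_\mu^\dagger\}'$, and derive a contradiction with the triviality hypothesis. Your spectral obstruction ($A^{-1}HA=H+\omega\one$ forcing $\mathrm{spec}(H)=\mathrm{spec}(H)+\omega$) is a more explicit justification of the paper's terse assertion that $A$ cannot be (a multiple of) a unitary when $\omega\neq 0$, and your metastable argument, using Evans--Frigerio to secure non-degeneracy of the zero eigenvalue, spells out what the paper leaves implicit.
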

\begin{proof}
    Suppose, that there exists a state $\rho$ with purely imaginary eigenvalue $\ii \omega$. Since $\tilde{\rho}_\infty$ has full rank, by Thm. \ref{thm:Aoperator} we can write $\rho = A \rho_\infty$ where $[H, A] = \omega A,\ [L,A] = [L^\dag, A] = 0$ and $\sL[\rho_\infty] = 0$. We then see that $A^\dag A$ and $AA^\dag$ both belong to the commutant $\{H,L_\mu,L_\mu^\dagger\}'$, and so 
            \begin{equation}
                A^\dag A = c_1 \1, \ \ A A^\dag = c_2 \1.
            \end{equation}
            Taking the trace trivially gives $c_1 = c_2 = c$. Notice that $c \ne 0$, since $c = \tfrac1d \Tr (A^\dag A) = \tfrac1d \| A\|^2$, and so $c=0$ would correspond to $A=0$. Now we compute,
            \begin{subequations}
                \begin{align}
                    \Tr\left(A^\dag [H, A] \right) &= \Tr \omega A^\dag A \\
                    \Tr\left(A^\dag H A - A^\dag A H \right) &= \omega \Tr A^\dag A \\
                    0 &=\omega \Tr(A^\dag A)
                \end{align}
            \end{subequations}
            Hence $\omega = 0$. Consequently, no non-zero purely imaginary eignenvalue can exist.
\end{proof}

\section{Conditions For Meta-stable Synchronization} \label{ref:JNF_discuss} 

In Sec. \ref{sec:lindblad} it was stated that in the long time limit the dynamics of an observable $\hat{O}$ are described by 
\begin{equation} \label{formalsolution2}
    \ave{\hat{O}}(t) = \sum_k e^{t \lambda_k} \sinner{\hat{O}}{\rho_k} \sinner{\sigma_k}{\rho_0}.
\end{equation}
This is because the purely imaginary eigenvalues of $\sL$ are semi-simple and thus have trivial Jordan normal form. Since we consider a system of finite size, there is also some $\mu>0$ such that all eigenvalues $\lambda$ with non-zero real part have $\text{Re}(\lambda)<-\mu$. This Liouvillian gap determines the time period during which transient dynamics occur, and after which the system is synchronized, i.e. the system will be synchronized for $t>>1/ \mu$ provided the relevant criteria of Cor 2 or 3 apply.

When we introduce perturbations, writing
\begin{equation}
     \sL'(s) =\sL + s \sL_1 + s^2 \sL_2 + \sO(s^2),
\end{equation}
the eigenvalues vary continuously with the perturbative parameter $s$, and as a result so too will the gap $\mu$. Provided the pertubration is small enough that the gap does not close, i.e $\mu \nrightarrow 0$, in the long time limit we need only consider perturbations to the eigenvalues of $\sL$ which lie on the imaginary axis. Since these are semi-simple we may diagonalise $\sL$ over this subspace. Now using the Baker-Campbell-Hausdorff (BCH) formula we can write 
\begin{equation}
    e^{t \sL'(s)} = e^{t \sL} e^{t s \sL_1 + o(st)}
\end{equation}
where $e^{t \sL}$ is diagonalisable over the subspace of purely imaginary eigenvalues.
Thus for $t< \sO(1/s)$ the dynamics are determined by $\sL$ and we can apply the results of Cor. 2 \& 3 to $\sL$ to determine whether meta-stable synchronization occurs.

If the perturbation is of the form
\begin{equation}
        \begin{aligned}
            H(s) &= H^{(0)} + s H^{(1)} + \sO(s^2)\\
            L_\mu (s) &= L_\mu^{(0)} +  \sO(s^2),
        \end{aligned}
\end{equation}
then by Cor. 5 the eigenvalues of $\sL + s \sL_1$ are semi-simple. In this case we can proceed as above and use the BCH formula to write 
\begin{equation}
    e^{t \sL'(s)} = e^{t (\sL + s \sL_1) } e^{t s^2 \sL_2 + o(s^2t)}, 
\end{equation}
where now $e^{t (\sL+ s \sL_1)}$ is diagonalisable. Consequently for $t< \sO(1/s^2)$ the dynamics are determined by $\sL + s \sL_1$ and we can apply the results of Cor. 2 \& 3 to $\sL + s \sL_1$ to determine whether meta-stable synchronization occurs.

\section{Application of our Theory to Non-Markovian Systems \label{sec:non_markovian}}

This appendix indicates how our theory can be applied to systems where the environment is not Markovian. Let the full system-environment Hamiltonian be $H_{\text{tot}}$, and suppose we can partition the full system-environment into (i) The system we are interested in, (ii) a part of the environment that can be modelled as Markovian and (iii) a part of the environment which is non-Markovian.  We can write this as
\begin{equation}
    H_{\text{tot}} = H_{\tilde{S}} + H_{E} + H_{\tilde{S}, E},
\end{equation}
where $\tilde{S}$ contains both the system \emph{and} the non-Markovian environment and $E$ contains exclusively the Markovian portion of the environment. Labeling the reduced density operator for the system \emph{and} the non-Markovian environment as $\tilde{\rho}$ we can trace out the Markoivan environemnt to obtain the Lindblad equation for $\tilde{\rho}$,
\begin{equation}
    \dd{}{t} \tilde{\rho} = - \ii [H_{\tilde{S}}, \tilde{\rho}] + \sum_\mu 2 L_\mu \tilde{\rho} L_\mu^\dag - \{ L_\mu^\dag L_\mu, \tilde{\rho}\}.
\end{equation}
We can now apply the results of the main text to determine that the long-time dynamics of $\tilde{\rho}$ are given by 
\begin{equation}
    \tilde{\rho}(t) \ra \sum_k e^{\ii \omega_k t} c_k \tilde{A}_k \tilde{\rho}_{k, \infty}
\end{equation}
where $c_k$ are constants determined by the initial conditions and $\omega_k, \ \tilde{A}_k, \ \& \ \tilde{\rho}_{k, \infty}$ are as in Thm. \ref{thm:Aoperator}. In this form we can study synchronization by directly applying our results to the operators $\tilde{A}_k, \ \& \ \tilde{\rho}_{k, \infty}$. This is illustrated in the example in Sec \ref{sec:antiexample}.

We can also trace out the non-Markovian environment to obtain the reduced density matrix of the system as 
\begin{equation}
    \rho(t) \ra \sum_k e^{\ii \omega_k t} c_k \Tr_{\tilde{E}} \left(\tilde{A}_k \tilde{\rho}_{k, \infty} \right).
\end{equation}
Unfortunately, it is not possible to draw general conclusions directly about the dynamics of the system without knowing the details of the non-Markovian bath and applying our results to the combined system. If the fine details of the non-Markovian bath were inaccessible or too difficult to study our theory would likely not be generally applicable but this would require a case by case analysis. Other more systematic approaches to non-Markovian generalisations of the Lindblad equation can be found in \cite{nonMarkov1, nonMarkov2}.

\section{Analysis of Additional Examples of Quantum Synchronization}
To further demonstrate our theory, we now provide additional analysis of previously studied examples of quantum synchronization. We will focus on the model of two weakly coupled, driven-dissipative spin-1 systems as previously studied in a synchronization setting by \cite{Roulet_2018}. In the absence of external interactions, two coupled spins, labelled $A$ and $B$, evolve according to the Hamiltonian
\begin{equation}
    H = \omega_A S_A^z + \omega_B S^z_B + \frac{i \epsilon}{2} \left(S_A^+S_B^- - S_B^+S_A^-\right)
\end{equation}
where for convenience we define the detuning $\Delta = \omega_A - \omega_B$. We then consider the independent interactions of each spin with some external bath which in the absense of spin-spin interactions drives the spins towards their own non-equilibrium steady states. These system-bath interactions are modeled by the Lindblad operators
\begin{equation}
    L_{u, j} = \gamma_j^u S_j^+S_j^z ,\ L_{d, j} = \gamma_j^d S_j^-S_j^z, \ \ j = A, B
\end{equation}

Using the theory we have developed, we will analyze three examples which were shown by \cite{Roulet_2018} to exhibit quantum synchronization. The first example considers driving the two spins in opposite directions without any detuning, the second example introduces detuning but also takes the \emph{quantum Zeno} limit of large driving, and the third considers driving two detuned spins in opposite directions.

\subsection{Inverted Limit Cycle: Metastable ultra-low frequency \\ anti-synchronization}
We first analyse the so called inverted limit cycle. We take $\Delta = 0$ and invert the driving on the two spins so that 
\begin{equation}
    \gamma_A^u = \gamma_B^d = \gamma, \ \gamma_A^d = \gamma_B^u = \mu,
\end{equation}
and consider the limiting case $\mu \ra 0$. As in \cite{Roulet_2018} the system is initialised in the state $\rho(0) = \rho_A^{(0)} \otimes \rho_B^{(0)}$ where $\rho_j^{(0)}$ is the NESS of spin $j$ in the absence of spin-spin couplings. Thus we can consider this as a quench of the system where the weak spin-spin interaction are instantaneously turned on. We find that in the absence of spin-spin interactions, i.e $\epsilon = 0$, and with $\mu=0$ the independent spins have degenerate stationary states 
\begin{equation} \label{eq:siteNESS}
\begin{aligned}
    \rho_A^{(0)} &= p_A \ket{0}_A\bra{0}_A + (1-p_A) \ket{1}_A \bra{1}_A, \\
    \rho_B^{(0)} &= p_B \ket{0}_B\bra{0}_B + (1-p_B) \ket{-1}_B \bra{-1}_B,
\end{aligned}
\end{equation}
for $p_j \in [0,1]$. This degeneracy is lifted as soon as $\mu$ becomes strictly positive and we find $p_A = p_b = 1$ in Eq. (\ref{eq:siteNESS}). Thus, to avoid this degeneracy and simplify discussions, in the following we consider $\mu$ infinitesimally small but still strictly positive. Consqeuently we consider both $\mu$ and $\epsilon$ as perturbative parameters while $\omega,\ \gamma$ remain $\sO(1)$.

To understand the evolution of the system it is sufficient to consider those eigenstates of $\sL$ which are excited by $\rho(0)$, and in particular which of these have eigenvalues with small real part compared to $\omega$ and $\gamma$. We find that only $4$ eigenmodes with small real parts are excited, with corresponding eigenvalues 
\begin{equation}
\begin{gathered}
    \lambda = 0,\ -2 \mu + \sO(\mu^2, \epsilon^2, \epsilon\mu), \\ -\mu \pm 2\ii \epsilon + \sO(\mu^2, \epsilon^2, \epsilon\mu) \label{frequency1}
\end{gathered}
\end{equation}
Of these, the relevant eigenvalues for synchronized oscillations are $\lambda_\pm =  -\mu \pm 2\ii \epsilon + \sO(\mu^2, \epsilon^2, \epsilon\mu)$ since they are the only one with a non-zero imaginary part at leading order. We conclude that is that this is an example of ultra-low frequency synchronization since both the real and imaginary parts of $\lambda_\pm$ are $\sO(\epsilon, \mu)$. The consequences of this are shown in Figure \ref{fig:inverteddrive}a where we consider the $S_j^Z$ observables. We see that the observable appears almost stationary on short time scales, $t \sim \sO(1)$. When, in Figure \ref{fig:inverteddrive}b, we consider significantly longer timescales, however, we see the behaviour which we consider metastable synchronization. We further find that the decay rate is inversely proportional to $\epsilon^2$ at the next lowest order, as indicated in Sec. \ref{sec:ultalow}. Consequently, there is only one power of $\epsilon$ between the decay rate and the frequency of the signal unless $\epsilon$ is sufficiently small that $\mu >\epsilon^2$ in which case the decay rate is propositional to $\mu$. This example demonstrates why we generally discount ultra-low frequency synchronization since it is unfeasible to observe experimentally.

\begin{figure}[!t]
    \centering
    \includegraphics[width = 0.8\columnwidth]{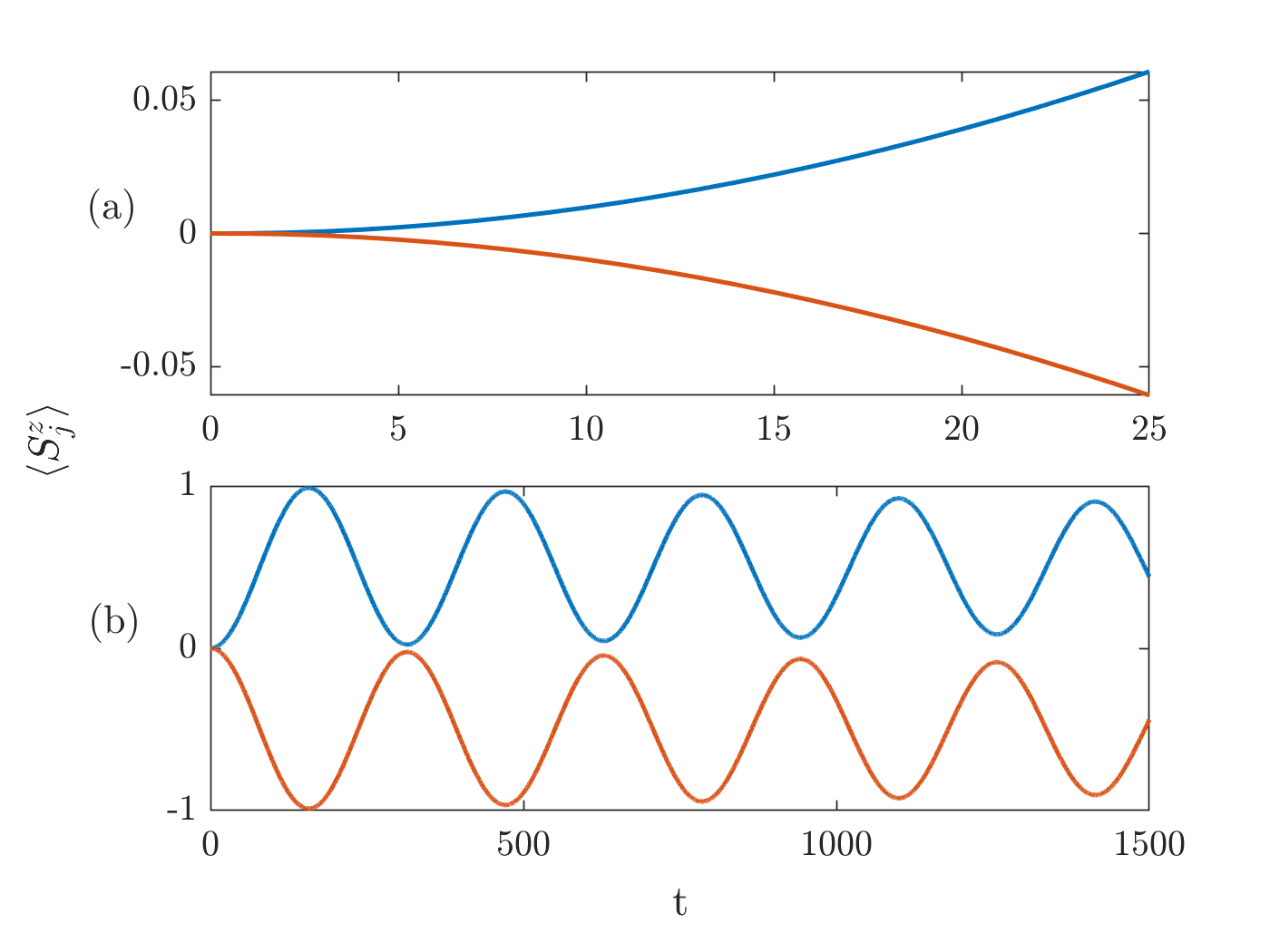}
    \caption{Evolution of $S^z$ observable on sites $A$ (blue) and $B$ (red) for the Spin-1,  inverted limit cycle model with $\omega = \gamma = 1,\ \mu = 0.0001,\ \epsilon = 0.01$. In (a) we see over short time periods the observables gradually move away from $0$ while in (b) we see that over much longer time scales decaying oscillations can be measured. As per \eqref{frequency1} the frequency of these oscillations is $2\epsilon$}
    \label{fig:inverteddrive}
\end{figure}

\subsection{Inverted Limit Cycle: Anti-synchronization in the Quantum Zeno Limit}

We again consider the same system as above, but now detuned, $\omega_A \ne \omega_B$, and with strong dissipation, $\gamma_A^u = \gamma_B^d = \gamma >> \omega_j, \epsilon$. We find that the dissipation operator  
\begin{equation}
    \begin{aligned}
    \sD[\rho] = &2S_A^+ S_A^z \rho S_A^z S_A^- + 2S_B^- S_B^z \rho S_B^z S_B^+ \\
    &- \{S_A^z S_A^- S^+_A S_A^z, \rho\} - \{S_B^z S_B^+ S^-_B S_B^z, \rho\}
    \end{aligned}
\end{equation}
has a stationary subspace with 16-fold degeneracy. Thus, when we lift the degeneracy of this subspace by introducing comparatively weak unitary dynamics, we introduce several eigenvalues with $\sO(1)$ imaginary parts and $\sO(1/\gamma)$ real parts. This leads to metastable dynamics on timescales relevant to the Hamiltonian. In Fig. \ref{fig:zenoexample} we see that for the initial state $\rho(0) = \ket{0}_A\bra{0}_A \otimes \ket{0}_B\bra{0}_B$ there are clean, synchronized oscillations in the $S^z$ observable. When initialised in this particular state, this metastable anti-synchronization occurs for any $\omega, \epsilon << \gamma$. 

Unlike the previous example, we find that this meta-stable anti-synchronisation is robust to initialising the system in arbitrary states. This can be seen by noting that the eigenstates of the dissipation operator, $\sD$, are coherences between the states 
\begin{equation}
    \ket{0,0}, \ \ \ket{1, -1}, \ \ \ket{0, -1}, \ \ \ket{1, 0}
\end{equation}
Under strong dissipation, the system rapidly decays onto the space spanned by these eigenstates before the slower dynamics take over. Since this space spanned by coherences of the above eigenstates is invariant under the transformation $S_A^z \longleftrightarrow -S_z^B$, all dynamics within this space will satisfy $\ave{S_A^z(t)} = -\ave{S_B^z(t)}$ regardless of the initial state. However, we find that the system is not completely anti-synchronized since if we measure instead the observable $S_j^x$ we find that the measurements on the two sites are now uncorrelated.

\begin{figure}[t]
    \centering
    \includegraphics[width = 0.8 \columnwidth]{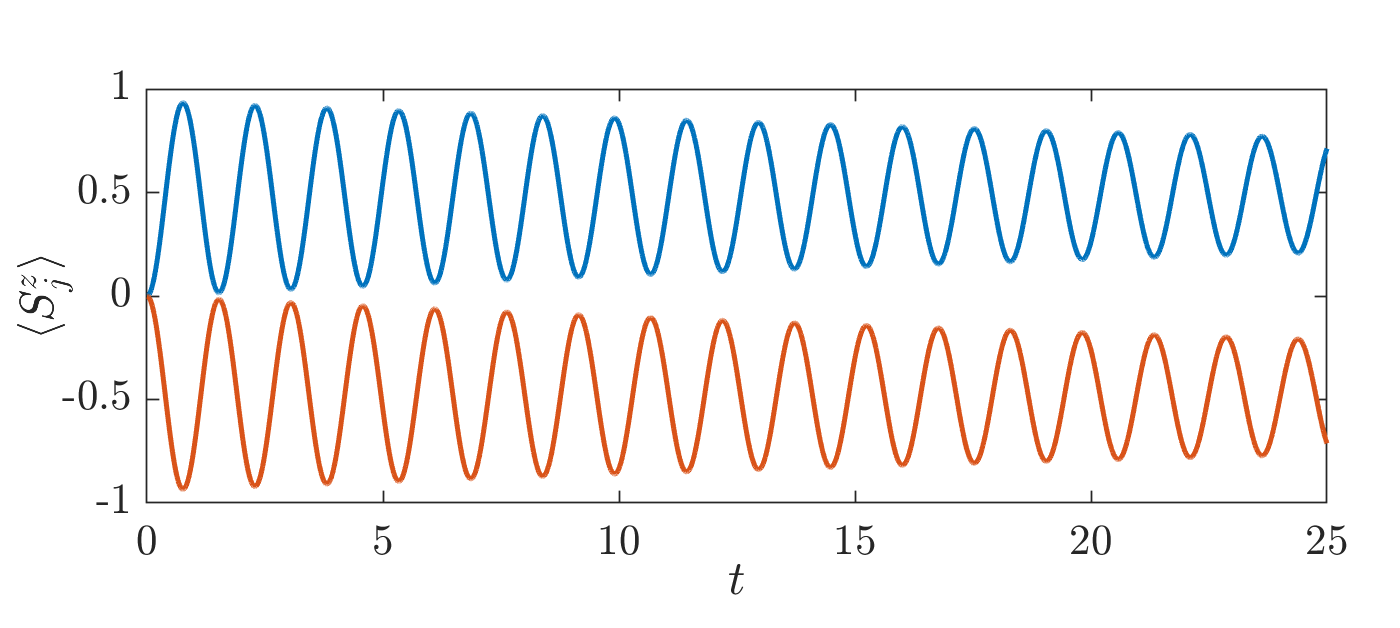}
    \caption{Metastable anti-synchonisation of the inverted limit cycle model in the Quantum Zeno limit for $\gamma = 100$, $\mu = 0$, $\omega_A= 0.5$, $\omega_B = 1.5$ and $\epsilon = 2$ when the system is initialised in $\rho(0) = \ket{0}_A\bra{0}_A \otimes \ket{0}_B\bra{0}_B$. This is a consequence of the unitary dynamics lifting the degeneracy in the stationary subspace of the dissipation. Comparison with Fig. \ref{fig:inverteddrive} shows that in the Zeno limit the oscillations are now on timescales relevant to the Hamiltonian.}
    \label{fig:zenoexample}
\end{figure}

\subsection{Pure Gain or Loss: Stable Limit Cycles, but no Robust, Stable Synchronization}

We now set $\gamma^d_j=0$, $\gamma ^u_A \ne \gamma^u_B \ne 0$. In that case we find three proper (density matrix) stationary states,
\begin{align}
\rho_{1,\infty}&=\ket{-1,0}\bra{-1,0} \nonumber \\
&+\frac{\ii \epsilon}{\omega_A-\omega_B}(\ket{-1,0}\bra{0,-1}-\ket{0,-1}\bra{-1,0}),\\
\rho_{2,\infty}&=\ket{-1,-1}\bra{-1,-1},\\
\rho_{3,\infty}&=\frac{1}{2}\left(\ket{0,-1}\bra{0,-1}+\ket{-1,0}\bra{-1,0}\right).
\end{align}

Solving for the conditions of Th. 1, we may find the $A$,
\begin{align}
A_{1}&= a_1\ket{-1,-1}\bra{-1,0} \nonumber\\ & \phantom{blah}+\ii(a_2-a_3)\ket{-1,-1}\bra{0,-1},\\
A_{2}&=A_{1}^T, \\
A_{3}&=-\ii(a_2+a_3)\ket{-1,0}\bra{-1,0}\nonumber\\ & \phantom{blah} + a_1\ket{-1,0}\bra{0,-1}\nonumber\\ 
&\phantom{blah} -a_1\frac{a_2+a_3}{a_3-a_2}\ket{0,-1}\bra{-1,0}\nonumber\\ & \phantom{blah} +\ii(a_2-a_3)\ket{0,-1}\bra{0,-1},
\end{align}
where $a_1=2 \epsilon$ and $a_2=(\omega_A-\omega_B)$, $a_3= \sqrt{4 \epsilon^2+(\omega_A-\omega_B)^2}$. The corresponding frequencies are, $\omega_1=\frac{1}{2}(\omega_A+\omega_B-a_3)$, $\omega_2=\frac{1}{2}(\omega_A+\omega_B+a_3)$, $\omega_3=a_3$. There is no permutation or generalized symmetry between sites $A$ and $B$ and so although we do have persistent oscillations and a limit cycle the sites are not robustly synchronized as seen in Fig. \ref{fig:driven_detuned_spin1}.

\begin{figure}[!htb]
    \centering
    \includegraphics[width =0.8 \columnwidth]{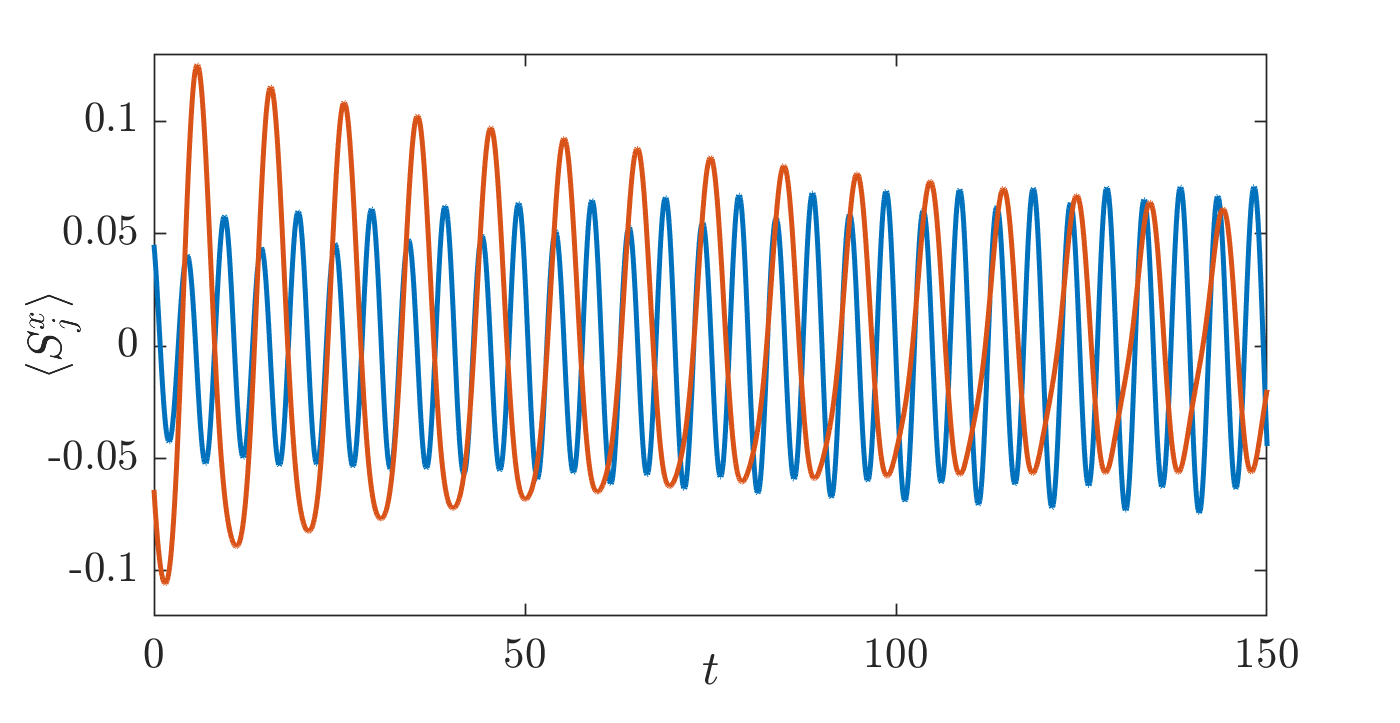}
    \caption{Evolution of $S^x$ observable on sites $A$ (blue) and $B$ (red) for the Spin-1, pure gain model. We have non-zero detuning and interaction, $\Delta, \epsilon \ne 0$ and asymmetric driving, $\gamma_A^u \ne \gamma_B^u$. Initialising the system in a random state, we see that while persistent oscillations occur the two sites do not synchronize. This can be understood through the absence of generalised symmetry between sites $A$ and $B$.}
    \label{fig:driven_detuned_spin1}
\end{figure}
\FloatBarrier

\section{A General Framework  for Constructing Models which Exhibit Quantum Synchronization} \label{sec:generalmodelframework}
In Sec. \ref{sec:examples_FermiHubbard} we studied models of many-body systems which exhibit quantum synchronization. In particular, the Fermi Hubbard model with spin-agnostic heating hinted at a general framework for constructing more elaborate models which exhibit this behaviour. To understand this, we make the following observations:
\begin{enumerate}[(i)]
    \item In the absence of a magnetic field and potential, $B_j = \epsilon_j = 0$, the Hubbard model has a symmetry group $G = SU(2) \times SU(2) / \Z^2$, coming from the independent spin and $\eta$ symmetries \cite{essler_1DHubbard}.
    \item Further, the representation of these symmetries are permutation invariant, that is 
    $$[S^\alpha, P_{j,k}] = [\eta^\alpha, P_{j,k}] = 0,\ \ \alpha = x, y, z $$
    \item Introduction of the magnetic field, $B S^z$, which corresponds to the unique element of the Cartan subalgebra of the spin-$SU(2)$ symmetry, breaks the spin symmetry. Consequently, the remaining elements, $S^+$, $S^-$ of the spin-$\mathfrak{su}(2)$ algebra become dynamical symmetries 
    \item Choosing a unital set of Lindblad operators, $L_\mu$, from the complementary $\eta$ symmetry grantees that $[L_\mu, S^{\pm}] = 0$ so that the spin operators are strong dynamical symmetries as required for Thm. 2. Also, since the choice of Lindblad operators do not all commute with the $\eta^\alpha$ operators, there can be no further strong dynamical symmetries. Finally, since these strong dynamical symmetries have complete permutation invariance, the synchronization is robust, as per Cor. 3.
\end{enumerate}
These principles can be applied more widely to models with more elaborate symmetry structures in order to guarantee quantum synchronization, such as the generalized $SU(N)$ models in Sec \ref{sec:generalised_SUN_examples}

\bibliography{main}

\end{document}